\newtheorem{lem}{Lemma}[section]
\newtheorem{thm}[lem]{Theorem}
\newtheorem{obs}[lem]{\textbf{Observation}}
\title{Efficient Approximation Algorithms for Scheduling Coflows with Precedence Constraints in Identical Parallel Networks to Minimize Weighted Completion Time}
\author{Chi-Yeh~Chen % <-this % stops a space
\\ Department of Computer Science and Information
Engineering, \\ National Cheng Kung University, \\
Taiwan, ROC. \\
chency@csie.ncku.edu.tw.}
\begin{document}

\maketitle
\begin{abstract}
This paper focuses on the problem of coflow scheduling with precedence constraints in identical parallel networks, which is a well-known $\mathcal{NP}$-hard problem. Coflow is a relatively new network abstraction used to characterize communication patterns in data centers. Both flow-level scheduling and coflow-level scheduling problems are examined, with the key distinction being the scheduling granularity. The proposed algorithm effectively determines the scheduling order of coflows by employing the primal-dual method. When considering workload sizes and weights that are dependent on the network topology in the input instances, our proposed algorithm for the flow-level scheduling problem achieves an approximation ratio of $O(\chi)$ where $\chi$ is the coflow number of the longest path in the directed acyclic graph (DAG). Additionally, when taking into account workload sizes that are topology-dependent, the algorithm achieves an approximation ratio of $O(R\chi)$, where $R$ represents the ratio of maximum weight to minimum weight. For the coflow-level scheduling problem, the proposed algorithm achieves an approximation ratio of $O(m\chi)$, where $m$ is the number of network cores, when considering workload sizes and weights that are topology-dependent. Moreover, when considering workload sizes that are topology-dependent, the algorithm achieves an approximation ratio of $O(Rm\chi)$. In the coflows of multi-stage job scheduling problem, the proposed algorithm achieves an approximation ratio of $O(\chi)$. Although our theoretical results are based on a limited set of input instances, experimental findings show that the results for general input instances outperform the theoretical results, thereby demonstrating the effectiveness and practicality of the proposed algorithm.

\begin{keywords}
Scheduling algorithms, approximation algorithms, coflow, precedence constraints, datacenter network, identical parallel network.
\end{keywords}
\end{abstract}

\section{Introduction}\label{sec:introduction}
With the evolution of technology, a large volume of computational demands has become the norm. As personal computing resources are no longer sufficient, cloud computing has emerged as a solution for accessing significant computational resources. With the increasing demand, large-scale data centers have become essential components of cloud computing. In these data centers, the benefits of application-aware network scheduling have been proven, particularly for distributed applications with structured traffic patterns~\cite{Chowdhury2014, Chowdhury2015, Zhang2016, Agarwal2018}. The widespread use of data-parallel computing applications such as MapReduce~\cite{Dean2008}, Hadoop~\cite{Shvachko2010, borthakur2007hadoop}, Dryad~\cite{isard2007dryad}, and Spark~\cite{zaharia2010spark} has led to a proliferation of related applications~\cite{dogar2014decentralized, chowdhury2011managing}.

In these data-parallel applications, tasks can be divided into multiple computational stages and communication stages, which are executed alternately. The computational stages generate a substantial amount of intermediate data (flows) that needs to be transmitted across various machines for further processing during the communication stages. Due to the large number of applications generating significant data transmission requirements, robust data transmission and scheduling capabilities are crucial for data centers. The overall communication pattern within the data center can be abstracted by coflow traffic, representing the interaction of flows between two sets of machines~\cite{Chowdhury2012}.

A coflow refers to a set of interconnected flows, where the completion time of the entire group depends on the completion time of the last flow within the set~\cite{shafiee2018improved}. Previous studies related to coflows~\cite{Chowdhury2014, Chowdhury2015, Zhang2016, huang2016, Agarwal2018, Qiu2015, ahmadi2020scheduling, khuller2016brief, shafiee2018improved, shafiee2021scheduling} have primarily focused on the single-core model~\cite{Huang2020}. However, technological advancements have led to the emergence of data centers that operate on multiple parallel networks in order to improve efficiency~\cite{Singh2015, Huang2020}. One such architecture is the identical or heterogeneous parallel network, where multiple network cores function in parallel, providing combined bandwidth by simultaneously serving traffic.

\begin{table*}[!ht]
\caption{Theoretical Results}
%\vspace{3mm}
\centering
\begin{tabular}{|c|c|c|c|c|}
\hline
       Model                      & TDWS  & TDW        & Approximation Ratio &  \\ \hhline{|=|=|=|=|=|}
 flow-level scheduling problem    &   $\surd$          &    $\surd$                        &  $O(\chi)$       &  Thms~\ref{thm:thm1} \& \ref{thm:thm2}\\ \hline
 flow-level scheduling problem    &   $\surd$          &                                   &  $O(R\chi)$    &  Thms~\ref{thm:thm1-1} \& \ref{thm:thm2-1}\\ \hline
 coflow-level scheduling problem  &   $\surd$          &    $\surd$                        &  $O(m\chi)$    &  Thms~\ref{thm:thm21} \& \ref{thm:thm22}\\ \hline
 coflow-level scheduling problem  &   $\surd$          &                                   &  $O(Rm\chi)$   &  Thms~\ref{thm:thm21-1} \& \ref{thm:thm22-1}\\ \hline
 multi-stage job scheduling problem  &                 &                                   &  $O(\chi)$   &  Thm~\ref{thm4:thm11}\\ \hline
\end{tabular}
\label{tab:results}
\end{table*}

This study addresses the problem of coflow scheduling with precedence constraints in identical parallel networks. The objective is to schedule these coflows in the parallel networks in a way that minimizes the weighted total completion time of coflows. We consider both flow-level scheduling and coflow-level scheduling. In the flow-level scheduling problem, flows within a coflow can be distributed across different network cores. Conversely, in the coflow-level scheduling problem, all flows within a coflow are required to be transmitted in the same network core. The key difference between these two problems lies in their scheduling granularity. The coflow-level scheduling problem, being a coarse-grained scheduling, can be quickly solved but yields relatively poorer results. On the other hand, the flow-level scheduling problem, being a fine-grained scheduling, takes more time to solve but produces superior scheduling results. It is worth noting that, although these two problems exhibit differences in time complexity when solved using linear programming, in the case of the flow-level scheduling problem using the primal-dual method, the decision of scheduling flows is transformed into the decision of scheduling coflows. This transformation leads to the solving time being equivalent to that of the coflow-level scheduling problem.

\subsection{Related Work}
The concept of coflow abstraction was initially introduced by Chowdhury and Stoica~\cite{Chowdhury2012} to characterize communication patterns within data centers. The scheduling problem for coflows has been proven to be strongly $\mathcal{NP}$-hard, indicating the need for efficient approximation algorithms rather than exact solutions. Due to the easy reduction of the concurrent open shop problem to coflow scheduling, where only the diagonal elements of the demand matrix have values, solving the concurrent open shop problem within a factor better than $2-\epsilon$ is $\mathcal{NP}$-hard~\cite{Bansal2010, Sachdeva2013}, implying the hardness of the coflow scheduling problem as well. 

Since the proposal of the coflow abstraction, extensive research has been conducted on coflow scheduling~\cite{Chowdhury2014, Chowdhury2015, Qiu2015, zhao2015rapier, shafiee2018improved, ahmadi2020scheduling}. Qiu~\textit{et al.}\cite{Qiu2015} presented the first deterministic polynomial-time approximation algorithm with an ratio of $\frac{67}{3}$. Subsequently, Ahmadi \textit{et al.}~\cite{ahmadi2020scheduling} proved that the technique proposed by Qiu \textit{et al.}\cite{Qiu2015} actually yields only a deterministic $\frac{76}{3}$-approximation algorithm for coflow scheduling with release times.
Khuller \textit{et al.}~\cite{khuller2016brief} also proposed an approximation algorithm for coflow scheduling with arbitrary release times, achieving a ratio of $12$.
Recent research by Shafiee and Ghaderi~\cite{shafiee2018improved} has resulted in an impressive approximation algorithm for the coflow scheduling problem, achieving an approximation ratio of $5$. Additionally, Ahmadi \textit{et al.}~\cite{ahmadi2020scheduling} have made significant contributions to this field by proposing a primal-dual algorithm that enhances the computational efficiency of coflow scheduling.

In the coflow scheduling problem within a heterogeneous parallel network, Huang \textit{et al.}~\cite{Huang2020} introduced an $O(m)$-approximation algorithm, where $m$ represents the number of network cores. On the other hand, Tian \textit{et al.}~\cite{Tian18} were the first to propose the problem of scheduling coflows of multi-stage jobs, and they provided a $O(N)$-approximation algorithm, where $N$ represents the number of servers in the network. Furthermore, Shafiee and Ghaderi~\cite{shafiee2021scheduling} proposed a polynomial-time algorithm that achieves an approximation ratio of $O(\tilde{\chi} \log(N)/\log(\log(N)))$, where $\tilde{\chi}$ denotes the maximum number of coflows in a job.

\subsection{Our Contributions}
This paper focuses on addressing the problem of coflow scheduling with precedence constraints in identical parallel networks and presents a range of algorithms and corresponding results. The specific contributions of this study are outlined below:

\begin{itemize}
\item When considering workload sizes and weights that are dependent on the network topology in the input instances, the proposed algorithm for the flow-level scheduling problem achieves an approximation ratio of $O(\chi)$ where $\chi$ is the coflow number of the longest path in the directed acyclic graph (DAG). 

\item When taking into account workload sizes that are topology-dependent, the proposed algorithm for flow-level scheduling problem achieves an approximation ratio of $O(R\chi)$, where $R$ represents the ratio of maximum weight to minimum weight.

\item For the coflow-level scheduling problem, the proposed algorithm achieves an approximation ratio of $O(m\chi)$, where $m$ is the number of network cores, when considering workload sizes and weights that are topology-dependent.

\item When considering workload sizes that are topology-dependent, the algorithm for the coflow-level scheduling problem achieves an approximation ratio of $O(Rm\chi)$.

\item In the coflows of multi-stage job scheduling problem, the proposed algorithm achieves an approximation ratio of $O(\chi)$.
\end{itemize}

A summary of our theoretical findings is provided in Table~\ref{tab:results} where TDWS stands for topology-dependent workload sizes, while TDW stands for topology-dependent weights.

\subsection{Organization}
The structure of this paper is outlined as follows. In Section~\ref{sec:Preliminaries}, an introduction is provided, covering fundamental notations and preliminary concepts that will be referenced in subsequent sections. Following that, the primary algorithms are presented in the following sections: Section~\ref{sec:Algorithm1} provides an overview of the algorithm addressing the flow-level scheduling problem, while Section~\ref{sec:Algorithm2} elaborates on the algorithm designed for the coflow-level scheduling problem. To address the scheduling problem for the coflows of multi-stage jobs, our algorithm is discussed in Section~\ref{sec:Algorithm6}. In Section~\ref{sec:Results}, a comparative analysis is conducted to evaluate the performance of our proposed algorithms in comparison to the previous algorithm. Lastly, in Section~\ref{sec:Conclusion}, our findings are summarized and meaningful conclusions are drawn.

%%%%%%%%%%%%%%%%%%%%%%%%%%%%%%%%%%%%%%%%%%%%%%%%%%%%%%%%%%%%%%
% section Preliminaries
%%%%%%%%%%%%%%%%%%%%%%%%%%%%%%%%%%%%%%%%%%%%%%%%%%%%%%%%%%%%%%
\section{Notation and Preliminaries}\label{sec:Preliminaries}
The identical parallel network consists of a collection of $m$ non-blocking switches, each with dimensions of $N \times N$. These switches form the infrastructure of the network, where $N$ input links are connected to $N$ source servers, and $N$ output links are connected to $N$ destination servers. These switches serve as practical and intuitive models for the network core. Network architectures such as Fat-tree or Clos~\cite{al2008scalable, greenberg2009vl2} can be employed to construct networks that provide complete bisection bandwidth. In this configuration, each switch's $i$-th input port is connected to the $i$-th source server, and the $j$-th output port is connected to the $j$-th destination server. Consequently, each source server (or destination server) has $m$ simultaneous uplinks (or downlinks), where each link may consist of multiple physical connections in the actual network topology~\cite{Huang2020}. Let $\mathcal{I}$ denote the set of source servers, and $\mathcal{J}$ denote the set of destination servers. The network core can be visualized as a bipartite graph, with $\mathcal{I}$ on one side and $\mathcal{J}$ on the other. For simplicity, we assume that all network cores are identical, and the links within each core have the same capacity or speed.

A coflow is a collection of independent flows, and its completion time of a coflow is determined by the completion time of the last flow in the set, making it a critical metric for evaluating the efficiency of data transfers. The demand matrix $D^{(k)}=\left(d_{i,j,k}\right)_{i,j=1}^{N}$ represents the specific data transfer requirements within coflow $k$. Each entry $d_{i,j,k}$ in the matrix corresponds to the size of the flow that needs to be transmitted from input $i$ to output $j$ within the coflow. In the context of identical network cores, the flow size can be interpreted as the transmission time, as all cores possess the same capacity or speed. This simplification allows for easier analysis and optimization of coflow scheduling algorithms. To facilitate efficient management and routing of flows, each flow is identified by a triple $(i, j, k)$, where $i$ represents the source node, $j$ represents the destination node, and $k$ corresponds to the coflow. This identification scheme enables precise tracking and control of individual flows within the parallel network.
Furthermore, we assume that flows are composed of discrete data units, resulting in integer sizes. For simplicity, we assume that all flows within a coflow are simultaneously initiated, as demonstrated in~\cite{Qiu2015}.

This paper investigates the problem of coflow scheduling with release times and precedence constraints. The problem involves a set of coflows denoted by $\mathcal{K}$, where coflow $k$ is released into the system at time $r_k$. The completion time of coflow $k$, denoted as $C_k$, represents the time required for all its flows to finish processing. Each coflow $k\in \mathcal{K}$ is assigned a positive weight $w_k$. Let $R$ be the ratio between the maximum weight and the minimum weight. The relationships between coflows can be modeled using a directed acyclic graph (DAG) $G=(\mathcal{K}, E)$, where an arc $(k', k)\in E$ and $k', k\in \mathcal{K}$ indicate that all flows of coflow $k'$ must be completed before any flow of coflow $k$ can be scheduled. This relationship is denoted as $k'\prec k$. The DAG has a coflow number of $\chi$, which represents the length of the longest path in the DAG. The objective is to schedule coflows in an identical parallel network, considering the precedence constraints, in order to minimize the total weighted completion time of the coflows, denoted as $\sum_{k\in \mathcal{K}} w_{k}C_{k}$. For clarity, different subscript symbols are used to represent different meanings of the same variables. Subscript $i$ represents the index of the source (or input port), subscript $j$ represents the index of the destination (or output port), and subscript $k$ represents the index of the coflow. For instance, $\mathcal{F}_i$ denotes the set of flows with source $i$, and $\mathcal{F}_j$ represents the set of flows with destination $j$. The symbols and terminology used in this paper are summarized in Table~\ref{tab:notations}.

%%%%%%%%%%%%%%%%%%%%%%%%%%%%%%%The rho%%%%%%%%%%%%%%%%%%%%%%%%%%%%
\begin{table}[!ht]
\caption{Notation and Terminology}
%\vspace{2mm}
    \centering
        \begin{tabular}{||c|p{4.5in}||}
    \hline
     $m$      & The number of network cores.          \\
    \hline    
     $N$      & The number of input/output ports.         \\
    \hline
     $n$      & The number of coflows.         \\
    \hline
     $\mathcal{I}, \mathcal{J}$ & The source server set and the destination server set.         \\
    \hline    
     $\mathcal{K}$ & The set of coflows.         \\
    \hline
     $D^{(k)}$     & The demand matrix of coflow $k$. \\
    \hline    
     $d_{i,j,k}$     & The size of the flow to be transferred from input $i$ to output $j$ in coflow $k$.   \\
    \hline     
     $C_k$     & The completion time of coflow $k$.   \\
    \hline     
     $C_{i,j,k}$ & The completion time of flow $(i, j, k)$. \\
    \hline     
     $r_k$     & The released time of coflow $k$.  \\
    \hline     
     $w_{k}$   &  The weight of coflow $k$. \\
    \hline     
		 $\mathcal{F}_{i}$ & $\mathcal{F}_{i}=\left\{(i, j, k)| d_{i,j,k}>0, \forall k\in \mathcal{K}, \forall j\in \mathcal{J} \right\}$ is the set of flows with source $i$. \\
		\hline 					
		 $\mathcal{F}_{j}$ & $\mathcal{F}_{j}=\left\{(i, j, k)| d_{i,j,k}>0, \forall k\in \mathcal{K}, \forall i\in \mathcal{I} \right\}$ is the set of flows with destination $j$. \\
		\hline 					
		 $d(S), d^2(S)$ & $d(S)=\sum_{(i, j, k)\in S} d_{i,j,k}$ and $d^2(S)=\sum_{(i, j, k)\in S} d_{i,j,k}^{2}$ for any subset $S\subseteq \mathcal{F}_{i}$ (or $S\subseteq \mathcal{F}_{j}$). \\
		\hline 					
		 $f(S)$ & $f(S) = \frac{d(S)^2+ d^2(S)}{2m}$ for any subset $S\subseteq \mathcal{F}_{i}$ (or $S\subseteq \mathcal{F}_{j}$). \\
		\hline 				
			$L_{i,S,k}$ & $L_{i,S,k}=\sum_{(i',j',k')\in S|i'=i,k'=k}d_{i',j',k'}$ is the total load on input port $i$ for coflow $k$ in the set $S$. \\
		\hline 				
			$L_{j,S,k}$ & $L_{j,S,k}=\sum_{(i',j',k')\in S|j'=j,k'=k}d_{i',j',k'}$ is the total load on output port $j$ for coflow $k$ in the set $S$. \\
		\hline 				
			$L_{i,k}$ & $L_{i,k}=\sum_{j\in \mathcal{J}} d_{i,j,k}$ is the total load of flows from the coflow $k$ at input port $i$. \\
		\hline 				
			$L_{j,k}$ & $L_{j,k}=\sum_{i\in \mathcal{I}} d_{i,j,k}$ is the total load of flows from the coflow $k$ at output port $j$. \\
		\hline 				
			$L_{i}, L_{j}$ & $L_{i} = \sum_{k\in \mathcal{K}}L_{i,k}$ and $L_{j} = \sum_{k\in \mathcal{K}}L_{j,k}$. \\
		\hline 				
			$S_{i,k}$ & $S_{i,k}$ is the set of flows from the first $k$ coflows at input port $i$. \\
		\hline 				
			$S_{j,k}$ & $S_{j,k}$ is the set of flows from the first $k$ coflows at output port $j$. \\
		\hline 				
			$f_{i}(S)$ & $f_{i}(S) = \frac{\sum_{k\in S} L_{i,k}^2+\left(\sum_{k\in S} L_{i,k}\right)^2}{2m}$ for any subset $S\subseteq \mathcal{K}$. \\
		\hline 				
			$f_{j}(S)$ & $f_{j}(S) = \frac{\sum_{k\in S} L_{j,k}^2+\left(\sum_{k\in S} L_{j,k}\right)^2}{2m}$ for any subset $S\subseteq \mathcal{K}$. \\
		\hline 				
			$S_{k}$ & $S_{k}=\left\{1, 2, \ldots, k\right\}$ is the set of first $k$ coflows. \\
		\hline 				
			$L_{i}(S_{k}), L_{j}(S_{k})$ & $L_{i}(S_{k})=\sum_{k'\leq k} L_{i, k'}$ and $L_{j}(S_{k})=\sum_{k'\leq k} L_{j, k'}$. \\
		\hline 				
			$\mu_1(k)$ & $\mu_1(k)$ is the input port in $S_{k}$ with the highest load. \\
		\hline 				
			$\mu_2(k)$ & $\mu_2(k)$ is the output port in $S_{k}$ with the highest load. \\
		\hline 				
			$\chi$      & The coflow number of the longest path in the DAG. \\
    \hline     	
			$R$      & The ratio between the maximum weight and the minimum weight. \\
    \hline     	
        \end{tabular}
    \label{tab:notations}
\end{table}

%%%%%%%%%%%%%%%%%%%%%%%%%%%%%%%%%%%%%%%%%%%%%%%%%%%%%%%%%%%%%%
% section The algorithm
%%%%%%%%%%%%%%%%%%%%%%%%%%%%%%%%%%%%%%%%%%%%%%%%%%%%%%%%%%%%%%

\section{Approximation Algorithm for the Flow-level Scheduling Problem}\label{sec:Algorithm1}
This section focuses on the flow-level scheduling problem, which allows for the transmission of different flows within a coflow through distinct network cores. We assume that coflows are transmitted at the flow level, ensuring that the data within a flow is allocated to the same core. We define $\mathcal{F}_{i}$ as the collection of flows with source $i$, represented by $\mathcal{F}_{i}=\left\{(i, j, k)| d_{i,j,k}>0, \forall k\in \mathcal{K}, \forall j\in \mathcal{J} \right\}$, and $\mathcal{F}_{j}$ as the set of flows with destination $j$, given by $\mathcal{F}_{j}=\left\{(i, j, k)| d_{i,j,k}>0, \forall k\in \mathcal{K}, \forall i\in \mathcal{I} \right\}$. For any subset $S\subseteq \mathcal{F}_{i}$ (or $S\subseteq \mathcal{F}_{j}$), we define $d(S)=\sum_{(i, j, k)\in S} d_{i,j,k}$ as the sum of data size over all flows in $S$ and $d^2(S)=\sum_{(i, j, k)\in S} d_{i,j,k}^{2}$ as the sum of squares of data size over all flows in $S$. Additionally, we introduce the function $f(S)$ as follows:
\begin{eqnarray*}
f(S) = \frac{d(S)^2+ d^2(S)}{2m}.
\end{eqnarray*}
The flow-level scheduling problem can be formulated as a linear programming relaxation, which is expressed as follows:
\begin{subequations}\label{coflow:main}
\begin{align}
& \text{min}  && \sum_{k \in \mathcal{K}} w_{k} C_{k}     &   & \tag{\ref{coflow:main}} \\
& \text{s.t.} && C_{k} \geq C_{i,j,k}, && \forall k\in \mathcal{K}, \forall i\in \mathcal{I}, \forall j\in \mathcal{J} \label{coflow:a} \\
&  && C_{i,j,k}\geq r_k+d_{i,j,k}, && \forall k\in \mathcal{K}, \forall i\in \mathcal{I}, \forall j\in \mathcal{J} \label{coflow:b} \\
&  && C_{i,j,k}\geq C_{k'}+d_{i,j,k}, && \forall k, k'\in \mathcal{K}:k'\prec k, \notag \\
&  &&                             &&  \forall i\in \mathcal{I}, \forall j\in \mathcal{J} \label{coflow:prec:e} \\
&  && \sum_{(i, j, k)\in S}d_{i,j,k}C_{i,j,k}\geq f(S),&& \forall i\in \mathcal{I}, \forall S\subseteq \mathcal{F}_{i} \label{coflow:c} \\
&  && \sum_{(i, j, k)\in S}d_{i,j,k}C_{i,j,k}\geq f(S),&& \forall j\in \mathcal{J}, \forall S\subseteq \mathcal{F}_{j} \label{coflow:d} 
\end{align}
\end{subequations}

In the linear program (\ref{coflow:main}), the variable $C_{k}$ represents the completion time of coflow $k$ in the schedule, and $C_{i,j,k}$ denotes the completion time of flow $(i, j, k)$. Constraint (\ref{coflow:a}) specifies that the completion time of coflow $k$ is bounded by the completion times of all its flows, ensuring that no flow finishes after the coflow. Constraint (\ref{coflow:b}) guarantees that the completion time of any flow $(i, j, k)$ is at least its release time $r_k$ plus the time required for its transmission. To capture the precedence constraints among coflows, constraint (\ref{coflow:prec:e}) indicates that all flows of coflow $k'$ must be completed before any flow of coflow $k$ can be scheduled. Constraints (\ref{coflow:c}) and (\ref{coflow:d}) introduce lower bounds on the completion time variables at the input and output ports, respectively.

We define $L_{i,S,k}$ as the sum of the loads on input port $i$ for coflow $k$ in the set $S$. Similarly, $L_{j,S,k}$ represents the sum of the loads on output port $j$ for coflow $k$ in the set $S$. To formulate the dual linear program, we have the following expressions:
\begin{align*}
L_{i,S,k} &=\sum_{(i',j',k')\in S|i'=i,k'=k}d_{i',j',k'}, \\
L_{j,S,k} &=\sum_{(i',j',k')\in S|j'=j,k'=k}d_{i',j',k'}.
\end{align*}
The dual linear program is given by
\begin{subequations}\label{coflow:dual}
\begin{align}
& \text{max}  && \sum_{k \in \mathcal{K}}\sum_{i \in \mathcal{I}}\sum_{j \in \mathcal{J}} \alpha_{i, j, k}(r_k+d_{i,j,k}) &   &\notag\\
&   && +\sum_{i \in \mathcal{I}}\sum_{S \subseteq \mathcal{F}_{i}}\beta_{i,S} f(S)     &   & \notag\\
&   && +\sum_{j \in \mathcal{J}}\sum_{S \subseteq \mathcal{F}_{j}}\beta_{j,S} f(S)     &   & \notag \\
&   && + \sum_{(k', k) \in E}\sum_{i \in \mathcal{I},j \in \mathcal{J}}\gamma_{k', i, j, k} d_{i,j,k}     &   & \tag{\ref{coflow:dual}} \\
& \text{s.t.} && \sum_{i \in \mathcal{I}}\sum_{j \in \mathcal{J}} \alpha_{i, j, k} &   & \notag\\
&   && +\sum_{i \in \mathcal{I}}\sum_{S\subseteq \mathcal{F}_{i}}\beta_{i,S}L_{i,S,k} &   &\notag\\
&   && +\sum_{j \in \mathcal{J}}\sum_{S\subseteq \mathcal{F}_{j}}\beta_{j,S}L_{j,S,k} &   &\notag\\
&   && +\sum_{(k',k)\in E}\sum_{i \in \mathcal{I},j \in \mathcal{J}}\gamma_{k', i, j, k} &   &\notag\\
&   && -\sum_{(k,k')\in E}\sum_{i \in \mathcal{I},j \in \mathcal{J}}\gamma_{k, i, j, k'}\leq w_{k}, && \forall k\in \mathcal{K} \label{coflow:dual:a} \\
&  && \alpha_{i, j, k} \geq 0, && \forall k\in \mathcal{K}, \forall i\in \mathcal{I}, \notag\\
&  &&                          && \forall j\in \mathcal{J} \label{coflow:dual:b} \\
&  && \beta_{i, S}\geq 0,   &&  \forall i\in \mathcal{I}, \forall S\subseteq \mathcal{F}_{i} \label{coflow:dual:c} \\
&  && \beta_{j, S}\geq 0,   &&  \forall j\in \mathcal{J}, \forall S\subseteq \mathcal{F}_{j} \label{coflow:dual:d} \\
&  && \gamma_{k', i, j, k}\geq 0,   &&  \forall (k', k)\in E, \forall i\in \mathcal{I}, \notag\\
&  &&                          &&\forall j\in \mathcal{J}\label{coflow:dual:e}
\end{align}
\end{subequations}

It is important to note that each flow $(i, j, k)$ is associated with a dual variable $\alpha_{i, j, k}$, and for every coflow $k$, there exists a corresponding constraint. Additionally, for any subset $S \subseteq \mathcal{F}_{i}$ (or $S \subseteq \mathcal{F}_{j}$) of flows, there exists a dual variable $\beta_{i, S}$ (or $\beta_{j, S}$). To facilitate the analysis and design of algorithms, we define $\gamma_{k', k}$ as the sum of $\gamma_{k', i, j, k}$ over all input ports $i$ and output ports $j$ in their respective sets $\mathcal{I}$ and $\mathcal{J}$:
\begin{equation*}
\gamma_{k', k}=\sum_{i \in \mathcal{I},j \in \mathcal{J}}\gamma_{k', i, j, k}.
\end{equation*}
Significantly, it should be emphasized that the cost of any feasible dual solution provides a lower bound for $OPT$, which represents the cost of an optimal solution.
This implies that the cost attained by any valid dual solution ensures that $OPT$ cannot be less than that. In other words, if we obtain a feasible dual solution with a certain cost, we can be certain that the optimal solution, which represents the best possible cost, will not have a lower cost than the one achieved by the dual solution. 

The primal-dual algorithm, as depicted in Appendix~\ref{appendix:a}, Algorithm ~\ref{Alg_dual}, is inspired by the research of Davis \textit{et al.} \cite{DAVIS2013121} and Ahmadi \textit{et al.} \cite{ahmadi2020scheduling}, respectively. This algorithm constructs a feasible schedule iteratively, progressing from right to left, determining the processing order of coflows. Starting from the last coflow and moving towards the first, each iteration makes crucial decisions in terms of increasing dual variables $\alpha$, $\beta$ or $\gamma$. The guidance for these decisions is provided by the dual linear programming (LP) formulation. The algorithm offers a space complexity of $O(Nn)$ and a time complexity of $O(n^2)$, where $N$ represents the number of input/output ports, and $n$ represents the number of coflows.

Consider a specific iteration in the algorithm. At the beginning of this iteration, let $\mathcal{K}$ represent the set of coflows that have not been scheduled yet, and let $k$ denote the coflow with the largest release time. In each iteration, a decision must be made regarding whether to increase dual variables $\alpha$, $\beta$ or $\gamma$. 

If the release time $r_k$ is significantly large, increasing the $\alpha$ dual variable results in substantial gains in the objective function value of the dual problem. On the other hand, if $L_{\mu_1(r)}$ (or $L_{\mu_2(r)}$ if $L_{\mu_2(r)}\geq L_{\mu_1(r)}$) is large, raising the $\beta$ variable leads to substantial improvements in the objective value. Let $\kappa$ be a constant that will be optimized later.

If $r_k>\frac{\kappa\cdot L_{\mu_1(r)}}{m}$ (or $r_k>\frac{\kappa\cdot L_{\mu_2(r)}}{m}$ if $L_{\mu_2(r)}\geq L_{\mu_1(r)}$), the $\alpha$ dual variable is increased until the dual constraint for coflow $k$ becomes tight. Consequently, coflow $k$ is scheduled to be processed as early as possible and before any previously scheduled coflows.
In the case where $r_k\leq \frac{\kappa\cdot L_{\mu_1(r)}}{m}$ (or $r_k\leq\frac{\kappa\cdot L_{\mu_2(r)}}{m}$ if $L_{\mu_2(r)}\geq L_{\mu_1(r)}$), the dual variable $\beta_{\mu_1(r),\mathcal{G}_i}$ (or $\beta_{\mu_2(r),\mathcal{G}_j}$ if $L{\mu_2(r)}\geq L_{\mu_1(r)}$) is increased until the dual constraint for coflow $k'$ becomes tight. 

In this step, we begin by identifying a candidate coflow, denoted as $k'$, with the minimum value of $\beta$. We then examine whether this coflow still has unscheduled successors. If it does, we continue traversing down the chain of successors until we reach a coflow that has no unscheduled successors, which we will refer to as $t_1$.
Once we have identified coflow $t_1$, we set its $\beta$ and $\gamma$ values such that the dual constraint for coflow $t_1$ becomes tight. Moreover, we ensure that the $\beta$ value of coflow $t_1$ matches that of the candidate coflow $k'$.

\begin{algorithm}
\caption{Flow-Driven-List-Scheduling}
    \begin{algorithmic}[1]
				%\STATE Let $load_{I}(i,h)$ be the load on the $i$-th input port of the network core $h$
				%\STATE Let $load_{O}(j,h)$ be the load on the $j$-th output port of the network core $h$
				%\STATE Let $\mathcal{A}_h$ be the set of flows allocated to network core $h$
				\STATE Both $load_{I}$ and $load_{O}$ are initialized to zero and $\mathcal{A}_h=\emptyset$ for all $h\in [1, m]$
				\FOR{$k=1, 2, \ldots, n$} \label{alg1-2}
				\FOR{every flow $(i, j, k)$ in non-increasing order of $d_{i, j, k}$, breaking ties arbitrarily}
				    \STATE $h^*=\arg \min_{h\in [1, m]}\left(load_{I}(i,h)+load_{O}(j,h)\right)$
						\STATE $\mathcal{A}_{h^*}=\mathcal{A}_{h^*}\cup \left\{(i, j, k)\right\}$
						\STATE $load_{I}(i,h^*)=load_{I}(i,h^*)+d_{i, j, k}$ and $load_{O}(j,h^*)=load_{O}(j,h^*)+d_{i, j, k}$
				\ENDFOR
				\ENDFOR \label{alg1-3}
				\FOR{each $h\in [1, m]$ do in parallel} \label{alg1-4}
				    \STATE wait until the first coflow is released
						\WHILE{there is some incomplete flow}
						    \FOR{$k'=1, 2, \ldots, n$}
                \FOR{every ready, released and incomplete flow $(i, j, k=k')\in \mathcal{A}_{h}$ in non-increasing order of $d_{i, j, k}$, breaking ties arbitrarily}
										\IF{the link $(i, j)$ is idle}
										    \STATE schedule flow $f$\label{alg1-1}
										\ENDIF
								\ENDFOR
								\ENDFOR
								\WHILE{no new flow is ready, completed or released}
								    \STATE transmit the flows that get scheduled in line \ref{alg1-1} at maximum rate 1.
								\ENDWHILE
						\ENDWHILE
				\ENDFOR\label{alg1-5}
   \end{algorithmic}
\label{Alg1}
\end{algorithm}

The flow-driven-list-scheduling algorithm, as depicted in Algorithm~\ref{Alg1}, leverages a list scheduling rule to determine the order of coflows to be scheduled. In order to provide a clear and consistent framework, we assume that the coflows have been pre-ordered based on the permutation generated by Algorithm~\ref{Alg_dual}, where $\sigma(k)=k$ for all $k\in \mathcal{K}$. Thus, the coflows are scheduled sequentially in this predetermined order.
Within each coflow, the flows are scheduled based on a non-increasing order of their sizes, breaking ties arbitrarily. Specifically, for every flow $(i, j, k)$, the algorithm identifies the least loaded network core, denoted as $h^*$, and assigns the flow $(i, j, k)$ to this core.
The algorithm's steps involved in this assignment process are outlined in lines \ref{alg1-2}-\ref{alg1-3}. 

A flow is deemed "ready" for scheduling only when all of its predecessors have been fully transmitted. The algorithm then proceeds to schedule all the flows that are both ready and have been released but remain incomplete. These scheduling steps, encapsulated in lines \ref{alg1-4}-\ref{alg1-5}, have been adapted from the work of Shafiee and Ghaderi~\cite{shafiee2018improved}.

\subsection{Analysis}
In this section, we present a comprehensive analysis of the proposed algorithm, establishing its approximation ratios. Specifically, we demonstrate that the algorithm achieves an approximation ratio of $O(\chi)$ when considering workload sizes and weights that are topology-dependent in the input instances. Additionally, when considering workload sizes that are topology-dependent in the input instances, the algorithm achieves an approximation ratio of $O(R\chi)$ where $R$ is the ratio of maximum weight to minimum weight. It is crucial to note that our analysis assumes that the coflows are arranged in the order determined by the permutation generated by Algorithm~\ref{Alg_dual}, where $\sigma(k)=k$ for all $k\in \mathcal{K}$.

Let $S_k=\left\{1, 2, \ldots, k\right\}$ denote the set of the first $k$ coflows. Furthermore, we define $S_{i,k}$ as the set of flows from the first $k$ coflows at input port $i$. Formally, $S_{i,k}$ is defined as follows:
\begin{eqnarray*}
S_{i,k}=\left\{(i, j, k')| d_{i,j,k'}>0, \forall k'\in \left\{1,\ldots,k\right\}, \forall j\in \mathcal{J} \right\}.
\end{eqnarray*}
Similarly, $S_{j,k}$ represents the set of flows from the first $k$ coflows at output port $j$, defined as:
\begin{eqnarray*}
S_{j,k}=\left\{(i, j, k')| d_{i,j,k'}>0, \forall k'\in \left\{1,\ldots,k\right\}, \forall i\in \mathcal{I} \right\}.
\end{eqnarray*}
Let $\beta_{i,k}=\beta_{i,S_{i,k}}$ and $\beta_{j,k}=\beta_{j,S_{j,k}}$. These variables capture the dual variables associated with the sets $S_{i,k}$ and $S_{j,k}$.

Moreover, we introduce the notation $\mu_1(k)$ to denote the input port with the highest load in $S_{k}$, and $\mu_2(k)$ to represent the output port with the highest load in $S_{k}$. Recall that $d(S)$ represents the sum of loads for all flows in a subset $S$. Therefore, $d(S_{i,k})$ corresponds to the total load of flows from the first $k$ coflows at input port $i$, and $d(S_{j,k})$ corresponds to the total load of flows from the first $k$ coflows at output port $j$.

Finally, let $L_{i,k}=\sum_{j\in \mathcal{J}} d_{i,j,k}$ denote the total load of flows from coflow $k$ at input port $i$, and $L_{j,k}=\sum_{i\in \mathcal{I}} d_{i,j,k}$ denote the total load of flows from coflow $k$ at output port $j$.
Let us begin by presenting several key observations regarding the primal-dual algorithm.
\begin{obs}\label{obs:1}
The following statements hold.

\begin{enumerate}
\item Every nonzero $\beta_{i,S}$ can be written as $\beta_{\mu_1(k),k}$ for some coflow $k$. \label{obs:1-1}
\item Every nonzero $\beta_{j,S}$ can be written as $\beta_{\mu_2(k),k}$ for some coflow $k$. \label{obs:1-2}
\item For every set $S_{\mu_1(k),k}$ that has a nonzero $\beta_{\mu_1(k),k}$ variable, if $k' \leq k$ then $r_{k'}\leq \frac{\kappa\cdot d(S_{\mu_1(k),k})}{m}$. \label{obs:1-3}
\item For every set $S_{\mu_2(k),k}$ that has a nonzero $\beta_{\mu_2(k),k}$ variable, if $k' \leq k$ then $r_{k'}\leq \frac{\kappa\cdot d(S_{\mu_2(k),k})}{m}$. \label{obs:1-4}
\item For every coflow $k$ that has a nonzero $\alpha_{\mu_1(k), 1, k}$, $r_{k}>\frac{\kappa\cdot d(S_{\mu_1(k),k})}{m}$. \label{obs:1-5}
\item For every coflow $k$ that has a nonzero $\alpha_{1, \mu_2(k), k}$, $r_{k}>\frac{\kappa\cdot d(S_{\mu_2(k),k})}{m}$. \label{obs:1-6}
\item For every coflow $k$ that has a nonzero $\alpha_{\mu_1(k), 1, k}$ or a nonzero $\alpha_{1, \mu_2(k), k}$, if $k'\leq k$ then $r_{k'}\leq r_{k}$. \label{obs:1-7}
\end{enumerate}
\end{obs}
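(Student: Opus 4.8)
The plan is to derive all seven items directly from the execution of the primal-dual routine (Algorithm~\ref{Alg_dual}) together with its branching rule, since each item is a structural invariant of the dual variables rather than a quantitative estimate. The one bookkeeping fact I would pin down first is the \emph{prefix invariant}: because the algorithm fixes coflows from the last position down to the first and we have relabeled so that $\sigma(k)=k$, the iteration that determines the coflow occupying position $k$ starts with exactly the set $\{1,2,\ldots,k\}$ of unscheduled coflows, and places the chosen coflow at position $k$, immediately before the already-scheduled suffix $\{k+1,\ldots,n\}$. From this it follows that at that iteration the load on input port $i$ is $\sum_{k'\le k}L_{i,k'}=d(S_{i,k})$, the argmax ports are $\mu_1(k)$ and $\mu_2(k)$, and $L_{\mu_1(r)}=d(S_{\mu_1(k),k})$, $L_{\mu_2(r)}=d(S_{\mu_2(k),k})$. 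I would also record that the coflow of largest release time in $\{1,\ldots,k\}$, call it $k''\le k$, is the one governing the branch test at that iteration; in the $\alpha$-branch the coflow actually placed is $k''$ itself, while in the $\beta$-branch it is the successor-chain endpoint $t_1$.

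For items~\ref{obs:1-1} and~\ref{obs:1-2} I would note that an input-side dual is increased only in the form $\beta_{\mu_1(r),\mathcal{G}_i}$, where $\mathcal{G}_i$ is the set of flows of the current unscheduled coflows on the maximum-load input port; by the prefix invariant this set is precisely $S_{\mu_1(k),k}$ and the port is $\mu_1(k)$, so every nonzero input $\beta$ equals $\beta_{\mu_1(k),k}$, and the output side is symmetric. For items~\ref{obs:1-3} and~\ref{obs:1-4}, a nonzero $\beta_{\mu_1(k),k}$ certifies that the $\beta$-branch was taken at the step for position $k$ with the input port carrying the larger load, which requires $r_{k''}\le \kappa L_{\mu_1(r)}/m$; since $k'\le k$ gives $r_{k'}\le r_{k''}$ and $L_{\mu_1(r)}=d(S_{\mu_1(k),k})$, the stated bound follows, and dually for the output port. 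Items~\ref{obs:1-5} and~\ref{obs:1-6} are the complementary case: a nonzero $\alpha_{\mu_1(k),1,k}$ certifies the $\alpha$-branch with the input port dominating, so the branch condition reads $r_k>\kappa L_{\mu_1(r)}/m=\kappa d(S_{\mu_1(k),k})/m$, exactly as claimed. Finally, item~\ref{obs:1-7} holds because the $\alpha$-branch places the coflow of largest release time among $\{1,\ldots,k\}$, so here the placed coflow equals $k''=k$ and every $k'\le k$ satisfies $r_{k'}\le r_k$.

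I expect the only genuine obstacle to be making the prefix invariant airtight in \emph{both} branches, and in particular reconciling it with the $\beta$-branch, where the coflow removed is the successor-chain endpoint $t_1$ rather than the candidate $k'$ with minimum $\beta$ or the largest-release coflow $k''$. I would argue that this $t_1$-substitution changes only how the $\gamma$ mass is distributed along the chain and which coflow is deleted from the unscheduled set, but not the index set of the $\beta$ variable, which remains the maximum-load port taken over the entire current prefix; hence the structural statements about $\beta$, $\alpha$, and the release-time ordering are untouched. Once the invariant is established, each item reduces to restating the branch test, so no nontrivial computation remains.
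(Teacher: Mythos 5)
Your proposal is correct and follows exactly the route the paper intends: the paper offers no written proof beyond asserting that each item "can be readily verified and directly inferred from the steps outlined in Algorithm~\ref{Alg_dual}," and your prefix invariant plus case analysis on the two branches is precisely that verification carried out in detail. Your handling of the $t_1$-substitution in the $\beta$-branch (it changes which coflow is deleted and how $\gamma$ is distributed, but not the index set $\mathcal{G}_{\mu_1(r)}=S_{\mu_1(k),k}$ of the $\beta$ variable) correctly addresses the one point where the verification could have gone wrong.
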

The validity of each of the aforementioned observations can be readily verified and directly inferred from the steps outlined in Algorithm~\ref{Alg_dual}. 
\begin{obs}\label{obs:2}
For any subset $S$, we have that $d(S)^2\leq 2m\cdot f(S)$.
\end{obs}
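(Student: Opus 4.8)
The plan is to unwind the definition of $f(S)$ and reduce the asserted inequality to the nonnegativity of a sum of squares, so that no genuine estimation is required. Recall from the preliminaries that $d(S)=\sum_{(i,j,k)\in S}d_{i,j,k}$, that $d^2(S)=\sum_{(i,j,k)\in S}d_{i,j,k}^2$, and that $f(S)=\frac{d(S)^2+d^2(S)}{2m}$. First I would substitute this definition directly, which gives the identity $2m\cdot f(S)=d(S)^2+d^2(S)$. Here I would note that $m$ is a positive integer (the number of network cores), so clearing the denominator $2m$ is legitimate and preserves equalities and the direction of inequalities.

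Next I would rewrite the claim $d(S)^2\leq 2m\cdot f(S)$ as $d(S)^2\leq d(S)^2+d^2(S)$ using the identity above, and then cancel the common term $d(S)^2$ from both sides. This leaves the equivalent statement $0\leq d^2(S)$. Since $d^2(S)=\sum_{(i,j,k)\in S}d_{i,j,k}^2$ is a sum of squares of the (nonnegative, indeed integer-valued) flow sizes, every summand is nonnegative, hence $d^2(S)\geq 0$ and the reduced inequality holds. Tracing the chain of equivalences backward then recovers the original bound $d(S)^2\leq 2m\cdot f(S)$, valid for every subset $S\subseteq\mathcal{F}_i$ (or $S\subseteq\mathcal{F}_j$).

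There is no substantive obstacle in this argument: the observation is an algebraic identity combined with the trivial nonnegativity of a sum of squares, rather than a true quantitative bound. The only points requiring any care are keeping $m>0$ in view when multiplying through by $2m$, and confirming that the definition of $f(S)$ being invoked is exactly the one attached to the relevant ground set ($\mathcal{F}_i$ or $\mathcal{F}_j$); both are immediate. I would expect the entire proof to occupy a single line once the definition of $f(S)$ is expanded.
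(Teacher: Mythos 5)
Your argument is correct and is exactly the intended justification: the paper states this observation without proof precisely because $2m\cdot f(S)=d(S)^2+d^2(S)$ by definition, and $d^2(S)=\sum_{(i,j,k)\in S}d_{i,j,k}^2\geq 0$. Nothing is missing, and your attention to $m>0$ and to which ground set ($\mathcal{F}_i$ or $\mathcal{F}_j$) the definition of $f$ is attached to is appropriate but, as you note, immediate.
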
 

\begin{lem}\label{lem:lem1}
Let $C_{k}$ represent the completion time of coflow $k$ when scheduled according to Algorithm~\ref{Alg1}. For any coflow $k$, we have $C_{k}\leq a\cdot \max_{k'\leq k}r_{k'}+\chi\left(\frac{d(S_{\mu_1(k),k})+d(S_{\mu_2(k),k})}{m}\right)+(1-\frac{2}{m})C_{k}^*$, where $a=0$ signifies the absence of release times, and $a=1$ indicates the presence of arbitrary release times.
\end{lem}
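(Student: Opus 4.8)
The plan is to bound $C_k$ by tracing the chain of delays that determine when the last flow of coflow $k$ finishes, and to argue by induction on the coflow index (which we may assume is a topological order of the DAG, so that $k'\prec k$ implies $k'<k$). The inductive claim I would actually prove is the slightly sharper statement in which $\chi$ is replaced by $\chi_k$, the number of coflows on the longest precedence path ending at $k$; since $\chi_k\le\chi$ this yields the lemma. Two ingredients drive the argument: a per-segment list-scheduling bound that controls the delay of a single coflow once its predecessors are done, and a load-balancing bound that converts the per-core congestion appearing in that segment into the global quantities $d(S_{\mu_1(k),k})$ and $d(S_{\mu_2(k),k})$.

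For the per-segment bound I would fix the flow $(i,j,k)$ of coflow $k$ that finishes last, say on core $h^*$, and set $A=\max\bigl(a\cdot\max_{k'\le k}r_{k'},\,\max_{k'\prec k}C_{k'}\bigr)$, the time at which this flow is simultaneously released and ready. Adapting the busy-interval argument of Shafiee and Ghaderi, I would show that at every instant of $[A,C_{i,j,k})$ either input $i$ or output $j$ of core $h^*$ is occupied by a flow of priority at least that of $(i,j,k)$, because otherwise the list-scheduling rule of Algorithm~\ref{Alg1} would have advanced $(i,j,k)$; this yields $C_{i,j,k}\le A+load_{I}(i,h^*)+load_{O}(j,h^*)+d_{i,j,k}$. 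For the load-balancing bound I would use that $(i,j,k)$ is placed on the least-loaded core together with the fact that flows are assigned in non-increasing size order coflow by coflow, so $\sum_h load_{I}(i,h)\le d(S_{i,k})-d_{i,j,k}$ and $\sum_h load_{O}(j,h)\le d(S_{j,k})-d_{i,j,k}$ at the moment of assignment; averaging over the $m$ cores then gives $load_{I}(i,h^*)+load_{O}(j,h^*)+d_{i,j,k}\le\frac{d(S_{i,k})+d(S_{j,k})}{m}+\bigl(1-\frac{2}{m}\bigr)d_{i,j,k}$. Finally $d(S_{i,k})\le d(S_{\mu_1(k),k})$ and $d(S_{j,k})\le d(S_{\mu_2(k),k})$ by the definitions of $\mu_1$ and $\mu_2$.

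Combining the two ingredients gives $C_k\le A+\frac{d(S_{\mu_1(k),k})+d(S_{\mu_2(k),k})}{m}+(1-\frac{2}{m})d_{i,j,k}$, and I would then split on which term realizes $A$. If the release term binds, then $\chi_k\ge1$ and the single congestion term together with $(1-\frac{2}{m})d_{i,j,k}\le(1-\frac{2}{m})C_k^*$ (using $C_k^*\ge d_{i,j,k}$ from (\ref{coflow:b}) and (\ref{coflow:a})) already gives the claim. If instead some predecessor $k^-\prec k$ binds, i.e. $A=C_{k^-}$, I would invoke the inductive hypothesis for $k^-$: the release term is absorbed since $\max_{k'\le k^-}r_{k'}\le\max_{k'\le k}r_{k'}$, the congestion term grows from $\chi_{k^-}\le\chi_k-1$ copies to $\chi_k$ copies after adding the current segment (using monotonicity $d(S_{\mu_1(k^-),k^-})\le d(S_{\mu_1(k),k})$ and likewise for $\mu_2$), and the two processing contributions telescope: the precedence constraint (\ref{coflow:prec:e}) with (\ref{coflow:a}) yields $C_k^*\ge C_{k^-}^*+d_{i,j,k}$, so $(1-\frac{2}{m})C_{k^-}^*+(1-\frac{2}{m})d_{i,j,k}\le(1-\frac{2}{m})C_k^*$. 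This is exactly the mechanism that keeps the release term and the $C_k^*$ term at coefficient one while the congestion term alone picks up the factor $\chi$.

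The main obstacle I anticipate is the per-segment busy-interval bound. The delicate points are that the blocking is spread over two distinct shared resources (the uplink at input $i$ and the downlink at output $j$ of core $h^*$), that release times and precedence-readiness jointly create the idle periods before $A$, and that the scheduling is carried out event-by-event in Algorithm~\ref{Alg1}, so one must argue that no lower-priority flow already in service can delay $(i,j,k)$ beyond the stated load---exactly the subtlety handled in the list-scheduling analysis of Shafiee and Ghaderi that I would adapt. Everything after this bound is essentially bookkeeping along the precedence chain.
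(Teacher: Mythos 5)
Your proposal is correct and follows essentially the same route as the paper's proof: a per-coflow segment bound of the form $\frac{1}{m}d(S_{\mu_1(k),k})+\frac{1}{m}d(S_{\mu_2(k),k})+\left(1-\frac{2}{m}\right)d_{i,j,k}$, accumulated along the critical precedence chain (your induction on the binding predecessor unrolls into the paper's sum over the longest path ending at $k$), using the same monotonicity of $d(S_{\mu_1(\cdot),\cdot})$ and the same telescoping of the per-coflow processing terms into $\left(1-\frac{2}{m}\right)C_k^*$ via constraints (\ref{coflow:b}), (\ref{coflow:prec:e}) and (\ref{coflow:a}). The only substantive difference is that you sketch a derivation of the per-segment busy-interval and load-balancing bound, which the paper simply asserts as its opening inequality for $\hat{C}_k$.
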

\begin{proof}
First, let's consider the case where there is no release time and no precedence constraints. In this case, the completion time bound for each coflow can be expressed by the following inequality:
\begin{eqnarray*}
\hat{C}_{k} & \leq & \frac{1}{m}d(S_{\mu_1(k), k}) + \frac{1}{m}d(S_{\mu_2(k),k})+\left(1-\frac{2}{m}\right) \max_{i, j} d_{i,j,k} \label{lem1:eq1}
\end{eqnarray*}

Now, let $v_1v_2\cdots v_f$ be the longest path of coflow $k$, where $v_f=k$. Then, we can derive the following inequalities:
\begin{eqnarray*}
C_{k}   & \leq & \sum_{q=1}^{f} \hat{C}_{v_q} \label{lem4:eq1}\\
        & \leq & \sum_{q=1}^{f} \frac{1}{m}d(S_{\mu_1(q), q}) + \frac{1}{m}d(S_{\mu_2(q),q}) +\left(1-\frac{2}{m}\right) \max_{i, j} d_{i,j,q} \label{lem1:eq2}\\
		    & \leq & \sum_{q=1}^{f} \frac{1}{m}d(S_{\mu_1(k), k}) + \frac{1}{m}d(S_{\mu_2(k),k}) +\left(1-\frac{2}{m}\right) \max_{i, j} d_{i,j,q} \label{lem1:eq3}\\
		    & =    & \frac{f}{m}d(S_{\mu_1(k), k}) + \frac{f}{m}d(S_{\mu_2(k),k}) +\sum_{q=1}^{f}\left(1-\frac{2}{m}\right) \max_{i, j} d_{i,j,q} \label{lem1:eq4} \\
        & \leq & \frac{f}{m}d(S_{\mu_1(k), k}) + \frac{f}{m}d(S_{\mu_2(k),k})+ \left(1-\frac{2}{m}\right) C_{k}^*. \label{lem1:eq5} 
\end{eqnarray*}
When considering the release time, coflow $k$ is transmitted starting at $\max_{k'\leq k}r_{k'}$ at the latest. This proof confirms the lemma.
\end{proof}

\begin{lem}\label{lem:lem1-2}
If $w_{k'}\geq w_{k}$, $L_{i,k'}\leq L_{i,k}$ and $L_{j,k'}\leq L_{j,k}$ hold for all $(k',k)\in E$, $i \in \mathcal{I}$ and $j \in \mathcal{J}$, then $\gamma_{k', k}=0$ holds for all $k, k'\in \mathcal{K}$.
\end{lem}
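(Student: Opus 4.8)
The plan is to trace exactly when Algorithm~\ref{Alg_dual} assigns a positive value to a $\gamma$ variable and to show that the hypotheses forbid this. Recall that a $\gamma$ variable is touched only inside the $\beta$-increasing branch: after identifying the candidate coflow $k'$ with the minimum $\beta$, the algorithm traverses the chain of unscheduled successors $k'\to\cdots\to t_1$ down to a coflow $t_1$ having no unscheduled successor, fixes $\beta_{t_1}$ equal to the candidate's $\beta$, and then raises the $\gamma$ variables along this chain so that the dual constraint of $t_1$ becomes tight. Hence it suffices to show that, under the stated monotonicity, assigning $\beta_{t_1}$ the candidate's value already makes the constraint of $t_1$ tight, so that no positive $\gamma$ is ever needed.

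I would prove this by induction over the iterations, maintaining the invariant that every $\gamma$ set so far is zero. Fix an iteration entering the $\beta$-branch and let $k'\to\cdots\to t_1$ be the traversed chain, where each coflow is a predecessor of the next in $E$. For an unscheduled coflow $x$, its own $\alpha_{\cdot,\cdot,x}$ has not yet been raised and, by the inductive invariant, its incoming and outgoing $\gamma$ are zero; therefore the current left-hand side of its dual constraint (\ref{coflow:dual:a}) is exactly $\sum_{i}\sum_{S}\beta_{i,S}L_{i,S,x}+\sum_{j}\sum_{S}\beta_{j,S}L_{j,S,x}$. By Observation~\ref{obs:1}, every active $\beta$ lives on a set $S_{\mu_1(\ell),\ell}$ or $S_{\mu_2(\ell),\ell}$ arising from an already-scheduled coflow $\ell$; since in the right-to-left construction every such prefix set contains all still-unscheduled coflows, the chain coflows contribute through their \emph{full} port loads, i.e. $L_{\mu_1(\ell),S,x}=L_{\mu_1(\ell),x}$ and $L_{\mu_2(\ell),S,x}=L_{\mu_2(\ell),x}$.

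Now compare two consecutive chain coflows $(a,b)\in E$. The hypotheses give $w_a\geq w_b$ and $L_{i,a}\leq L_{i,b}$, $L_{j,a}\leq L_{j,b}$ for all ports; feeding the load inequalities into the expression above yields that the accumulated value for $b$ is at least that for $a$, so the slack $w_b-\mathrm{LHS}(b)$ is at most $w_a-\mathrm{LHS}(a)$. Because the increment in this step multiplies each coflow's load at the single port whose $\beta$ is being raised, and the successor's load there is no smaller, the $\beta$ value making $b$ tight is no larger than the one making $a$ tight. Chaining this from $k'$ down to $t_1$ shows that $t_1$'s required $\beta$ is at most the candidate's; but $k'$ was chosen as the minimiser, so it is also at least the candidate's, hence equal. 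Consequently $\beta_{t_1}$ set to the candidate's value already tightens the constraint of $t_1$, the $\gamma$ increment is zero, the invariant is preserved, and therefore $\gamma_{k',k}=0$ for all $k,k'\in\mathcal{K}$.

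The step I expect to be the main obstacle is establishing the monotonicity $\mathrm{LHS}(b)\geq\mathrm{LHS}(a)$ of the accumulated dual left-hand side along a precedence pair $(a,b)$. This rests on two bookkeeping facts that must be pinned down from Algorithm~\ref{Alg_dual}: that no $\alpha$ or $\gamma$ term yet pollutes the left-hand side of an unscheduled coflow, and that (via Observation~\ref{obs:1}) the active $\beta$-sets are precisely the prefix sets $S_{\mu_1(\ell),\ell}$, $S_{\mu_2(\ell),\ell}$ which all contain the still-unscheduled coflows, so that the per-port inequalities $L_{i,a}\leq L_{i,b}$ transfer verbatim to the dual objective. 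Once this containment is verified, the comparison of required $\beta$ values and the conclusion $\gamma\equiv 0$ follow directly.
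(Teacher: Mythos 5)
Your proposal is correct and follows essentially the same approach as the paper's (much terser) proof: both rest on showing that, under the stated monotonicity of weights and port loads, the $\beta$-ratio used in the argmin is non-increasing along every precedence edge, so the successor-chain traversal never forces a positive $\gamma$ increment. Your version simply makes explicit the bookkeeping the paper leaves implicit (the inductive invariant that earlier $\gamma$'s are zero, and that the active $\beta$-sets contain every still-unscheduled coflow with its full port load), which is a faithful expansion rather than a different route.
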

\begin{proof}
Given that $w_{k'}\geq w_{k}$, $L_{i,k'}\leq L_{i,k}$ and $L_{j,k'}\leq L_{j,k}$ hold for all $(k',k)\in E$, $i \in \mathcal{I}$, and $j \in \mathcal{J}$, the $\beta$ value of coflow $k$ is smaller than that of coflow $k'$. As a result, there is no need to order the coflow $k$ by setting $\gamma_{k',k}$.
\end{proof}

\begin{lem}\label{lem:lem2}
For every coflow $k$, $\sum_{i \in \mathcal{I}}\sum_{j \in \mathcal{J}} \alpha_{i, j, k}+\sum_{i \in \mathcal{I}}\sum_{k'\geq k}\beta_{i,k'}L_{i,k}+\sum_{j \in \mathcal{J}}\sum_{k'\geq k}\beta_{j,k'}L_{j,k}+\sum_{(k',k)\in E}\gamma_{k', k}-\sum_{(k,k')\in E}\gamma_{k, k'}= w_{k}$.
\end{lem}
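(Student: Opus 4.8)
The plan is to recognize Lemma~\ref{lem:lem2} as the assertion that, after Algorithm~\ref{Alg_dual} terminates, the dual constraint (\ref{coflow:dual:a}) holds with equality for every coflow $k$. I would first show that the left-hand side of the lemma is exactly the left-hand side of (\ref{coflow:dual:a}). By Observation~\ref{obs:1}(\ref{obs:1-1})--(\ref{obs:1-2}), every nonzero $\beta_{i,S}$ equals $\beta_{\mu_1(k'),k'}=\beta_{i,S_{i,k'}}$ for some coflow $k'$ with $i=\mu_1(k')$, and similarly for the output ports; hence the subset sums $\sum_{S\subseteq\mathcal{F}_i}\beta_{i,S}L_{i,S,k}$ collapse to sums over coflows $k'$. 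Evaluating $L_{i,S_{i,k'},k}$, the flows of coflow $k$ at port $i$ lie in $S_{i,k'}$ exactly when $k\leq k'$, so $L_{i,S_{i,k'},k}=L_{i,k}$ if $k\leq k'$ and $0$ otherwise; this turns the $\beta$ terms into $\sum_{k'\geq k}\beta_{i,k'}L_{i,k}$ and $\sum_{k'\geq k}\beta_{j,k'}L_{j,k}$. Finally, $\sum_{i,j}\gamma_{k',i,j,k}=\gamma_{k',k}$ converts the two $\gamma$ sums into $\sum_{(k',k)\in E}\gamma_{k',k}-\sum_{(k,k')\in E}\gamma_{k,k'}$. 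After these substitutions the lemma's left-hand side coincides with the left-hand side of (\ref{coflow:dual:a}), so it remains to prove that this quantity equals $w_k$.

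Next I would establish the invariant that each coflow's constraint is driven to equality at the unique iteration in which that coflow is scheduled. Since the coflows are relabeled so that $\sigma(k)=k$ and the algorithm fills the schedule from the last position to the first, the relabeling ensures that when coflow $k$ is scheduled the set of unscheduled coflows is $S_k=\{1,\ldots,k\}$, and at that moment the raised dual variable is $\beta_{\mu_1(k),k}$ or $\beta_{\mu_2(k),k}$ (in the $\beta$ branch) or an $\alpha$ variable (in the $\alpha$ branch). In the $\alpha$ branch, $\alpha_{i,j,k}$ is increased until constraint (\ref{coflow:dual:a}) for coflow $k$ is tight, and $\alpha_{i,j,k}$ appears in no other coflow's constraint. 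In the $\beta$/$\gamma$ branch, the algorithm raises $\beta$ on the highest-load port of $S_k$ until some candidate $k'$'s constraint would become tight, then---if $k'$ has unscheduled successors---walks down a precedence chain $k'=u_0\prec u_1\prec\cdots\prec u_\ell=t_1$ to a coflow $t_1$ with no unscheduled successors, and sets the $\beta$ and $\gamma$ variables so that the constraint of the scheduled coflow $t_1$ becomes tight with its $\beta$ matching that of $k'$. In either branch the constraint of the coflow actually scheduled is made tight.

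Finally I would argue that equality, once achieved for coflow $k$, is never destroyed by subsequent iterations, which only process coflows of smaller index. The variables entering $k$'s constraint are $\alpha_{i,j,k}$, the $\beta_{i,k'},\beta_{j,k'}$ with $k'\geq k$, the incoming $\gamma_{k',k}$ with $k'\prec k$, and the outgoing $\gamma_{k,k'}$ with $k\prec k'$. The $\alpha$ variable is set only at iteration $k$; each $\beta_{\cdot,k'}$ with $k'\geq k$ is set at iteration $k'$, i.e. at or before iteration $k$, while later iterations raise only $\beta_{\cdot,k''}$ with $k''<k$, which do not appear in $k$'s constraint. For the $\gamma$ terms I would use that a precedence edge is oriented from smaller to larger index, so the incoming edges $(k',k)$ are final edges of chains terminating at $t_1=k$ and are set at iteration $k$, whereas the outgoing edges $(k,k')$ with $k'>k$ are set at the earlier iterations scheduling those successors; no chain processed after iteration $k$ can touch coflow $k$, since such chains live among coflows of index $<k$. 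The crux---and the step I expect to be the main obstacle---is the bookkeeping of the $\gamma$ variables along the traversed chain: one must show that the algorithm assigns $\gamma$ so that every intermediate coflow $u_i$ with $0<i<\ell$ receives equal incoming and outgoing $\gamma$, contributing a net zero to its constraint (a pass-through), while only the endpoint $t_1$ absorbs a net positive $\gamma$ that, together with its $\beta$, exactly exhausts $w_{t_1}$. Granting this pass-through property, no previously tight constraint is perturbed, each coflow's constraint is tight at termination, and combining this with the reductions of the first paragraph yields the claimed identity.
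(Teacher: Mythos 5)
Your proposal is correct and follows the same underlying idea as the paper's proof, but the paper's argument is a single sentence: a coflow enters the permutation of Algorithm~\ref{Alg_dual} only when its dual constraint (\ref{coflow:dual:a}) becomes tight, and the subset sums collapse to the form stated in the lemma. Everything beyond that one observation --- the reduction of $\sum_{S\subseteq\mathcal{F}_i}\beta_{i,S}L_{i,S,k}$ to $\sum_{k'\geq k}\beta_{i,k'}L_{i,k}$ via Observation~\ref{obs:1}, and especially the argument that tightness, once achieved, is preserved by later iterations --- is left implicit in the paper, and you supply it correctly: later iterations only raise $\alpha$'s of other coflows, $\beta$'s on sets $\mathcal{G}_i$ that no longer contain flows of $k$, and $\gamma$'s on edges whose endpoints are still unscheduled, none of which appear in $k$'s constraint. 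The one place where your account diverges from the algorithm is the ``pass-through'' property you flag as the main obstacle: Algorithm~\ref{Alg_dual} does not distribute $\gamma$ along the whole traversed chain, it walks the chain only to locate the terminal coflow $t_1$ and then sets the \emph{single} variable $\gamma_{t_0,t_1}$ on the last edge. Intermediate coflows therefore receive no $\gamma$ at all in that iteration, so the net-zero contribution you need holds trivially rather than by a cancellation argument, and your conditional step closes without further work. With that clarification your proof is complete and is, in effect, the detailed version of the proof the paper sketches.
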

\begin{proof}
A coflow $k$ is included in the permutation of Algorithm~\ref{Alg_dual} only if the constraint $\sum_{i \in \mathcal{I}}\sum_{j \in \mathcal{J}} \alpha_{i, j, k}+\sum_{i \in \mathcal{I}}\sum_{S\subseteq \mathcal{F}_{i}}\beta_{i,S}L_{i,S,k}+\sum_{j \in \mathcal{J}}\sum_{S\subseteq \mathcal{F}_{j}}\beta_{j,S}L_{j,S,k}+\sum_{(k',k)\in E}\gamma_{k', k}-\sum_{(k,k')\in E}\gamma_{k, k'}\leq w_{k}$ becomes tight for this particular coflow, resulting in $\sum_{i \in \mathcal{I}}\sum_{j \in \mathcal{J}} \alpha_{i, j, k}+\sum_{i \in \mathcal{I}}\sum_{k'\geq k}\beta_{i,k'}L_{i,k}+\sum_{j \in \mathcal{J}}\sum_{k'\geq k}\beta_{j,k'}L_{j,k}+\sum_{(k',k)\in E}\gamma_{k', k}-\sum_{(k,k')\in E}\gamma_{k, k'}= w_{k}$.
\end{proof}

\begin{lem}\label{lem:lem3}
If $w_{k'}\geq w_{k}$, $L_{i,k'}\leq L_{i,k}$ and $L_{j,k'}\leq L_{j,k}$ hold for all $(k',k)\in E$, $i \in \mathcal{I}$ and $j \in \mathcal{J}$, then the total cost of the schedule is bounded as follows.
\begin{eqnarray*}
\sum_{k}w_{k}C_{k} & \leq & \left(a+\frac{2\chi}{\kappa}\right)\sum_{k=1}^{n}\sum_{i \in \mathcal{I}}\sum_{j \in \mathcal{J}} \alpha_{i, j, k}r_{k} \\
                   &      & +2\left(a\cdot \kappa+2\chi\right)\sum_{i \in \mathcal{I}}\sum_{S\subseteq \mathcal{F}_i}\beta_{i,S}f(S) \\
									 &      & +2\left(a\cdot \kappa+2\chi\right)\sum_{j \in \mathcal{J}}\sum_{S\subseteq \mathcal{F}_j}\beta_{j,S}f(S) \\
									 &      & +\left(1-\frac{2}{m}\right)\cdot OPT.
\end{eqnarray*}
\end{lem}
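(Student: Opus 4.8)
The plan is to combine the three earlier lemmas and the two observations, charging the completion time of each coflow against its tight dual constraint. Under the stated hypotheses Lemma~\ref{lem:lem1-2} forces $\gamma_{k',k}=0$ for all $k,k'$, so the tightness identity of Lemma~\ref{lem:lem2} collapses to
\begin{equation*}
w_k = \sum_{i\in\mathcal{I}}\sum_{j\in\mathcal{J}}\alpha_{i,j,k} + \sum_{i\in\mathcal{I}}\sum_{k'\geq k}\beta_{i,k'}L_{i,k} + \sum_{j\in\mathcal{J}}\sum_{k'\geq k}\beta_{j,k'}L_{j,k}.
\end{equation*}
Multiplying by $C_k$ and summing over $k$, I would split $\sum_k w_k C_k$ into an $\alpha$-part, a $\beta_i$-part, and a $\beta_j$-part, and in each part substitute the completion-time bound $C_k \leq a\max_{k'\leq k}r_{k'} + \frac{\chi}{m}\left(d(S_{\mu_1(k),k})+d(S_{\mu_2(k),k})\right) + \left(1-\frac{2}{m}\right)C_k^*$ from Lemma~\ref{lem:lem1}.

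For the $\alpha$-part I would use that $\alpha_{i,j,k}$ is nonzero only as $\alpha_{\mu_1(k),1,k}$ or $\alpha_{1,\mu_2(k),k}$. In that case Observation~\ref{obs:1}, items~\ref{obs:1-5}--\ref{obs:1-7}, give $\max_{k'\leq k}r_{k'}=r_k$ together with $\frac{d(S_{\mu_1(k),k})}{m}<\frac{r_k}{\kappa}$; since the algorithm increments $\alpha$ only at the more heavily loaded of the two ports, the other load is dominated, so $\frac{d(S_{\mu_2(k),k})}{m}<\frac{r_k}{\kappa}$ as well. The bound therefore collapses to $C_k<\left(a+\frac{2\chi}{\kappa}\right)r_k+\left(1-\frac{2}{m}\right)C_k^*$, which produces the first term $\left(a+\frac{2\chi}{\kappa}\right)\sum_k\sum_{i,j}\alpha_{i,j,k}r_k$ of the claim plus a leftover $C_k^*$ contribution.

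For the $\beta_i$-part I would swap the order of summation to $\sum_i\sum_{k'}\beta_{i,k'}\sum_{k\leq k'}C_k L_{i,k}$. By Observation~\ref{obs:1}, item~\ref{obs:1-1}, a nonzero $\beta_{i,k'}$ equals $\beta_{\mu_1(k'),k'}$, so $i=\mu_1(k')$ is the bottleneck input port for the first $k'$ coflows, whence $d(S_{\mu_2(k'),k'})\leq d(S_{\mu_1(k'),k'})=d(S_{i,k'})$. The key monotonicity fact is that cumulative bottleneck loads only grow: for $k\leq k'$ both $d(S_{\mu_1(k),k})$ and $d(S_{\mu_2(k),k})$ are at most $d(S_{i,k'})$; combined with $\sum_{k\leq k'}L_{i,k}=d(S_{i,k'})$ this yields $\sum_{k\leq k'}L_{i,k}\left(d(S_{\mu_1(k),k})+d(S_{\mu_2(k),k})\right)\leq 2\,d(S_{i,k'})^2$. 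The release-time term is controlled by Observation~\ref{obs:1}, item~\ref{obs:1-3}, giving $\max_{k''\leq k}r_{k''}\leq\frac{\kappa}{m}d(S_{i,k'})$ for $k\leq k'$. Applying $d(S_{i,k'})^2\leq 2m\,f(S_{i,k'})$ (Observation~\ref{obs:2}) converts both contributions into multiples of $f(S_{i,k'})$, and a short arithmetic check shows the release-time part contributes $2a\kappa$ and the $\chi$-part contributes $4\chi$, i.e. the coefficient $2(a\kappa+2\chi)$. The $\beta_j$-part is entirely symmetric, using items~\ref{obs:1-2} and~\ref{obs:1-4}.

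Finally I would collect the three $C_k^*$ remainders; each is $\left(1-\frac{2}{m}\right)C_k^*$ times the $\alpha$-, $\beta_i$-, or $\beta_j$-contribution to $w_k$, so after re-swapping summations their total is exactly $\left(1-\frac{2}{m}\right)\sum_k w_k C_k^* = \left(1-\frac{2}{m}\right)OPT$ by Lemma~\ref{lem:lem2} and the optimality of the $C_k^*$. I expect the main obstacle to be the $\chi$-term of the $\beta$-parts: one must verify that the \emph{per-step} bottleneck loads $d(S_{\mu_1(k),k})$ and $d(S_{\mu_2(k),k})$ are all dominated by the single load $d(S_{i,k'})$ at the port being charged, which is precisely where the monotonicity of cumulative loads and the algorithm's rule of charging $\beta$ to the more loaded port are both indispensable.
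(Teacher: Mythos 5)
Your proposal is correct and follows essentially the same route as the paper: decompose $w_k$ via the tight dual constraint of Lemma~\ref{lem:lem2} (with $\gamma\equiv 0$ from Lemma~\ref{lem:lem1-2}), substitute the completion-time bound of Lemma~\ref{lem:lem1}, and bound the $\alpha$- and $\beta$-parts with Observations~\ref{obs:1} and~\ref{obs:2} exactly as you describe. The only cosmetic difference is that the paper peels off the $\left(1-\frac{2}{m}\right)\sum_k w_k C_k^*=\left(1-\frac{2}{m}\right)OPT$ term before decomposing $w_k$, whereas you carry the $C_k^*$ remainders through the three parts and recombine them at the end; the two bookkeeping choices are equivalent.
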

\begin{proof}
By applying Lemma~\ref{lem:lem1}, we have
\begin{eqnarray*}
\sum_{k=1}^{n} w_{k}C_{k} &\leq & \sum_{k=1}^{n} w_{k}\cdot A +\left(1-\frac{2}{m}\right) \sum_{k=1}^{n} w_{k}C_{k}^*
\end{eqnarray*}
where $A=a\cdot \max_{k'\leq k}r_{k'}+\chi\frac{d(S_{\mu_1(k),k})+d(S_{\mu_2(k),k})}{m}$. We have $\sum_{k=1}^{n} w_{k} C_{k}^*=OPT$. Now we focus on the first term $\sum_{k=1}^{n} w_{k}\cdot A$. By applying Lemmas~\ref{lem:lem1-2} and \ref{lem:lem2}, we have
\begin{eqnarray*}
\sum_{k=1}^{n} w_{k}\cdot A & = & \sum_{k=1}^{n} \sum_{i \in \mathcal{I}}\sum_{j \in \mathcal{J}} \alpha_{i, j, k}\cdot A \\
                            &   & +\sum_{k=1}^{n}\sum_{i \in \mathcal{I}}\sum_{k'\geq k}\beta_{i,k'}L_{i,k}\cdot A \\
                            &   & +\sum_{k=1}^{n}\sum_{j \in \mathcal{J}}\sum_{k'\geq k}\beta_{j,k'}L_{j,k} \cdot A
\end{eqnarray*}
Let's begin by bounding $\sum_{k=1}^{n} \sum_{i \in \mathcal{I}}\sum_{j \in \mathcal{J}} \alpha_{i, j, k}\cdot A$.
By applying Observation~\ref{obs:1} parts (\ref{obs:1-5}), (\ref{obs:1-6}) and (\ref{obs:1-7}), we have
\begin{flalign*}
      & \sum_{k=1}^{n}\sum_{i \in \mathcal{I}}\sum_{j \in \mathcal{J}}\alpha_{i, j, k}\cdot A \\
\leq  & \sum_{k=1}^{n}\sum_{i \in \mathcal{I}}\sum_{j \in \mathcal{J}} \alpha_{i, j, k}\left(a\cdot r_{k}+2\chi\cdot \frac{r_{k}}{\kappa}\right) \\
\leq  & \left(a+\frac{2\chi}{\kappa}\right)\sum_{k=1}^{n}\sum_{i \in \mathcal{I}}\sum_{j \in \mathcal{J}} \alpha_{i, j, k}\cdot r_{k}
\end{flalign*}
Now we bound $\sum_{k=1}^{n}\sum_{i \in \mathcal{I}}\sum_{k'\geq k}\beta_{i,k'}L_{i,k}\cdot A$. By applying Observation~\ref{obs:1} part (\ref{obs:1-3}), we have
\begin{flalign*}
      & \sum_{k=1}^{n}\sum_{i \in \mathcal{I}}\sum_{k'\geq k}\beta_{i,k'}L_{i,k}\cdot A\\
\leq  & \sum_{k=1}^{n}\sum_{i \in \mathcal{I}}\sum_{k'\geq k}\beta_{i,k'}L_{i,k}\scriptstyle{\left(a\cdot \max_{\ell\leq k}r_{\ell}+\chi\frac{d(S_{\mu_1(k),k})+d(S_{\mu_2(k),k})}{m}\right)} \\
\leq  & \sum_{k=1}^{n}\sum_{i \in \mathcal{I}}\sum_{k'\geq k}\beta_{i,k'}L_{i,k}\scriptstyle{\left(a\cdot \kappa \cdot \frac{d(S_{\mu_1(k'),k'})}{m} + 2\chi\cdot \frac{d(S_{\mu_1(k),k})}{m}\right)} \\ 
\leq  & \left(a\cdot \kappa+2\chi\right)\sum_{k'=1}^{n}\sum_{i \in \mathcal{I}}\sum_{k\leq k'}\beta_{i,k'}L_{i,k}\frac{d(S_{\mu_1(k'),k'})}{m} \\  
\leq  & \left(a\cdot \kappa+2\chi\right)\sum_{k'=1}^{n}\sum_{i \in \mathcal{I}}\beta_{i,k'}\sum_{k\leq k'}L_{i,k}\frac{d(S_{\mu_1(k'),k'})}{m} \\
=     & \left(a\cdot \kappa+2\chi\right)\sum_{k'=1}^{n}\sum_{i \in \mathcal{I}}\beta_{i,k'}d(S_{i,k'})\frac{d(S_{\mu_1(k'),k'})}{m} \\
\leq  & \left(a\cdot \kappa+2\chi\right)\sum_{k'=1}^{n}\sum_{i \in \mathcal{I}}\beta_{i,k'}\frac{\left(d(S_{\mu_1(k'),k'})\right)^2}{m}
\end{flalign*}
By sequentially applying Observation~\ref{obs:2} and Observation~\ref{obs:1} part (\ref{obs:1-1}), we can upper bound this expression by
\begin{flalign*}
   & 2\left(a\cdot \kappa+2\chi\right)\sum_{i \in \mathcal{I}}\sum_{k=1}^{n}\beta_{i,k}f(S_{\mu_1(k),k}) \\
=  & 2\left(a\cdot \kappa+2\chi\right)\sum_{k=1}^{n}\beta_{\mu_1(k),k}f(S_{\mu_1(k),k}) \\
\leq  & 2\left(a\cdot \kappa+2\chi\right)\sum_{i \in \mathcal{I}}\sum_{S\subseteq \mathcal{F}_i}\beta_{i,S}f(S)
\end{flalign*}
By Observation~\ref{obs:2} and Observation~\ref{obs:1} parts (\ref{obs:1-2}) and (\ref{obs:1-4}), we also can obtain 
\begin{flalign*}
      & \sum_{k=1}^{n}\sum_{j \in \mathcal{J}}\sum_{k'\geq k}\beta_{j,k'}L_{j,k} \cdot A \\
\leq  & 2\left(a\cdot \kappa+2\chi\right)\sum_{j \in \mathcal{J}}\sum_{S\subseteq \mathcal{F}_j}\beta_{j,S}f(S)
\end{flalign*}
Therefore,
\begin{eqnarray*}
\sum_{k}w_{k}C_{k} & \leq & \left(a+\frac{2\chi}{\kappa}\right)\sum_{k=1}^{n}\sum_{i \in \mathcal{I}}\sum_{j \in \mathcal{J}} \alpha_{i, j, k}r_{k} \\
                   &      & +2\left(a\cdot \kappa+2\chi\right)\sum_{i \in \mathcal{I}}\sum_{S\subseteq \mathcal{F}_i}\beta_{i,S}f(S) \\
									 &      & +2\left(a\cdot \kappa+2\chi\right)\sum_{j \in \mathcal{J}}\sum_{S\subseteq \mathcal{F}_j}\beta_{j,S}f(S) \\
									 &      & +\left(1-\frac{2}{m}\right)\cdot OPT.
\end{eqnarray*}
\end{proof}

\begin{thm}\label{thm:thm1}
If $w_{k'}\geq w_{k}$, $L_{i,k'}\leq L_{i,k}$ and $L_{j,k'}\leq L_{j,k}$ hold for all $(k',k)\in E$, $i \in \mathcal{I}$ and $j \in \mathcal{J}$, then there exists a deterministic, combinatorial, polynomial time algorithm that achieves an approximation ratio of $4\chi+2-\frac{2}{m}$ for the flow-level scheduling problem with release times.
\end{thm}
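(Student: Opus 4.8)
The plan is to read the result off almost directly from Lemma~\ref{lem:lem3}, specialized to the release-time regime $a=1$, and then to collapse the three dual-weighted sums into a single multiple of $OPT$ via weak duality. The only genuine design decision left is the value of the constant $\kappa$, and the whole theorem reduces to choosing $\kappa$ so that the two coefficients produced by Lemma~\ref{lem:lem3} coincide.

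First I would observe that the hypotheses $w_{k'}\ge w_k$, $L_{i,k'}\le L_{i,k}$, $L_{j,k'}\le L_{j,k}$ for all $(k',k)\in E$ are exactly those of Lemmas~\ref{lem:lem1-2} and~\ref{lem:lem3}, so both apply and all $\gamma_{k',k}$ vanish. Setting $a=1$ in Lemma~\ref{lem:lem3} then gives
\begin{align*}
\sum_{k}w_{k}C_{k} \le{} & \left(1+\frac{2\chi}{\kappa}\right)\sum_{k}\sum_{i}\sum_{j}\alpha_{i,j,k}r_{k}
+ 2(\kappa+2\chi)\sum_{i}\sum_{S\subseteq\mathcal{F}_i}\beta_{i,S}f(S) \\
& {}+ 2(\kappa+2\chi)\sum_{j}\sum_{S\subseteq\mathcal{F}_j}\beta_{j,S}f(S)
+ \left(1-\frac{2}{m}\right)OPT .
\end{align*}

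Next I would bound the first three sums by $OPT$ using weak duality. Since every $\gamma$ is zero, the dual objective of~(\ref{coflow:dual}) is just $\sum_{k,i,j}\alpha_{i,j,k}(r_k+d_{i,j,k})+\sum_{i,S}\beta_{i,S}f(S)+\sum_{j,S}\beta_{j,S}f(S)$. Because $r_k+d_{i,j,k}\ge r_k$ and all dual variables are nonnegative, dropping the $d_{i,j,k}$ term only decreases the expression, so $\sum_{k,i,j}\alpha_{i,j,k}r_k+\sum_{i,S}\beta_{i,S}f(S)+\sum_{j,S}\beta_{j,S}f(S)$ is at most the dual objective. The dual solution produced by Algorithm~\ref{Alg_dual} is feasible, and by weak duality (the cost of any feasible dual solution lower-bounds $OPT$) this quantity is at most $OPT$.

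Finally I would choose $\kappa$ to equalize the two coefficients $1+\frac{2\chi}{\kappa}$ and $2(\kappa+2\chi)$; taking $\kappa=\frac12$ makes both equal to $1+4\chi$. Consequently the first three sums together contribute at most $(1+4\chi)\,OPT$, and adding the last term yields $\sum_k w_k C_k \le (1+4\chi)\,OPT + (1-\frac{2}{m})\,OPT = (4\chi+2-\frac{2}{m})\,OPT$, which is the claimed ratio. I expect essentially all of the substantive analysis to have been absorbed already into Lemma~\ref{lem:lem3}; the remaining obstacle here is only the care needed in the weak-duality step, namely checking that discarding the $d_{i,j,k}$ from the $\alpha$-coefficient and the vanishing $\gamma$ contribution leaves a quantity still bounded by $OPT$, together with the one-line optimization $\kappa=\tfrac12$.
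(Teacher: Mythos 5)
Your proposal is correct and follows essentially the same route as the paper: apply Lemma~\ref{lem:lem3} with $a=1$, set $\kappa=\tfrac12$ to equalize the two coefficients at $4\chi+1$, and absorb the three dual-weighted sums into $(4\chi+1)\,OPT$ via weak duality. If anything, you make explicit the final step that the paper leaves implicit, namely that dropping the nonnegative $d_{i,j,k}$ term from the $\alpha$-coefficient and the vanishing $\gamma$ contribution still leaves a quantity bounded by the feasible dual objective and hence by $OPT$.
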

\begin{proof}
To schedule coflows without release times, the application of Lemma~\ref{lem:lem3} (with $a = 1$) indicates the following:
\begin{eqnarray*}
\sum_{k}w_{k}C_{k} & \leq & \left(1+\frac{2\chi}{\kappa}\right)\sum_{k=1}^{n}\sum_{i \in \mathcal{I}}\sum_{j \in \mathcal{J}} \alpha_{i, j, k}r_{k} \\
                   &      & +2\left(\kappa+2\chi\right)\sum_{i \in \mathcal{I}}\sum_{S\subseteq \mathcal{F}_i}\beta_{i,S}f(S) \\
									 &      & +2\left(\kappa+2\chi\right)\sum_{j \in \mathcal{J}}\sum_{S\subseteq \mathcal{F}_j}\beta_{j,S}f(S) \\
									 &      & +\left(1-\frac{2}{m}\right)\cdot OPT.
\end{eqnarray*}									
In order to minimize the approximation ratio, we can substitute $\kappa=\frac{1}{2}$ and obtain the following result:
\begin{eqnarray*}
\sum_{k}w_{k}C_{k} & \leq & (4\chi+1)\sum_{k=1}^{n}\sum_{i \in \mathcal{I}}\sum_{j \in \mathcal{J}} \alpha_{i, j, k}r_{k} \\
                   &      & +(4\chi+1)\sum_{i \in \mathcal{I}}\sum_{S\subseteq \mathcal{F}_i}\beta_{i,S}f(S) \\
									 &      & +(4\chi+1)\sum_{j \in \mathcal{J}}\sum_{S\subseteq \mathcal{F}_j}\beta_{j,S}f(S) \\
									 &      & +\left(1-\frac{2}{m}\right)\cdot OPT \\
									 & \leq & \left(4\chi+2-\frac{2}{m}\right)\cdot OPT.
\end{eqnarray*}									
\end{proof}

\begin{thm}\label{thm:thm2}
If $w_{k'}\geq w_{k}$, $L_{i,k'}\leq L_{i,k}$ and $L_{j,k'}\leq L_{j,k}$ hold for all $(k',k)\in E$, $i \in \mathcal{I}$ and $j \in \mathcal{J}$, then there exists a deterministic, combinatorial, polynomial time algorithm that achieves an approximation ratio of $4\chi+1-\frac{2}{m}$ for the flow-level scheduling problem without release times.
\end{thm}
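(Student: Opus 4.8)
The plan is to specialize Lemma~\ref{lem:lem3} to the release-time-free setting, paralleling the proof of Theorem~\ref{thm:thm1} but with $a=0$. Since the instance has no release times, $r_k=0$ for every coflow $k$, so the entire first term $\left(a+\frac{2\chi}{\kappa}\right)\sum_{k}\sum_{i}\sum_{j}\alpha_{i,j,k}r_{k}$ vanishes identically. Moreover, the primal-dual algorithm never enters its $\alpha$-branch, since the guard $r_k>\kappa L_{\mu_1(r)}/m$ cannot fire when $r_k=0$, so in fact $\alpha_{i,j,k}=0$ throughout. Under the hypotheses $w_{k'}\geq w_{k}$, $L_{i,k'}\leq L_{i,k}$ and $L_{j,k'}\leq L_{j,k}$ for all $(k',k)\in E$, Lemma~\ref{lem:lem1-2} further gives $\gamma_{k',k}=0$.

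With $a=0$ the surviving coefficients in Lemma~\ref{lem:lem3} collapse, since $2(a\kappa+2\chi)=4\chi$ for both the input-side and output-side $\beta$ sums. Hence I would obtain
\[
\sum_{k} w_{k} C_{k} \leq 4\chi\left(\sum_{i\in\mathcal{I}}\sum_{S\subseteq\mathcal{F}_i}\beta_{i,S}f(S)+\sum_{j\in\mathcal{J}}\sum_{S\subseteq\mathcal{F}_j}\beta_{j,S}f(S)\right)+\left(1-\frac{2}{m}\right)OPT.
\]
Note that no optimization over $\kappa$ is needed here --- unlike in Theorem~\ref{thm:thm1}, where the surviving $r_k$ term forced the choice $\kappa=\frac{1}{2}$ --- which is exactly why the bound comes out one additive unit smaller.

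The final step is weak duality. Because $\alpha=0$ and $\gamma=0$, the objective of the dual program~(\ref{coflow:dual}) reduces to precisely $\sum_{i}\sum_{S\subseteq\mathcal{F}_i}\beta_{i,S}f(S)+\sum_{j}\sum_{S\subseteq\mathcal{F}_j}\beta_{j,S}f(S)$, and since the constructed $(\alpha,\beta,\gamma)$ is dual-feasible, this quantity is a lower bound on $OPT$. Substituting this bound into the displayed inequality gives $\sum_{k} w_{k} C_{k}\leq 4\chi\cdot OPT+\left(1-\frac{2}{m}\right)OPT=\left(4\chi+1-\frac{2}{m}\right)OPT$, as claimed. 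The argument is essentially a corollary of Lemma~\ref{lem:lem3}; the only point requiring care is confirming that the $\alpha$- and $\gamma$-contributions to the dual objective genuinely drop out, so that the weak-duality lower bound consists solely of the two $\beta$-sums that appear with coefficient $4\chi$.
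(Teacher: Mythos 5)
Your proposal is correct and follows essentially the same route as the paper: apply Lemma~\ref{lem:lem3} with $a=0$ and then invoke weak duality to absorb the surviving $\beta$-sums into $4\chi\cdot OPT$. Your additional observations that $\alpha$ and $\gamma$ vanish identically (so the dual objective is exactly the two $\beta$-sums, and no choice of $\kappa$ is actually needed) are valid refinements but do not change the structure of the argument.
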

\begin{proof}
To schedule coflows without release times, the application of Lemma~\ref{lem:lem3} (with $a = 0$) indicates the following:
\begin{eqnarray*}
\sum_{k}w_{k}C_{k} & \leq & \left(\frac{2\chi}{\kappa}\right)\sum_{k=1}^{n}\sum_{i \in \mathcal{I}}\sum_{j \in \mathcal{J}} \alpha_{i, j, k}r_{k} \\
                   &      & +2\cdot 2\chi\sum_{i \in \mathcal{I}}\sum_{S\subseteq \mathcal{F}_i}\beta_{i,S}f(S) \\
									 &      & +2\cdot 2\chi\sum_{j \in \mathcal{J}}\sum_{S\subseteq \mathcal{F}_j}\beta_{j,S}f(S) \\
									 &      & +\left(1-\frac{2}{m}\right)\cdot OPT.
\end{eqnarray*}									
In order to minimize the approximation ratio, we can substitute $\kappa=\frac{1}{2}$ and obtain the following result:
\begin{eqnarray*}
\sum_{k}w_{k}C_{k} & \leq & 4\chi\sum_{k=1}^{n}\sum_{i \in \mathcal{I}}\sum_{j \in \mathcal{J}} \alpha_{i, j, k}r_{k} \\
                   &      & +4\chi\sum_{i \in \mathcal{I}}\sum_{S\subseteq \mathcal{F}_i}\beta_{i,S}f(S) \\
									 &      & +4\chi\sum_{j \in \mathcal{J}}\sum_{S\subseteq \mathcal{F}_j}\beta_{j,S}f(S) \\
									 &      & +\left(1-\frac{2}{m}\right)\cdot OPT \\
									 & \leq & \left(4\chi+1-\frac{2}{m}\right)\cdot OPT.
\end{eqnarray*}									
\end{proof}

\begin{lem}\label{lem:lem1-1}
If $L_{i,k'}\leq L_{i,k}$ and $L_{j,k'}\leq L_{j,k}$ hold for all $(k',k)\in E$, $i \in \mathcal{I}$ and $j \in \mathcal{J}$, then the inequality $\sum_{k'\in\mathcal{K}|(k',k)\in E}\gamma_{k',k}-\sum_{k'\in\mathcal{K}|(k,k')\in E}\gamma_{k,k'}\leq (R-1)(\sum_{i \in \mathcal{I}}\sum_{j \in \mathcal{J}} \alpha_{i, j, k}+\sum_{i \in \mathcal{I}}\sum_{k'\geq k}\beta_{i,k'}L_{i,k}+\sum_{j \in \mathcal{J}}\sum_{k'\geq k}\beta_{j,k'}L_{j,k})$ holds for all $k\in \mathcal{K}$.
\end{lem}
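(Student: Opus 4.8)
The plan is to reduce the statement to a single lower bound on the non-$\gamma$ portion of coflow $k$'s tight dual constraint, and then use that all weights lie within a factor $R$ of one another. Set
\[
P_k = \sum_{i \in \mathcal{I}}\sum_{j \in \mathcal{J}} \alpha_{i, j, k}+\sum_{i \in \mathcal{I}}\sum_{k'\geq k}\beta_{i,k'}L_{i,k}+\sum_{j \in \mathcal{J}}\sum_{k'\geq k}\beta_{j,k'}L_{j,k},
\]
and $\Gamma_k=\sum_{(k',k)\in E}\gamma_{k',k}-\sum_{(k,k')\in E}\gamma_{k,k'}$, so that Lemma~\ref{lem:lem2} states exactly $P_k+\Gamma_k=w_k$. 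Since the right-hand side of the target inequality is $(R-1)$ times this same $P_k$, substituting $\Gamma_k=w_k-P_k$ shows that $\Gamma_k\leq(R-1)P_k$ is equivalent to $w_k\leq R\,P_k$, i.e.\ to $P_k\geq w_k/R$. As a first reduction I would dispose of the trivial case $\Gamma_k\leq 0$ separately, where $R\geq 1$ and $P_k\geq 0$ give $(R-1)P_k\geq 0\geq\Gamma_k$ at once; it then remains only to bound those coflows that acquire strictly positive net $\gamma$.

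For such a coflow, the positive net $\gamma$ originates from $k$ serving as the terminal coflow $t_1$ in a $\beta$-increase step of Algorithm~\ref{Alg_dual}. That step selects a candidate $k'$ of minimum $\beta$, descends a precedence chain $k'=u_0\prec u_1\prec\cdots\prec u_p=k$ of currently unscheduled successors, and then fixes the chain's $\gamma$ variables together with $\beta_k$ so that (i) the dual constraint of $k$ becomes tight and (ii) $\beta_k$ equals the candidate's value. I would extract two facts from this construction: the candidate's own constraint is tight at that value, contributing $P_{k'}=w_{k'}$ from its $\alpha$ and $\beta$ terms; and the hypotheses $L_{i,k'}\leq L_{i,k}$, $L_{j,k'}\leq L_{j,k}$ propagate transitively along the chain, so the ancestor $k'$ has weakly smaller port loads than $k$.

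Combining these, because the same $\beta$ value is now applied to the weakly larger loads of $k$, coflow $k$'s own payment dominates the candidate's, yielding $P_k\geq P_{k'}=w_{k'}\geq w_{\min}$; the output-port case is symmetric through $\mu_2(\cdot)$. Finally, every weight obeys $w_k\leq w_{\max}=R\,w_{\min}\leq R\,P_k$, which is precisely $P_k\geq w_k/R$ and hence the claim.

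The step I expect to be hardest is establishing $P_k\geq w_{\min}$ rigorously rather than asserting it. Two complications must be reconciled: the $\beta$-sums defining $P_k$ and $P_{k'}$ range over different index sets ($\sum_{k''\geq k}$ versus $\sum_{k''\geq k'}$, with $k'$ scheduled before $k$), so a naive term-by-term comparison does not apply; and $\gamma$ is spread over an entire precedence chain, so the picture in which the candidate ``pays'' $w_{k'}$ and the endpoint inherits it must be reconciled with the cancellation of the $\gamma$ contributions at the intermediate coflows $u_1,\dots,u_{p-1}$. Here I would lean on Observation~\ref{obs:1} to restrict attention to the few nonzero $\beta$ variables (those of the form $\beta_{\mu_1(k),k}$ and $\beta_{\mu_2(k),k}$) and on the load monotonicity to compare the surviving terms, which is the technical core of the argument.
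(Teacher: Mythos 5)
Your overall strategy coincides with the paper's: rewrite the target via Lemma~\ref{lem:lem2} as $P_k\geq w_k/R$, dispose of the case of nonpositive net $\gamma$, and, for a coflow $k$ that is the terminal $t_1$ of a $\beta$-step in Algorithm~\ref{Alg_dual}, compare its dual payment to that of the skipped candidate $p$, closing with the weight ratio. The paper carries this out by the explicit identity $H=w_k-B-\frac{L_{\mu_1(r),k}}{L_{\mu_1(r),p}}\left(w_p-B_p-H_p\right)$ and the chain $L_{\cdot,p}\leq L_{\cdot,k}$ (transitively along the precedence chain), $B_p\leq B$, $H_p\leq 0$, giving $H\leq w_k-w_p$, i.e.\ $P_k\geq w_p$, and then $H\leq\frac{R'-1}{R'}w_k\leq\frac{R-1}{R}w_k$ with $R'=w_k/w_p$, which is your $w_k\leq Rw_p\leq RP_k$.

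The one step of yours that does not hold as stated is the identification $P_{k'}=w_{k'}$ for the candidate. The algorithm never makes the candidate's constraint tight at this iteration: it only computes the ratio $f_1$ that \emph{would} make it tight, descends to $t_1$, and makes $t_1$'s constraint tight at $\beta=f_1$ by charging the residual to the single variable $\gamma_{t_0,t_1}$ (only the last edge of the chain receives a $\gamma$, so there is no cancellation at intermediate coflows to reconcile). The candidate stays unscheduled and may later become the terminal of another chain, acquiring $\Gamma_{k'}>0$ and hence $P_{k'}=w_{k'}-\Gamma_{k'}<w_{k'}$; your inequality $P_k\geq P_{k'}$ is also suspect because the two $\beta$-sums range over different index sets, as you note. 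The repair is to compare against the candidate's ratio numerator \emph{at the time of the iteration}: with $B_p$ its $\beta$-payment under the already-set dual variables and $H_p\leq 0$ its net $\gamma$ at that moment, $P_k=B+f_1L_{\mu_1(r),k}\geq B_p+f_1L_{\mu_1(r),p}=B_p+\left(w_p-B_p-H_p\right)=w_p-H_p\geq w_p$, using exactly the transitive load monotonicity you invoke. This sidesteps both of the difficulties you flag, because the comparison is made under the same (already-fixed) $\beta$ variables rather than between $P_k$ and the final $P_{k'}$; once $P_k\geq w_p$ is in hand, your concluding step is the paper's.
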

\begin{proof}
We demonstrate the case of $L_{\mu_1(r)}>L_{\mu_2(r)}$, while the other case of $L_{\mu_1(r)}\leq L_{\mu_2(r)}$ can be obtained using the same approach, yielding the same result. If coflow $k$ does not undergo the adjustment of the order by setting $\gamma_{k',k}$, then $\sum_{k'\in\mathcal{K}|(k',k)\in E}\gamma_{k',k}-\sum_{k'\in\mathcal{K}|(k,k')\in E}\gamma_{k,k'}\leq 0$.

Suppose coflow $p$ is replaced by coflow $k$ through the adjustment of $\gamma_{k',k}$.
Let
\begin{align*}
& B=\sum_{i \in \mathcal{I}}\sum_{k'\geq k}\beta_{i,k'}L_{i,k}+\sum_{j \in \mathcal{J}}\sum_{k'\geq k}\beta_{j,k'}L_{j,k}, \\
& B_{p}=\sum_{i \in \mathcal{I}}\sum_{k'\geq k}\beta_{i,k'}L_{i,p}+\sum_{j \in \mathcal{J}}\sum_{k'\geq k}\beta_{j,k'}L_{j,p}, \\
& H=\sum_{k'\in\mathcal{K}|(k',k)\in E}\gamma_{k',k}-\sum_{k'\in\mathcal{K}|(k,k')\in E}\gamma_{k,k'}, \\
& H_{p}=\sum_{k'\in\mathcal{K}|(k',p)\in E}\gamma_{k',p}-\sum_{k'\in\mathcal{K}|(p,k')\in E}\gamma_{p,k'}, \\
& R'=\frac{w_{k}}{w_{p}}.
\end{align*}
If coflow $k$ undergoes the adjustment of the order by setting $\gamma_{k',k}$, then
\begin{eqnarray}
 H & =    & w_{k}-B-\frac{L_{i,k}}{L_{i,p}}(w_{p}-B_{p}-H_{p}) \\
   & \leq & w_{k}-B-w_{p}+B_{p}+H_{p} \label{lem1-1:eq1}\\
   & \leq & w_{k}-w_{p}+H_{p} \label{lem1-1:eq2}\\
   & \leq & w_{k}-w_{p} \label{lem1-1:eq3}\\
	 & =    & \frac{R'-1}{R'}w_{k} \label{lem1-1:eq4}\\
	 & \leq & \frac{R-1}{R}w_{k} \label{lem1-1:eq5}
\end{eqnarray}
The inequalities (\ref{lem1-1:eq1}) and (\ref{lem1-1:eq2}) are due to $L_{i,p}\leq L_{i,k}$ for all $i \in \mathcal{I}$. The inequality (\ref{lem1-1:eq3}) is due to
$H_{p}\leq 0$. Based on Lemma~\ref{lem:lem2}, we know that $\sum_{i \in \mathcal{I}}\sum_{j \in \mathcal{J}} \alpha_{i, j, k}+\sum_{i \in \mathcal{I}}\sum_{k'\geq k}\beta_{i,k'}L_{i,k}+\sum_{j \in \mathcal{J}}\sum_{k'\geq k}\beta_{j,k'}L_{j,k}+\sum_{(k',k)\in E}\gamma_{k', k}-\sum_{(k,k')\in E}\gamma_{k, k'}= w_{k}$.
Thus, we obtain: 
$H \leq  (R-1)(\sum_{i \in \mathcal{I}}\sum_{j \in \mathcal{J}} \alpha_{i, j, k}+\sum_{i \in \mathcal{I}}\sum_{k'\geq k}\beta_{i,k'}L_{i,k}+\sum_{j \in \mathcal{J}}\sum_{k'\geq k}\beta_{j,k'}L_{j,k})$.
This proof confirms the lemma.
\end{proof}

\begin{lem}\label{lem:lem3-1}
If $L_{i,k'}\leq L_{i,k}$ and $L_{j,k'}\leq L_{j,k}$ hold for all $(k',k)\in E$, $i \in \mathcal{I}$ and $j \in \mathcal{J}$, then the total cost of the schedule is bounded as follows.
\begin{eqnarray*}
\sum_{k}w_{k}C_{k} & \leq & R\left(a+\frac{2\chi}{\kappa}\right)\sum_{k=1}^{n}\sum_{i \in \mathcal{I}}\sum_{j \in \mathcal{J}} \alpha_{i, j, k}r_{k} \\
                   &      & +2R\left(a\cdot \kappa+2\chi\right)\sum_{i \in \mathcal{I}}\sum_{S\subseteq \mathcal{F}_i}\beta_{i,S}f(S) \\
									 &      & +2R\left(a\cdot \kappa+2\chi\right)\sum_{j \in \mathcal{J}}\sum_{S\subseteq \mathcal{F}_j}\beta_{j,S}f(S) \\
									 &      & +\left(1-\frac{2}{m}\right)\cdot OPT.
\end{eqnarray*}
\end{lem}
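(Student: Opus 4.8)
The plan is to follow the proof of Lemma~\ref{lem:lem3} almost verbatim, replacing the single place where the weight hypothesis $w_{k'}\geq w_{k}$ was exploited. As in that proof, I would first invoke Lemma~\ref{lem:lem1} to reduce to the first term, obtaining
\begin{eqnarray*}
\sum_{k=1}^{n} w_{k}C_{k} &\leq & \sum_{k=1}^{n} w_{k}\cdot A +\left(1-\frac{2}{m}\right)\cdot OPT,
\end{eqnarray*}
where $A=a\cdot \max_{k'\leq k}r_{k'}+\chi\frac{d(S_{\mu_1(k),k})+d(S_{\mu_2(k),k})}{m}\geq 0$, and then concentrate on bounding $\sum_{k} w_{k}\cdot A$.

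The essential difference is that, in the absence of the assumption $w_{k'}\geq w_{k}$, Lemma~\ref{lem:lem1-2} no longer forces the $\gamma$ variables to vanish, so I cannot discard them as was done in Lemma~\ref{lem:lem3}. Instead, abbreviating $P_{k}=\sum_{i \in \mathcal{I}}\sum_{j \in \mathcal{J}} \alpha_{i, j, k}+\sum_{i \in \mathcal{I}}\sum_{k'\geq k}\beta_{i,k'}L_{i,k}+\sum_{j \in \mathcal{J}}\sum_{k'\geq k}\beta_{j,k'}L_{j,k}$, Lemma~\ref{lem:lem2} gives $w_{k}=P_{k}+H$ with $H=\sum_{(k',k)\in E}\gamma_{k',k}-\sum_{(k,k')\in E}\gamma_{k,k'}$. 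The crucial step is to apply Lemma~\ref{lem:lem1-1}, which yields $H\leq (R-1)P_{k}$, and hence $w_{k}\leq R\cdot P_{k}$ for every coflow $k$ (using that $P_{k}\geq 0$ since all $\alpha$, $\beta$ and loads are nonnegative). Because $A\geq 0$, this immediately gives
\begin{eqnarray*}
\sum_{k=1}^{n} w_{k}\cdot A &\leq & R\sum_{k=1}^{n} P_{k}\cdot A.
\end{eqnarray*}

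The remaining work is then identical to Lemma~\ref{lem:lem3}. Expanding $\sum_{k} P_{k}\cdot A$ into its $\alpha$-contribution and its input- and output-$\beta$ contributions, I would bound the $\alpha$-term by $\left(a+\frac{2\chi}{\kappa}\right)\sum_{k}\sum_{i,j}\alpha_{i,j,k}r_{k}$ via Observation~\ref{obs:1} parts (\ref{obs:1-5}), (\ref{obs:1-6}), (\ref{obs:1-7}); the input-$\beta$ term by $2(a\cdot\kappa+2\chi)\sum_{i\in\mathcal{I}}\sum_{S\subseteq\mathcal{F}_i}\beta_{i,S}f(S)$ via Observation~\ref{obs:1} parts (\ref{obs:1-1}), (\ref{obs:1-3}) together with Observation~\ref{obs:2}; and the output-$\beta$ term symmetrically by $2(a\cdot\kappa+2\chi)\sum_{j\in\mathcal{J}}\sum_{S\subseteq\mathcal{F}_j}\beta_{j,S}f(S)$ via parts (\ref{obs:1-2}), (\ref{obs:1-4}) and Observation~\ref{obs:2}. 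Multiplying each of these three bounds by the factor $R$ extracted above produces exactly the claimed inequality.

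The only genuinely new ingredient, and the main obstacle, is the control of the net $\gamma$ contribution through Lemma~\ref{lem:lem1-1}; once the per-coflow inequality $w_{k}\leq R\cdot P_{k}$ is established, the rest of the argument is a direct transcription of the proof of Lemma~\ref{lem:lem3} carrying an extra multiplicative factor of $R$ on the $\alpha$ and $\beta$ terms.
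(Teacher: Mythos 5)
Your proposal is correct and follows essentially the same route as the paper: the paper's proof likewise combines Lemma~\ref{lem:lem1-1} with Lemma~\ref{lem:lem2} to conclude that $w_k$ is at most $R$ times the $\alpha$--$\beta$ part of the tight dual constraint, and then repeats the argument of Lemma~\ref{lem:lem3} with the extra factor $R$. Your write-up merely makes explicit the decomposition $w_k = P_k + H$ and the nonnegativity facts that the paper leaves implicit.
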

\begin{proof}
According to lemma~\ref{lem:lem1-1}, we have
$\sum_{i \in \mathcal{I}}\sum_{j \in \mathcal{J}} \alpha_{i, j, k}+\sum_{i \in \mathcal{I}}\sum_{k'\geq k}\beta_{i,k'}L_{i,k}+\sum_{j \in \mathcal{J}}\sum_{k'\geq k}\beta_{j,k'}L_{j,k}+\sum_{k'\in\mathcal{K}|(k',k)\in E}\gamma_{k',k}-\sum_{k'\in\mathcal{K}|(k,k')\in E}\gamma_{k,k'}\leq R(\sum_{i \in \mathcal{I}}\sum_{j \in \mathcal{J}} \alpha_{i, j, k}+\sum_{i \in \mathcal{I}}\sum_{k'\geq k}\beta_{i,k'}L_{i,k}+\sum_{j \in \mathcal{J}}\sum_{k'\geq k}\beta_{j,k'}L_{j,k})$ holds for all $k\in \mathcal{K}$.
Then, following a similar proof to lemma~\ref{lem:lem3}, we can derive result
\begin{eqnarray*}
\sum_{k}w_{k}C_{k} & \leq & R\left(a+\frac{2\chi}{\kappa}\right)\sum_{k=1}^{n}\sum_{i \in \mathcal{I}}\sum_{j \in \mathcal{J}} \alpha_{i, j, k}r_{k} \\
                   &      & +2R\left(a\cdot \kappa+2\chi\right)\sum_{i \in \mathcal{I}}\sum_{S\subseteq \mathcal{F}_i}\beta_{i,S}f(S) \\
									 &      & +2R\left(a\cdot \kappa+2\chi\right)\sum_{j \in \mathcal{J}}\sum_{S\subseteq \mathcal{F}_j}\beta_{j,S}f(S) \\
									 &      & +\left(1-\frac{2}{m}\right)\cdot OPT.
\end{eqnarray*}
\end{proof}

By employing analogous proof techniques to theorems~\ref{thm:thm1} and \ref{thm:thm2}, we can establish the validity of the following two theorems:
\begin{thm}\label{thm:thm1-1}
If $L_{i,k'}\leq L_{i,k}$ and $L_{j,k'}\leq L_{j,k}$ hold for all $(k',k)\in E$, $i \in \mathcal{I}$ and $j \in \mathcal{J}$, then there exists a deterministic, combinatorial, polynomial time algorithm that achieves an approximation ratio of $4R\chi+R+1-\frac{2}{m}$ for the flow-level scheduling problem with release times.
\end{thm}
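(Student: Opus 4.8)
The plan is to reproduce the argument of Theorem~\ref{thm:thm1} essentially verbatim, but substituting Lemma~\ref{lem:lem3-1} (which carries the extra factor $R$ and drops the weight-monotonicity hypothesis) in place of Lemma~\ref{lem:lem3}. Since the statement concerns the setting \emph{with} release times, I would instantiate Lemma~\ref{lem:lem3-1} with $a=1$, obtaining
\[
\sum_{k}w_{k}C_{k} \leq R\!\left(1+\tfrac{2\chi}{\kappa}\right)\!\sum_{k=1}^{n}\sum_{i \in \mathcal{I}}\sum_{j \in \mathcal{J}} \alpha_{i, j, k}r_{k} + 2R(\kappa+2\chi)\!\sum_{i \in \mathcal{I}}\sum_{S\subseteq \mathcal{F}_i}\!\beta_{i,S}f(S) + 2R(\kappa+2\chi)\!\sum_{j \in \mathcal{J}}\sum_{S\subseteq \mathcal{F}_j}\!\beta_{j,S}f(S) + \left(1-\tfrac{2}{m}\right)\!OPT.
\]

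The next step is to pick $\kappa$ so that the three dual-term coefficients coincide, which is the ratio-minimizing choice (the same optimization as in Theorem~\ref{thm:thm1}). Taking $\kappa=\tfrac{1}{2}$ turns the $\alpha$-coefficient $R(1+2\chi/\kappa)$ into $R(4\chi+1)$ and each $\beta$-coefficient $2R(\kappa+2\chi)$ into $R(4\chi+1)$ as well, so all three terms share the common factor $R(4\chi+1)$. I would then invoke weak duality: the bracketed sum $\sum_k\sum_i\sum_j \alpha_{i,j,k}r_k + \sum_i\sum_S \beta_{i,S}f(S) + \sum_j\sum_S \beta_{j,S}f(S)$ is term-by-term at most the full dual objective of~(\ref{coflow:dual}), because the omitted contributions $\sum \alpha_{i,j,k}d_{i,j,k}$ and $\sum \gamma_{k',i,j,k}d_{i,j,k}$ are nonnegative by constraints~(\ref{coflow:dual:b}) and~(\ref{coflow:dual:e}). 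As the primal-dual algorithm outputs a feasible dual solution whose value lower-bounds $OPT$, this bracketed sum is at most $OPT$, and substituting yields $\sum_k w_k C_k \leq R(4\chi+1)\cdot OPT + (1-\tfrac{2}{m})OPT = (4R\chi+R+1-\tfrac{2}{m})OPT$, the claimed ratio.

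I do not expect a genuine obstacle in this final assembly; the substantive work has already been done upstream. The real delicacy lives in Lemma~\ref{lem:lem3-1}, whose derivation leans on Lemma~\ref{lem:lem1-1} to absorb the precedence ($\gamma$) terms at the cost of the factor $R$ once the ordering assumption $w_{k'}\geq w_k$ is dropped. In writing up the theorem I would simply verify that the $\kappa=\tfrac{1}{2}$ balancing is indeed optimal and that the qualitative guarantees — deterministic, combinatorial, polynomial-time — transfer unchanged from Algorithm~\ref{Alg_dual} and Algorithm~\ref{Alg1}, exactly as in Theorem~\ref{thm:thm1}, since neither the dual construction nor the list-scheduling phase is altered by removing the weight hypothesis.
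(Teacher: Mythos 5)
Your proposal is correct and matches the paper's intended argument exactly: the paper proves Theorem~\ref{thm:thm1-1} by remarking that it follows from Lemma~\ref{lem:lem3-1} via the same steps as Theorem~\ref{thm:thm1}, which is precisely your instantiation with $a=1$, the balancing choice $\kappa=\tfrac{1}{2}$ giving the common coefficient $R(4\chi+1)$, and weak duality to bound the dual terms by $OPT$. The arithmetic $R(4\chi+1)+1-\tfrac{2}{m}=4R\chi+R+1-\tfrac{2}{m}$ checks out, so nothing is missing.
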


\begin{thm}\label{thm:thm2-1}
If $L_{i,k'}\leq L_{i,k}$ and $L_{j,k'}\leq L_{j,k}$ hold for all $(k',k)\in E$, $i \in \mathcal{I}$ and $j \in \mathcal{J}$, then there exists a deterministic, combinatorial, polynomial time algorithm that achieves an approximation ratio of $4R\chi+1-\frac{2}{m}$ for the flow-level scheduling problem without release times.
\end{thm}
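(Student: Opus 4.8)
The plan is to imitate the proof of Theorem~\ref{thm:thm2} line for line, replacing the appeal to Lemma~\ref{lem:lem3} by the sharper accounting of Lemma~\ref{lem:lem3-1}. Since Lemma~\ref{lem:lem3-1} drops the assumption $w_{k'}\ge w_k$ and instead absorbs the precedence-induced $\gamma$ mass into a multiplicative factor $R$ (via Lemma~\ref{lem:lem1-1}), it is exactly the tool suited to the hypothesis of this theorem, which imposes only the load conditions $L_{i,k'}\le L_{i,k}$ and $L_{j,k'}\le L_{j,k}$. Because we are in the release-time-free regime, I would first invoke Lemma~\ref{lem:lem3-1} with the parameter $a=0$. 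This annihilates the release-time contribution and leaves a bound whose three dual blocks carry the coefficients $R\cdot\tfrac{2\chi}{\kappa}$ for the $\alpha$-block and $2R\cdot 2\chi=4R\chi$ for each of the two $\beta$-blocks, together with the residual $\left(1-\tfrac{2}{m}\right)OPT$.

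Next I would optimize the free constant $\kappa$. Setting $\kappa=\tfrac12$ forces the $\alpha$-coefficient $R\cdot\tfrac{2\chi}{\kappa}$ to equal $4R\chi$ as well, so that \emph{all three} dual blocks share the common coefficient $4R\chi$. This alignment is the heart of the argument: once the coefficients coincide, the combined quantity $\sum_{k}\sum_{i}\sum_{j}\alpha_{i,j,k}r_k+\sum_i\sum_S\beta_{i,S}f(S)+\sum_j\sum_S\beta_{j,S}f(S)$ may be treated as a single object and bounded in one stroke.

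Finally I would apply weak duality. The three aligned blocks form a truncation of the dual objective of~(\ref{coflow:dual}); since every dual variable is nonnegative and the omitted terms $\sum\alpha_{i,j,k}d_{i,j,k}$ and $\sum\gamma_{k',i,j,k}d_{i,j,k}$ are themselves nonnegative, this truncation is at most the full dual objective, which by weak duality is at most $OPT$. Consequently the aligned blocks contribute at most $4R\chi\cdot OPT$, and adding the residual $\left(1-\tfrac{2}{m}\right)OPT$ gives $\sum_k w_k C_k\le\left(4R\chi+1-\tfrac{2}{m}\right)OPT$, exactly the claimed ratio.

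I expect no genuine obstacle here; the analysis is mechanical once Lemma~\ref{lem:lem3-1} is available. The only point demanding care is bookkeeping---verifying that the substitution $\kappa=\tfrac12$ simultaneously drives the $\alpha$-block and both $\beta$-block coefficients to the common value $4R\chi$. It is precisely this coincidence that permits a single clean application of weak duality, bounding all three dual blocks together by $4R\chi\cdot OPT$; the remaining additive $1-\tfrac{2}{m}$ is simply inherited from the residual $\left(1-\tfrac{2}{m}\right)OPT$ term carried through Lemma~\ref{lem:lem3-1}, which in turn traces back to the $\left(1-\tfrac{2}{m}\right)C_k^*$ slack in Lemma~\ref{lem:lem1}.
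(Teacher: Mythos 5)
Your proposal is correct and follows exactly the route the paper intends: it proves Theorem~\ref{thm:thm2-1} by repeating the argument of Theorem~\ref{thm:thm2} with Lemma~\ref{lem:lem3-1} (at $a=0$) in place of Lemma~\ref{lem:lem3}, setting $\kappa=\tfrac12$ so all dual blocks carry the coefficient $4R\chi$, and closing with weak duality plus the residual $\left(1-\tfrac{2}{m}\right)OPT$. The bookkeeping checks out and yields the stated ratio $4R\chi+1-\tfrac{2}{m}$.
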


%%%%%%%%%%%%%%%%%%%%%%%%%%%%%%%%%%%%%%%%%%%%%%%%%%%%%%%%%%%%%%%%%%%%%%%%%%%%%%%%%%%%%%%%%%%%%%%%%%%%%%%%%%%%%%%%%%%%%%%%%%%%%%%
\section{Approximation Algorithm for the Coflow-level Scheduling Problem}\label{sec:Algorithm2}
This section focuses on the coflow-level scheduling issue, which pertains to the transmission of flows within a coflow via a single core. It is important to remember that $L_{i,k}=\sum_{j=1}^{N}d_{i,j,k}$ and $L_{j,k}=\sum_{i=1}^{N}d_{i,j,k}$, where $L_{i,k}$ denotes the overall load at source $i$ for coflow $k$, and $L_{j,k}$ denotes the overall load at destination $j$ for coflow $k$.
Let
\begin{eqnarray*}
f_{i}(S) = \frac{\sum_{k\in S} L_{i,k}^2+\left(\sum_{k\in S} L_{i,k}\right)^2}{2m}
\end{eqnarray*}
and
\begin{eqnarray*}
f_{j}(S) = \frac{\sum_{k\in S} L_{j,k}^2+\left(\sum_{k\in S} L_{j,k}\right)^2}{2m}
\end{eqnarray*}
for any subset $S\subseteq \mathcal{K}$.
To address this problem, we propose a linear programming relaxation formulation as follows:
\begin{subequations}\label{incoflow:main}
\begin{align}
& \text{min}  && \sum_{k \in \mathcal{K}} w_{k} C_{k}     &   & \tag{\ref{incoflow:main}} \\
& \text{s.t.} && C_{k}\geq r_k+L_{i,k}, && \forall k\in \mathcal{K}, \forall i\in \mathcal{I} \label{incoflow:a} \\
&             && C_{k}\geq r_k+L_{j,k}, && \forall k\in \mathcal{K}, \forall j\in \mathcal{J} \label{incoflow:b} \\
&  && C_{k}\geq C_{k'}+L_{ik}, && \forall k, k'\in \mathcal{K}, \forall i\in \mathcal{I}: \notag \\
&  &&                          && k'\prec k\label{incoflow:prec:e} \\
&  && C_{k}\geq C_{k'}+L_{jk}, && \forall k, k'\in \mathcal{K}, \forall j\in \mathcal{J}: \notag \\
&  &&                          && k'\prec k\label{incoflow:prec:f} \\
&             && \sum_{k\in S}L_{i,k}C_{k}\geq f_{i}(S)&&  \forall i\in \mathcal{I}, \forall S\subseteq \mathcal{K} \label{incoflow:c} \\
&             && \sum_{k\in S}L_{j,k}C_{k}\geq f_{j}(S)&& \forall j\in \mathcal{J}, \forall S\subseteq \mathcal{K} \label{incoflow:d}  
\end{align}
\end{subequations}

In the linear program (\ref{incoflow:main}), the completion time $C_{k}$ is defined for each coflow $k$ in the schedule. Constraints (\ref{incoflow:a}) and (\ref{incoflow:b}) ensure that the completion time of any coflow $k$ is greater than or equal to its release time $r_k$ plus its load. To account for the precedence constraints among coflows, constraints (\ref{incoflow:prec:e}) and (\ref{incoflow:prec:f}) indicate that all flows of coflow $k'$ must be completed before coflow $k$ can be scheduled. Additionally, constraints (\ref{incoflow:c}) and (\ref{incoflow:d}) establish lower bounds for the completion time variable at the input and output ports, respectively.

The dual linear program is given by
\begin{subequations}\label{incoflow:dual}
\begin{align}
& \text{max}  && \sum_{k \in \mathcal{K}}\sum_{i \in \mathcal{I}} \alpha_{i, k}(r_k+L_{i,k}) &   &\notag\\
&   && +\sum_{k \in \mathcal{K}}\sum_{j \in \mathcal{J}} \alpha_{j, k}(r_k+L_{j,k})  &   & \notag\\
&   && +\sum_{i \in \mathcal{I}}\sum_{S \subseteq \mathcal{K}}\beta_{i,S} f_{i}(S)   &   & \notag\\
&   && +\sum_{j \in \mathcal{J}}\sum_{S \subseteq \mathcal{K}}\beta_{j,S} f_{j}(S)   &   & \notag \\
&   && + \sum_{(k', k) \in E}\sum_{i \in \mathcal{I}}\gamma_{k', i, k} L_{i,k} & & \notag \\
&   && + \sum_{(k', k) \in E}\sum_{j \in \mathcal{J}}\gamma_{k', j, k} L_{j,k} & & \tag{\ref{incoflow:dual}} \\
& \text{s.t.} && \sum_{i \in \mathcal{I}} \alpha_{i, k}+\sum_{j \in \mathcal{J}} \alpha_{j, k} &   & \notag\\
&   && +\sum_{i \in \mathcal{I}}\sum_{S\subseteq \mathcal{K}/k\in S}\beta_{i,S}L_{i,k} &   &\notag\\
&   && +\sum_{j \in \mathcal{J}}\sum_{S\subseteq \mathcal{K}/k\in S}\beta_{j,S}L_{j,k}&   &\notag\\
&   && +\sum_{(k',k)\in E}\sum_{i \in \mathcal{I}}\gamma_{k', i, k} &   &\notag\\
&   && +\sum_{(k',k)\in E}\sum_{j \in \mathcal{J}}\gamma_{k', j, k} &   &\notag\\
&   && -\sum_{(k,k')\in E}\sum_{i \in \mathcal{I}}\gamma_{k, i, k'} &   &\notag\\
&   && -\sum_{(k,k')\in E}\sum_{j \in \mathcal{J}}\gamma_{k, j, k'}\leq w_{k}, && \forall k\in \mathcal{K} \label{incoflow:dual:a} \\
&  && \alpha_{i, k} \geq 0, && \forall k\in \mathcal{K}, \forall i\in \mathcal{I} \label{incoflow:dual:b} \\
&  && \alpha_{j, k} \geq 0, && \forall k\in \mathcal{K}, \forall j\in \mathcal{J} \label{incoflow:dual:b2} \\
&  && \beta_{i, S}\geq 0,   &&  \forall i\in \mathcal{I}, \forall S\subseteq \mathcal{K} \label{incoflow:dual:c} \\
&  && \beta_{j, S}\geq 0,   &&  \forall j\in \mathcal{J}, \forall S\subseteq \mathcal{K} \label{incoflow:dual:d} \\
&  && \gamma_{k', i, k}\geq 0,   &&  \forall (k', k)\in E, \forall i\in \mathcal{I} \label{incoflow:dual:e} \\
&  && \gamma_{k', j, k}\geq 0,   &&  \forall (k', k)\in E, \forall j\in \mathcal{J}\label{incoflow:dual:f}
\end{align}
\end{subequations}

Let $\gamma_{k', k}=\sum_{i \in \mathcal{I}}\gamma_{k', i, k}+\sum_{j \in \mathcal{J}}\gamma_{k', j, k}$. Notice that for every coflow $k$, there exists two dual variables $\alpha_{i, k}$ and $\alpha_{j, k}$, and there is a corresponding constraint. Additionally, for every subset of coflows $S$, there are two dual variables $\beta_{i, S}$ and $\beta_{j, S}$. For the precedence constraints, there are two dual variables $\gamma_{k', k}$ and $\gamma_{k, k'}$. Algorithm~\ref{Alg2_dual} in Appendix~\ref{appendix:b} presents the primal-dual algorithm which has a space complexity of $O(Nn)$ and a time complexity of $O(n^2)$, where $N$ represents the number of input/output ports and $n$ represents the number of coflows.

The coflow-driven-list-scheduling, as outlined in Algorithm~\ref{Alg2}, operates as follows. To ensure clarity and generality, we assume that the coflows are arranged in an order determined by the permutation generated by Algorithm~\ref{Alg2_dual}, where $\sigma(k)=k$ for all $k\in \mathcal{K}$. We schedule all the flows within each coflow iteratively, following the sequence provided by this list.

For each coflow $k$, we identify the network core $h^*$ that can transmit coflow $k$ in a manner that minimizes its completion time (lines \ref{alg2-2}-\ref{alg2-3}). Subsequently, we transmit all the flows allocated to network core $h$ (lines \ref{alg2-4}-\ref{alg2-5}).

In summary, the coflow-driven-list-scheduling algorithm works by iteratively scheduling the flows within each coflow, following a predetermined order. It determines the optimal network core for transmitting each coflow to minimize their completion times, and then transmits the allocated flows for each core accordingly.

\begin{algorithm}
\caption{coflow-driven-list-scheduling}
    \begin{algorithmic}[1]
		    %\STATE Let $load_{I}(i,h)$ be the load on the $i$-th input port of the network core $h$
		    %\STATE Let $load_{O}(j,h)$ be the load on the $j$-th output port of the network core $h$
		    %\STATE Let $\mathcal{A}_h$ be the set of coflows allocated to network core $h$				
				\STATE Both $load_{I}$ and $load_{O}$ are initialized to zero and $\mathcal{A}_h=\emptyset$ for all $h\in [1, m]$
				\FOR{$k=1, 2, \ldots, n$} \label{alg2-2}
				    \STATE $h^*=\arg \min_{h\in [1, m]}\left(\max_{i,j\in [1,N]}load_{I}(i,h)+\right.$ $\left.load_{O}(j,h)+L_{i,k}+L_{j,k}\right)$
						\STATE $\mathcal{A}_{h^*}=\mathcal{A}_{h^*}\cup \left\{k\right\}$
						\STATE $load_{I}(i,h^*)=load_{I}(i,h^*)+L_{i,k}$ and $load_{O}(j,h^*)=load_{O}(j,h^*)+L_{j,k}$ for all $i,j\in [1,N]$
				\ENDFOR               \label{alg2-3}
				\FOR{each $h\in [1, m]$ do in parallel}\label{alg2-4}
				    \STATE wait until the first coflow is released
						\WHILE{there is some incomplete flow}
						    \STATE for all $k\in \mathcal{A}_{h}$, list the released and incomplete flows respecting the increasing order in $k$
								\STATE let $L$ be the set of flows in the list
                \FOR{every flow $f=(i, j, k)\in L$}
										\IF{the link $(i, j)$ is idle}
										    \STATE schedule flow $f$ \label{alg2-1}
										\ENDIF
								\ENDFOR
								\WHILE{no new flow is completed or released}
								    \STATE transmit the flows that get scheduled in line \ref{alg2-1} at maximum rate 1.
								\ENDWHILE
						\ENDWHILE
				\ENDFOR\label{alg2-5}
   \end{algorithmic}
\label{Alg2}
\end{algorithm}

\subsection{Analysis}
In this section, we present a comprehensive analysis of the proposed algorithm, establishing its approximation ratios. Specifically, we demonstrate that the algorithm achieves an approximation ratio of $O(m\chi)$ when considering workload sizes and weights that are topology-dependent in the input instances. Additionally, when considering workload sizes that are topology-dependent in the input instances, the algorithm achieves an approximation ratio of $O(Rm\chi)$ where $R$ is the ratio of maximum weight to minimum weight. It is crucial to note that our analysis assumes that the coflows are arranged in the order determined by the permutation generated by Algorithm~\ref{Alg2_dual}, where $\sigma(k)=k$ for all $k\in \mathcal{K}$. 

We would like to emphasize that $S_{k}=\left\{1, 2, \ldots, k\right\}$ represents the set of the first $k$ coflows. We define $\beta_{i,k}=\beta_{i,S_{k}}$ and $\beta_{j,k}=\beta_{j,S_{k}}$ for convenience. Moreover, we define $L_{i}(S_{k})=\sum_{k'\leq k} L_{i, k'}$ and $L_{j}(S_{k})=\sum_{k'\leq k} L_{j, k'}$ to simplify the notation. Furthermore, let $\mu_1(k)$ denote the input port with the highest load among the coflows in $S_{k}$, and $\mu_2(k)$ denote the output port with the highest load among the coflows in $S_{k}$. Hence, we have $L_{\mu_1(k)}(S_{k})=\sum_{k'\leq k} L_{\mu_1(k), k'}$ and $L_{\mu_2(k)}(S_{k})=\sum_{k'\leq k} L_{\mu_2(k), k'}$.

Let us begin by presenting several key observations regarding the primal-dual algorithm.
\begin{obs}\label{obs:3}
The following statements hold.

\begin{enumerate}
\item Every nonzero $\beta_{i,S}$ can be written as $\beta_{\mu_1(k),k}$ for some coflow $k$. \label{obs:3-1}
\item Every nonzero $\beta_{j,S}$ can be written as $\beta_{\mu_2(k),k}$ for some coflow $k$. \label{obs:3-2}
\item For every set $S_{k}$ that has a nonzero $\beta_{\mu_1(k),k}$ variable, if $k' \leq k$ then $r_{k'}\leq \frac{\kappa\cdot L_{\mu_1(k)}(S_{k})}{m}$. \label{obs:3-3}
\item For every set $S_{k}$ that has a nonzero $\beta_{\mu_2(k),k}$ variable, if $k' \leq k$ then $r_{k'}\leq \frac{\kappa\cdot L_{\mu_2(k)}(S_{k})}{m}$. \label{obs:3-4}
\item For every coflow $k$ that has a nonzero $\alpha_{\mu_1(k), k}$, $r_{k}>\frac{\kappa\cdot L_{\mu_1(k)}(S_{k})}{m}$. \label{obs:3-5}
\item For every coflow $k$ that has a nonzero $\alpha_{\mu_2(k), k}$, $r_{k}>\frac{\kappa\cdot L_{\mu_2(k)}(S_{k})}{m}$. \label{obs:3-6}
\item For every coflow $k$ that has a nonzero $\alpha_{\mu_1(k), k}$ or a nonzero $\alpha_{\mu_2(k), k}$, if $k'\leq k$ then $r_{k'}\leq r_{k}$. \label{obs:3-7}
\end{enumerate}
\end{obs}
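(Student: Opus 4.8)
The plan is to verify all seven statements by tracing a single bookkeeping invariant of Algorithm~\ref{Alg2_dual}: because the algorithm fixes the schedule from the last position to the first, at the iteration that assigns position $k$ the set of still-unscheduled coflows is exactly $S_k=\{1,\ldots,k\}$ (under the identity relabelling $\sigma(k)=k$), and the coflow examined in that iteration is the one of largest release time within $S_k$. Every decision the algorithm makes at that step compares this release time against the load threshold $\frac{\kappa}{m}L_{\mu_1(k)}(S_k)$ (or $\frac{\kappa}{m}L_{\mu_2(k)}(S_k)$), where $\mu_1(k)$ and $\mu_2(k)$ are by definition the heaviest input and output ports over $S_k$. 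I would first establish this invariant formally by induction on the iteration count, since every subsequent claim is then a direct read-off from it.

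For parts (\ref{obs:3-1}) and (\ref{obs:3-2}), I would observe that the algorithm never raises a $\beta$ variable at an arbitrary port or for an arbitrary set: the $\beta$-branch raises only $\beta_{\mu_1(k),S_k}$ (resp. $\beta_{\mu_2(k),S_k}$) at the heaviest port over the current unscheduled set $S_k$. Hence every $\beta$ variable that ends up nonzero has the normal form $\beta_{\mu_1(k),k}$ (resp. $\beta_{\mu_2(k),k}$), which is exactly the claim. Parts (\ref{obs:3-3}) and (\ref{obs:3-4}) then follow from the trigger condition of that same branch: $\beta$ is raised precisely when $r_k\leq\frac{\kappa}{m}L_{\mu_1(k)}(S_k)$ for the maximum-release-time coflow $k$ in $S_k$; since every $k'\leq k$ satisfies $r_{k'}\leq r_k$ by the invariant, the bound $r_{k'}\leq\frac{\kappa}{m}L_{\mu_1(k)}(S_k)$ propagates to all of $S_k$.

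The dual branch handles parts (\ref{obs:3-5})--(\ref{obs:3-7}). I would note that $\alpha_{\mu_1(k),k}$ (resp. $\alpha_{\mu_2(k),k}$) is raised only when the complementary inequality $r_k>\frac{\kappa}{m}L_{\mu_1(k)}(S_k)$ holds, and that in this branch the coflow receiving nonzero $\alpha$ is placed at the current latest free position, so after relabelling it is coflow $k$ itself and it is the maximum-release-time member of $S_k$. Parts (\ref{obs:3-5}) and (\ref{obs:3-6}) are then simply the branch's guard, and part (\ref{obs:3-7}) is the maximality of $r_k$ over $S_k$ combined with the invariant that $S_k$ is the unscheduled set when $\alpha_{\cdot,k}$ is raised.

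The step I expect to require the most care is reconciling the relabelling with the $\gamma$/successor-chain mechanism in the $\beta$-branch. There, the coflow whose dual constraint is made tight is the chain terminal $t_1$ rather than the maximum-release-time coflow that triggered the branch, so the coflow that ends up labelled $k$ at position $k$ need not be the one whose release time drove the threshold test. The argument I would give is that this mismatch is harmless for the observations: parts (\ref{obs:3-1})--(\ref{obs:3-4}) only reference the heaviest port $\mu_1(k)$/$\mu_2(k)$ and the unscheduled set $S_k$, both of which are determined by $S_k$ alone and are invariant under which member of $S_k$ is relabelled $k$; and the threshold bound in (\ref{obs:3-3})/(\ref{obs:3-4}) is quantified over every $k'\leq k$, so it already absorbs the maximum-release-time coflow regardless of its final label. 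Once this is spelled out, each of the seven statements reduces to a one-line consequence of the invariant, matching the ``readily verified'' remark following Observation~\ref{obs:1}.
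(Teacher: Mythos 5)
Your proposal is correct and takes the same route the paper intends: the paper offers no written proof beyond the remark that the observations are ``readily verified and directly inferred from the steps outlined in Algorithm~\ref{Alg2_dual}'', and your argument is precisely that verification carried out in detail, including the one genuinely delicate point (that the successor-chain relabelling in the $\beta$-branch is harmless because parts (\ref{obs:3-1})--(\ref{obs:3-4}) depend only on $S_k$ and its heaviest ports, and the threshold in (\ref{obs:3-3})/(\ref{obs:3-4}) is quantified over all of $S_k$). No gaps.
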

The validity of each of the aforementioned observations can be readily verified and directly inferred from the steps outlined in Algorithm~\ref{Alg2_dual}. 
\begin{obs}\label{obs:4}
For any subset $S$, we have that $(\sum_{k\in S} L_{i,k})^2\leq 2m\cdot f_{i}(S)$ and $(\sum_{k\in S} L_{j,k})^2\leq 2m\cdot f_{j}(S)$. 
\end{obs}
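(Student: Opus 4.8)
The plan is to prove the inequality by directly substituting the definition of $f_i(S)$ and reducing the claim to the non-negativity of a sum of squares, exactly mirroring the argument behind Observation~\ref{obs:2}. Recall that $f_i(S)=\frac{\sum_{k\in S}L_{i,k}^2+\left(\sum_{k\in S}L_{i,k}\right)^2}{2m}$, so that $2m\cdot f_i(S)=\sum_{k\in S}L_{i,k}^2+\left(\sum_{k\in S}L_{i,k}\right)^2$. Hence the desired bound $\left(\sum_{k\in S}L_{i,k}\right)^2\leq 2m\cdot f_i(S)$ is equivalent, after cancelling the common term $\left(\sum_{k\in S}L_{i,k}\right)^2$ from both sides, to the statement $0\leq\sum_{k\in S}L_{i,k}^2$.

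First I would observe that each $L_{i,k}=\sum_{j\in\mathcal{J}}d_{i,j,k}$ is a sum of non-negative flow sizes and is therefore itself non-negative, so every term $L_{i,k}^2$ is non-negative and the sum $\sum_{k\in S}L_{i,k}^2$ is non-negative. This establishes the input-port inequality. The output-port inequality for $f_j(S)$ then follows verbatim by replacing $i$ with $j$ and $L_{i,k}$ with $L_{j,k}=\sum_{i\in\mathcal{I}}d_{i,j,k}$ throughout, since $L_{j,k}$ is likewise a non-negative quantity.

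There is no genuine obstacle here; the statement is an immediate structural consequence of how $f_i$ and $f_j$ are defined. The only point requiring minor care is to expand the definitions correctly so that the two copies of the squared-sum term cancel, leaving exactly the trivially non-negative residual $\sum_{k\in S}L_{i,k}^2$ (respectively $\sum_{k\in S}L_{j,k}^2$). I expect this observation to be used later in the analysis in the same role that Observation~\ref{obs:2} plays for the flow-level problem, namely to convert bounds stated in terms of squared port loads into bounds in terms of the dual objective contributions $\beta_{i,S}f_i(S)$ and $\beta_{j,S}f_j(S)$.
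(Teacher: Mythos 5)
Your proposal is correct: expanding $2m\cdot f_i(S)=\sum_{k\in S}L_{i,k}^2+\left(\sum_{k\in S}L_{i,k}\right)^2$ and noting that the residual $\sum_{k\in S}L_{i,k}^2$ is non-negative (since each $L_{i,k}$ is a sum of non-negative flow sizes) is exactly the argument the paper leaves implicit, as it states this observation without proof. The symmetric treatment of the output-port case is likewise the intended one, so your proof matches the paper's approach.
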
 

\begin{lem}\label{lem:lem21}
Let $C_{k}$ represent the completion time of coflow $k$ when scheduled according to Algorithm~\ref{Alg2}. For any coflow $k$, we have $C_{k}\leq a\cdot \max_{k'\leq k}r_{k'}+\chi\left(L_{\mu_1(k)}(S_{k})+L_{\mu_2(k)}(S_{k})\right)$, where $a=0$ signifies the absence of release times, and $a=1$ indicates the presence of arbitrary release times.
\end{lem}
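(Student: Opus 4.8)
The plan is to mirror the proof of Lemma~\ref{lem:lem1}, adapting it to the coflow-level setting where each coflow is routed through a single core. I would first bound the completion time $\hat{C}_k$ of a single coflow $k$ in isolation (no release times, no precedence constraints), then chain these single-coflow bounds along the longest precedence path ending at $k$, and finally shift everything by the release time.

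First I would establish the single-coflow bound. When coflow $k$ is assigned by the greedy rule (lines~\ref{alg2-2}--\ref{alg2-3} of Algorithm~\ref{Alg2}) to the core $h^*$ minimizing the resulting maximum port load, the list-scheduling step (lines~\ref{alg2-4}--\ref{alg2-5}) clears coflow $k$ within a time equal to the sum of the busiest input-port load and the busiest output-port load accumulated on core $h^*$ by the coflows of $S_k$ routed there. Since the load placed on any single port of any single core by the coflows in $S_k$ can never exceed the total load of those coflows on that port, the per-core input load at any port is at most $L_i(S_k)$ and the per-core output load at any port is at most $L_j(S_k)$. Taking the busiest ports gives
\[
\hat{C}_k \leq L_{\mu_1(k)}(S_k) + L_{\mu_2(k)}(S_k).
\]
This crude per-core-versus-global comparison is exactly where the coflow-level analysis forgoes the $\frac{1}{m}$ parallelism factor that appears in $\hat{C}_k$ in Lemma~\ref{lem:lem1}, and it is the source of the additional factor $m$ in the eventual $O(m\chi)$ ratio; note also that coflow $k$'s own port loads $L_{\mu_1(k),k}$ and $L_{\mu_2(k),k}$ are already contained in $L_{\mu_1(k)}(S_k)$ and $L_{\mu_2(k)}(S_k)$, so no separate residual term is needed.

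Next I would incorporate the precedence constraints. Let $v_1 v_2 \cdots v_f$ be a longest path in the DAG with $v_f = k$, so that $f \leq \chi$. Because a coflow cannot begin until all of its predecessors have completed, the completion time telescopes along this chain, giving $C_k \leq \sum_{q=1}^{f} \hat{C}_{v_q}$. Since the permutation produced by Algorithm~\ref{Alg2_dual} is consistent with the precedence order, every $v_q$ satisfies $v_q \leq k$, hence $S_{v_q} \subseteq S_k$; by monotonicity of the accumulated port loads this yields $L_{\mu_1(v_q)}(S_{v_q}) = \max_i L_i(S_{v_q}) \leq \max_i L_i(S_k) = L_{\mu_1(k)}(S_k)$, and likewise for the output side. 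Substituting the single-coflow bound and summing the $f \leq \chi$ terms gives $C_k \leq \chi\left(L_{\mu_1(k)}(S_k) + L_{\mu_2(k)}(S_k)\right)$. Finally, accounting for release times, coflow $k$ begins transmitting no later than $\max_{k' \leq k} r_{k'}$, which contributes the additive $a\cdot\max_{k'\leq k} r_{k'}$ term and completes the claim.

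The main obstacle I anticipate is making the single-coflow bound $\hat{C}_k$ fully rigorous: one must argue carefully that the list-scheduling rule on a single core indeed clears coflow $k$ within the sum of its busiest accumulated input and output port loads (a single-switch, busy-port / Birkhoff--von Neumann style argument), and that the greedy core choice together with the per-core $\leq$ global load comparison remains valid even when a single coflow contributes a large load to one port. The remaining steps---the telescoping over the precedence chain and the monotonicity of $L_{\mu_1(\cdot)}$ and $L_{\mu_2(\cdot)}$---are routine once the ordering is known to respect precedence.
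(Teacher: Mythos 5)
Your proposal follows essentially the same route as the paper's proof: the same single-coflow bound $\hat{C}_k \leq L_{\mu_1(k)}(S_k)+L_{\mu_2(k)}(S_k)$, the same telescoping over the longest precedence path $v_1\cdots v_f$ with the monotonicity step $L_{\mu_1(v_q)}(S_{v_q})\leq L_{\mu_1(k)}(S_k)$, and the same additive release-time shift. Your version is in fact somewhat more careful than the paper's, which states the single-coflow bound without justification and is loose with the indices $v_q$ versus $q$; the extra detail you supply (the per-core-load versus global-load comparison and the remark on where the factor $m$ is lost) is consistent with, and clarifies, the paper's argument.
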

\begin{proof}
First, let's consider the case where there is no release time and no precedence constraints. In this case, the completion time bound for each coflow can be expressed by the following inequality:
\begin{eqnarray*}
\hat{C}_{k} & \leq & L_{\mu_1(k)}(S_{k})+L_{\mu_2(k)}(S_{k}) \label{lem21:eq1}
\end{eqnarray*}

Now, let $v_1v_2\cdots v_f$ be the longest path of coflow $k$, where $v_f=k$. Then, we can derive the following inequalities:
\begin{eqnarray*}
C_{k}   & \leq & \sum_{q=1}^{f} \hat{C}_{v_q} \label{lem21:eq1}\\
        & \leq & \sum_{q=1}^{f} L_{\mu_1(q)}(S_{q})+L_{\mu_2(q)}(S_{q})  \label{lem21:eq2}\\
		    & \leq & \sum_{q=1}^{f} L_{\mu_1(k)}(S_{k})+L_{\mu_2(k)}(S_{k})\label{lem21:eq3}\\
		    & =    & f\left(L_{\mu_1(k)}(S_{k})+L_{\mu_2(k)}(S_{k})\right) \label{lem21:eq4} 
\end{eqnarray*}
When considering the release time, coflow $k$ is transmitted starting at $\max_{k'\leq k}r_{k'}$ at the latest. This proof confirms the lemma.
\end{proof}

\begin{lem}\label{lem:lem2-2}
If $w_{k'}\geq w_{k}$, $L_{i,k'}\leq L_{i,k}$ and $L_{j,k'}\leq L_{j,k}$ hold for all $(k',k)\in E$, $i \in \mathcal{I}$ and $j \in \mathcal{J}$, then $\gamma_{k', k}=0$ holds for all $k, k'\in \mathcal{K}$.
\end{lem}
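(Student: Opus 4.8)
The plan is to reproduce, at the coflow level, the reasoning behind Lemma~\ref{lem:lem1-2}, now reading the primal-dual routine of Algorithm~\ref{Alg2_dual} against the dual (\ref{incoflow:dual}). Recall that a positive $\gamma_{k',k}$ is created only inside the $\beta$-branch of that routine: the algorithm picks a candidate coflow of minimum $\beta$ value, and \emph{only if} that candidate still has an unscheduled successor does it walk down the successor chain to a sink $t_1$ and shift weight onto $\gamma$ so that the constraint (\ref{incoflow:dual:a}) for $t_1$ becomes tight while $t_1$ inherits the candidate's $\beta$. Hence it suffices to prove that, under the stated hypotheses, the minimum-$\beta$ candidate never has an unscheduled successor; then the chain-walking branch is never entered and every $\gamma_{k',k}$ retains its initial value $0$.

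First I would pin down what ``$\beta$ value of a coflow'' means here. When the routine raises a single variable $\beta_{\mu_1(\cdot),S}$ on the current highest-load input port (the case driven by $\mu_2(\cdot)$ on the output side is symmetric), the left-hand side of (\ref{incoflow:dual:a}) for an unscheduled coflow $k$ grows at rate $L_{\mu_1(\cdot),k}$, so the increment at which $k$'s constraint would turn tight is proportional to the remaining slack of $k$ divided by $L_{\mu_1(\cdot),k}$, and at this stage that slack is governed by $w_k$. I record the $\beta$ value of $k$ as this tightening increment, which is exactly what the selection minimizes. Now fix any precedence edge $(k',k)\in E$, so $k'\prec k$. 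The hypotheses give $w_{k'}\ge w_k$ together with $L_{i,k'}\le L_{i,k}$ and $L_{j,k'}\le L_{j,k}$ for every port, in particular $L_{\mu_1(\cdot),k'}\le L_{\mu_1(\cdot),k}$ and $L_{\mu_2(\cdot),k'}\le L_{\mu_2(\cdot),k}$. Thus the successor $k$ has no larger numerator (slack $\approx w_k$) and no smaller denominator (port load) than the predecessor $k'$, so $k$ becomes tight no later than $k'$; equivalently, the $\beta$ value of $k$ is at most that of $k'$, on whichever of the two port branches is active.

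With that comparison in hand, I would conclude by a selection argument: whenever a predecessor $k'$ and an unscheduled successor $k$ coexist, the minimum-$\beta$ rule prefers $k$, so a coflow chosen as the global minimum can never be one with an unscheduled successor (such a successor would carry a not-larger $\beta$ and still be unscheduled, contradicting minimality, with ties broken toward sinks). Applied at every iteration, this shows the chain-walking branch that writes $\gamma$ is never reached, hence $\gamma_{k',k}=0$ for all $k,k'$, and by Lemma~\ref{lem:lem2} the tightness of (\ref{incoflow:dual:a}) is then realized purely through $\alpha$ and $\beta$. The main obstacle I anticipate is the simultaneous treatment of the input- and output-port contributions: selection is driven by whichever of $\mu_1(\cdot)$, $\mu_2(\cdot)$ is the current bottleneck, while the hypotheses must be invoked on both coordinates at once and across iterations in which the active set $S$ shrinks. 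A secondary delicate point is transitivity along long precedence chains and the handling of $\beta$-ties; I would close both by an induction on the right-to-left scheduling order, maintaining the invariant that at the moment any coflow is selected all of its DAG-successors have already been placed.
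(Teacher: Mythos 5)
Your proposal is correct and follows essentially the same route as the paper: the paper's (very terse) proof likewise argues that under the stated hypotheses the $\beta$ value of a successor $k$ is no larger than that of its predecessor $k'$, so the minimum-$\beta$ candidate chosen in Algorithm~\ref{Alg2_dual} never has an unscheduled successor and the chain-walking branch that assigns $\gamma_{k',k}$ is never executed. Your additional care about the slack evolving across iterations and about tie-breaking fills in details the paper leaves implicit, but the underlying argument is the same.
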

\begin{proof}
Given that $w_{k'}\geq w_{k}$, $L_{i,k'}\leq L_{i,k}$ and $L_{j,k'}\leq L_{j,k}$ hold for all $(k',k)\in E$, $i \in \mathcal{I}$, and $j \in \mathcal{J}$, the $\beta$ value of coflow $k$ is smaller than that of coflow $k'$. As a result, there is no need to order the coflow $k$ by setting $\gamma_{k',k}$.
\end{proof}

\begin{lem}\label{lem:lem22}
For every coflow $k$, $\sum_{i \in \mathcal{I}} \alpha_{i, k}+\sum_{j \in \mathcal{J}} \alpha_{j, k}+\sum_{i \in \mathcal{I}}\sum_{k'\geq k}\beta_{i,k'}L_{i,k}+\sum_{j \in \mathcal{J}}\sum_{k'\geq k}\beta_{j,k'}L_{j,k}+\sum_{k'\in\mathcal{K}|(k',k)\in E}\gamma_{k',k}-\sum_{k'\in\mathcal{K}|(k,k')\in E}\gamma_{k,k'}= w_{k}$.
\end{lem}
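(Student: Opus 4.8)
The plan is to mirror the proof of Lemma~\ref{lem:lem2}, adapting it to the coflow-level dual (\ref{incoflow:dual}). The essential point is that Algorithm~\ref{Alg2_dual} is a primal--dual procedure that raises dual variables until the dual constraint (\ref{incoflow:dual:a}) for the coflow currently being placed becomes tight, and only then appends that coflow to the permutation. Hence, upon termination, constraint (\ref{incoflow:dual:a}) holds with equality for every $k\in\mathcal{K}$. The remaining work is purely to rewrite the tight general constraint in the simplified index form stated in the lemma.

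First I would use Observation~\ref{obs:3} parts (\ref{obs:3-1}) and (\ref{obs:3-2}): every nonzero $\beta_{i,S}$ equals $\beta_{\mu_1(k'),k'}$ for some $k'$, where we write $\beta_{i,k'}=\beta_{i,S_{k'}}$ and $S_{k'}=\{1,\ldots,k'\}$ is a prefix set; similarly every nonzero $\beta_{j,S}$ is supported on a prefix set. Consequently the only subsets $S$ that contribute to $\sum_{S\subseteq\mathcal{K}/k\in S}\beta_{i,S}L_{i,k}$ are the prefix sets $S_{k'}$, and the membership condition $k\in S_{k'}$ is exactly $k'\geq k$. This collapses the subset sums into
\begin{eqnarray*}
\sum_{i\in\mathcal{I}}\sum_{S\subseteq\mathcal{K}/k\in S}\beta_{i,S}L_{i,k}=\sum_{i\in\mathcal{I}}\sum_{k'\geq k}\beta_{i,k'}L_{i,k},
\end{eqnarray*}
with the analogous identity holding on the output side.

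Next I would collapse the four $\gamma$-summations in (\ref{incoflow:dual:a}) using the definition $\gamma_{k',k}=\sum_{i\in\mathcal{I}}\gamma_{k',i,k}+\sum_{j\in\mathcal{J}}\gamma_{k',j,k}$, which combines the incoming terms into $\sum_{k'\in\mathcal{K}|(k',k)\in E}\gamma_{k',k}$ and the outgoing terms into $\sum_{k'\in\mathcal{K}|(k,k')\in E}\gamma_{k,k'}$. Substituting both simplifications into the tight version of (\ref{incoflow:dual:a}) yields precisely the claimed equality.

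The main obstacle is not computational but one of justification: one must be certain that Observation~\ref{obs:3} genuinely forces every active $\beta$ variable to be supported on a prefix set $S_{k'}$ rather than an arbitrary subset, so that the subset sum legitimately reduces to a sum over indices $k'\geq k$. Granting that observation (which the text attributes directly to the structure of Algorithm~\ref{Alg2_dual}), the rest is a mechanical substitution, exactly as in Lemma~\ref{lem:lem2}.
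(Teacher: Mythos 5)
Your proposal is correct and follows essentially the same route as the paper: the paper's proof likewise observes that Algorithm~\ref{Alg2_dual} appends a coflow to the permutation only once its dual constraint (\ref{incoflow:dual:a}) is made tight, and then restates that tight constraint with the subset sums collapsed to prefix sums. You supply slightly more justification than the paper does for the index rewriting (via Observation~\ref{obs:3} and the definition of $\gamma_{k',k}$), but the argument is the same.
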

\begin{proof}
A coflow $k$ is included in the permutation of Algorithm~\ref{Alg2_dual} only if the constraint 
$\sum_{i \in \mathcal{I}} \alpha_{i, k}+\sum_{j \in \mathcal{J}} \alpha_{j, k} +\sum_{i \in \mathcal{I}}\sum_{S\subseteq \mathcal{K}/k\in S}\beta_{i,S}L_{i,k} +\sum_{j \in \mathcal{J}}\sum_{S\subseteq \mathcal{K}/k\in S}\beta_{j,S}L_{j,k}+\sum_{k'\in\mathcal{K}|(k',k)\in E}\gamma_{k',k}-\sum_{k'\in\mathcal{K}|(k,k')\in E}\gamma_{k,k'}\leq w_{k}$ becomes tight for this particular coflow, resulting in $\sum_{i \in \mathcal{I}} \alpha_{i, k}+\sum_{j \in \mathcal{J}} \alpha_{j, k}+\sum_{i \in \mathcal{I}}\sum_{k'\geq k}\beta_{i,k'}L_{i,k}+\sum_{j \in \mathcal{J}}\sum_{k'\geq k}\beta_{j,k'}L_{j,k}+\sum_{k'\in\mathcal{K}|(k',k)\in E}\gamma_{k',k}-\sum_{k'\in\mathcal{K}|(k,k')\in E}\gamma_{k,k'}= w_{k}$.
\end{proof}

\begin{lem}\label{lem:lem23}
If $w_{k'}\geq w_{k}$, $L_{i,k'}\leq L_{i,k}$ and $L_{j,k'}\leq L_{j,k}$ hold for all $(k',k)\in E$, $i \in \mathcal{I}$ and $j \in \mathcal{J}$, then the total cost of the schedule is bounded as follows.
\begin{eqnarray*}
\sum_{k}w_{k}C_{k} & \leq & \left(a+\frac{2\chi\cdot m}{\kappa}\right)\sum_{k \in \mathcal{K}}\sum_{i \in \mathcal{I}} \alpha_{i, k}(r_{k}) \\
                   &      & +\left(a+\frac{2\chi\cdot m}{\kappa}\right)\sum_{k \in \mathcal{K}}\sum_{j \in \mathcal{J}} \alpha_{j, k}(r_k) \\
                   &      & +2\left(a\cdot \kappa+2\chi\cdot m\right)\sum_{i \in \mathcal{I}}\sum_{S \subseteq \mathcal{K}}\beta_{i,S} f_{i}(S) \\
									 &      & +2\left(a\cdot \kappa+2\chi\cdot m\right)\sum_{j \in \mathcal{J}}\sum_{S \subseteq \mathcal{K}}\beta_{j,S} f_{j}(S) 
\end{eqnarray*}
\end{lem}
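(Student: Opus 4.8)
The plan is to mirror the proof of Lemma~\ref{lem:lem3}, replacing its flow-level ingredients by the coflow-level analogues: Lemma~\ref{lem:lem21} in place of Lemma~\ref{lem:lem1}, Lemmas~\ref{lem:lem2-2} and \ref{lem:lem22} in place of Lemmas~\ref{lem:lem1-2} and \ref{lem:lem2}, and Observations~\ref{obs:3} and \ref{obs:4} in place of Observations~\ref{obs:1} and \ref{obs:2}. First I would invoke Lemma~\ref{lem:lem21} to write $\sum_k w_k C_k \leq \sum_k w_k A$, where $A = a\cdot\max_{k'\leq k} r_{k'} + \chi\left(L_{\mu_1(k)}(S_k)+L_{\mu_2(k)}(S_k)\right)$; note that, unlike Lemma~\ref{lem:lem1}, Lemma~\ref{lem:lem21} carries no additive $\left(1-\tfrac{2}{m}\right)C_k^*$ term, which is exactly why the target bound contains no $OPT$ summand. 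Since the hypotheses $w_{k'}\geq w_k$, $L_{i,k'}\leq L_{i,k}$, $L_{j,k'}\leq L_{j,k}$ are those of Lemma~\ref{lem:lem2-2}, every $\gamma_{k',k}$ vanishes, so Lemma~\ref{lem:lem22} lets me expand $w_k = \sum_i \alpha_{i,k} + \sum_j \alpha_{j,k} + \sum_i\sum_{k'\geq k}\beta_{i,k'}L_{i,k} + \sum_j\sum_{k'\geq k}\beta_{j,k'}L_{j,k}$, turning $\sum_k w_k A$ into an $\alpha$-part and two symmetric $\beta$-parts.

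For the $\alpha$-part I would apply Observation~\ref{obs:3} parts (\ref{obs:3-5}), (\ref{obs:3-6}) and (\ref{obs:3-7}): a nonzero $\alpha$ forces $r_k > \tfrac{\kappa}{m}\max\{L_{\mu_1(k)}(S_k),L_{\mu_2(k)}(S_k)\}$ and $\max_{k'\leq k} r_{k'} = r_k$, whence $A \leq \left(a + \tfrac{2\chi m}{\kappa}\right) r_k$, producing the first two lines of the claim after summing over $\alpha_{i,k}$ and $\alpha_{j,k}$. For the input $\beta$-part I would substitute $A$ and bound it term by term: Observation~\ref{obs:3} part (\ref{obs:3-3}) gives $\max_{\ell\leq k} r_\ell \leq \tfrac{\kappa}{m}L_{\mu_1(k')}(S_{k'})$ whenever $\ell\leq k\leq k'$, while monotonicity of the loads under $S_k\subseteq S_{k'}$ together with the fact that an input-side $\beta_{i,k'}$ is raised only when the input port dominates the output port on $S_{k'}$ yields $L_{\mu_1(k)}(S_k)+L_{\mu_2(k)}(S_k) \leq 2L_{\mu_1(k')}(S_{k'})$. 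Hence $A \leq \left(\tfrac{a\kappa}{m}+2\chi\right) L_{\mu_1(k')}(S_{k'})$. Collapsing $\sum_{k\leq k'} L_{i,k} = L_i(S_{k'}) \leq L_{\mu_1(k')}(S_{k'})$ then leaves $\left(\tfrac{a\kappa}{m}+2\chi\right)\sum_{k'}\sum_i \beta_{i,k'}\left(L_{\mu_1(k')}(S_{k'})\right)^2$, and applying Observation~\ref{obs:4}, which gives $(L_{\mu_1(k')}(S_{k'}))^2 \leq 2m\, f_{\mu_1(k')}(S_{k'})$, together with Observation~\ref{obs:3} part (\ref{obs:3-1}) converts this into $2\left(a\kappa + 2\chi m\right)\sum_i\sum_{S\subseteq\mathcal{K}}\beta_{i,S}f_i(S)$. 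The output $\beta$-part is identical with $\mu_2$, parts (\ref{obs:3-2}), (\ref{obs:3-4}) and the second inequality of Observation~\ref{obs:4}; adding the four contributions gives the stated bound.

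I expect the main obstacle to be the $m$-bookkeeping that produces the extra factor $m$ relative to the flow-level Lemma~\ref{lem:lem3}. The source is structural: the completion-time estimate in Lemma~\ref{lem:lem21} charges the full port loads $L_{\mu_1(k)}(S_k)+L_{\mu_2(k)}(S_k)$ without the $\tfrac{1}{m}$ that softens Lemma~\ref{lem:lem1}, so when Observation~\ref{obs:4} is invoked its factor $2m$ is no longer cancelled and survives as the $2\chi m$ inside the $\beta$ coefficients. The second delicate point is justifying $L_{\mu_1(k)}(S_k)+L_{\mu_2(k)}(S_k)\leq 2L_{\mu_1(k')}(S_{k'})$ in the input $\beta$-part: this absorbs the \emph{output} load into an \emph{input} bound, and it is legitimate only because Algorithm~\ref{Alg2_dual} raises an input-side $\beta$ exclusively when the maximum input load is at least the maximum output load on the current prefix, so that $L_{\mu_2(k')}(S_{k'})\leq L_{\mu_1(k')}(S_{k'})$; combined with $L_{\mu_2(k)}(S_k)\leq L_{\mu_2(k')}(S_{k'})$ for $k\leq k'$ this closes the gap. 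Everything else is routine reindexing that transfers from the proof of Lemma~\ref{lem:lem3}.
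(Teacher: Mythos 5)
Your proposal is correct and follows essentially the same route as the paper's proof: Lemma~\ref{lem:lem21} for the completion-time bound (with no $OPT$ remainder), Lemmas~\ref{lem:lem2-2} and \ref{lem:lem22} to expand $w_k$ with vanishing $\gamma$'s, Observation~\ref{obs:3} parts (\ref{obs:3-5})--(\ref{obs:3-7}) for the $\alpha$-part, and parts (\ref{obs:3-1})--(\ref{obs:3-4}) with Observation~\ref{obs:4} for the two $\beta$-parts. If anything, you are more careful than the paper at the step absorbing $L_{\mu_2(k)}(S_k)$ into $2L_{\mu_1(k')}(S_{k'})$, which the paper performs implicitly without citing that an input-side $\beta$ is only raised when the input load dominates.
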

\begin{proof}
By applying Lemma~\ref{lem:lem21}, we have
\begin{eqnarray*}
&& \sum_{k=1}^{n} w_{k}C_{k} \\
&& \leq \sum_{k=1}^{n} w_{k}\cdot \left(a\cdot \max_{k'\leq k}r_k+\chi\left(L_{\mu_1(k)}(S_{k})+L_{\mu_2(k)}(S_{k})\right)\right)
\end{eqnarray*}
Let $A=a\cdot \max_{k'\leq k}r_k+\chi\left(L_{\mu_1(k)}(S_{k})+L_{\mu_2(k)}(S_{k})\right)$. By applying Lemmas~\ref{lem:lem2-2} and \ref{lem:lem22}, we have
\begin{eqnarray*}
\sum_{k=1}^{n} w_{k}C_{k}   & \leq & \sum_{k=1}^{n}\left(\sum_{i \in \mathcal{I}} \alpha_{i, k}+\sum_{k=1}^{n}\sum_{j \in \mathcal{J}} \alpha_{j, k}\right)\cdot A\\
                            &      & +\sum_{k=1}^{n}\sum_{i \in \mathcal{I}}\sum_{k'\geq k}\beta_{i,k'}L_{i,k} \cdot A\\
														&      & +\sum_{k=1}^{n}\sum_{j \in \mathcal{J}}\sum_{k'\geq k}\beta_{j,k'}L_{j,k} \cdot A\\
\end{eqnarray*}
Let's begin by bounding $\sum_{k=1}^{n}\sum_{i \in \mathcal{I}} \alpha_{i, k} \cdot A+\sum_{k=1}^{n}\sum_{j \in \mathcal{J}} \alpha_{j, k} \cdot A$.
By applying Observation~\ref{obs:3} parts (\ref{obs:3-5}), (\ref{obs:3-6}) and (\ref{obs:3-7}), we have
\begin{flalign*}
      & \sum_{k=1}^{n}\left(\sum_{i \in \mathcal{I}} \alpha_{i, k}+\sum_{k=1}^{n}\sum_{j \in \mathcal{J}} \alpha_{j, k}\right)\cdot A \\
\leq  & \sum_{k=1}^{n}\left(\sum_{i \in \mathcal{I}} \alpha_{i, k}+\sum_{k=1}^{n}\sum_{j \in \mathcal{J}} \alpha_{j, k}\right)\left(a\cdot r_{k}+2\chi\cdot m\cdot \frac{r_{k}}{\kappa}\right) \\
\leq  & \left(a+\frac{2\chi\cdot m}{\kappa}\right)\sum_{k=1}^{n}\left(\sum_{i \in \mathcal{I}} \alpha_{i, k}+\sum_{k=1}^{n}\sum_{j \in \mathcal{J}} \alpha_{j, k}\right)\cdot r_{k}
\end{flalign*}
Now we bound $\sum_{k=1}^{n}\sum_{i \in \mathcal{I}}\sum_{k'\geq k}\beta_{i,k'}L_{i,k} \cdot A$. By applying Observation~\ref{obs:3} part (\ref{obs:3-3}), we have
\begin{flalign*}
      & \sum_{k=1}^{n}\sum_{i \in \mathcal{I}}\sum_{k'\geq k}\beta_{i,k'}L_{i,k} \cdot A\\
\leq  & \sum_{k=1}^{n}\sum_{i \in \mathcal{I}}\sum_{k'\geq k}\beta_{i,k'}L_{i,k}\left(a\cdot \max_{k'\leq k}r_{k'}+L_{\mu_1(k)}(S_{k})+L_{\mu_2(k)}(S_{k})\right) \\
\leq  & \sum_{k=1}^{n}\sum_{i \in \mathcal{I}}\sum_{k'\geq k}\beta_{i,k'}L_{i,k}\left(a\cdot \kappa \cdot \frac{L_{\mu_1(k)}(S_{k})}{m} + 2\chi\cdot L_{\mu_1(k)}(S_{k})\right) \\ 
\leq  & \left(a\cdot \kappa+2\chi\cdot m\right)\sum_{k'=1}^{n}\sum_{i \in \mathcal{I}}\sum_{k\leq k'}\beta_{i,k'}L_{i,k}\frac{L_{\mu_1(k)}(S_{k})}{m} \\  
\leq  & \left(a\cdot \kappa+2\chi\cdot m\right)\sum_{k'=1}^{n}\sum_{i \in \mathcal{I}}\beta_{i,k'}\sum_{k\leq k'}L_{i,k}\frac{L_{\mu_1(k)}(S_{k})}{m} \\
=     & \left(a\cdot \kappa+2\chi\cdot m\right)\sum_{k'=1}^{n}\sum_{i \in \mathcal{I}}\beta_{i,k'}L_{i}(S_{k})\frac{L_{\mu_1(k)}(S_{k})}{m} \\
\leq  & \left(a\cdot \kappa+2\chi\cdot m\right)\sum_{k'=1}^{n}\sum_{i \in \mathcal{I}}\beta_{i,k'}\frac{\left(L_{\mu_1(k)}(S_{k})\right)^2}{m}
\end{flalign*}
By sequentially applying Observation~\ref{obs:4} and Observation~\ref{obs:3} part (\ref{obs:3-1}), we can upper bound this expression by
\begin{flalign*}
   & 2\left(a\cdot \kappa+2\chi\cdot m\right)\sum_{i \in \mathcal{I}}\sum_{k=1}^{n}\beta_{i,k}f_{i}(S_{\mu_1(k),k}) \\
=  & 2\left(a\cdot \kappa+2\chi\cdot m\right)\sum_{k=1}^{n}\beta_{\mu_1(k),k}f_{i}(S_{\mu_1(k),k}) \\
\leq  & 2\left(a\cdot \kappa+2\chi\cdot m\right)\sum_{i \in \mathcal{I}}\sum_{S\subseteq \mathcal{K}}\beta_{i,S}f_{i}(S)
\end{flalign*}
By Observation~\ref{obs:4} and Observation~\ref{obs:3} parts (\ref{obs:3-2}) and (\ref{obs:3-4}), we also can obtain 
\begin{flalign*}
      & \sum_{k=1}^{n}\sum_{j \in \mathcal{J}}\sum_{k'\geq k}\beta_{j,k'}L_{j,k} \cdot A \\
\leq  & 2\left(a\cdot \kappa+2\chi\cdot m\right)\sum_{j \in \mathcal{J}}\sum_{S\subseteq \mathcal{K}}\beta_{j,S}f_{j}(S)
\end{flalign*}
Therefore,
\begin{eqnarray*}
\sum_{k}w_{k}C_{k} & \leq & \left(a+\frac{2\chi\cdot m}{\kappa}\right)\sum_{k \in \mathcal{K}}\sum_{i \in \mathcal{I}} \alpha_{i, k}(r_{k}) \\
                   &      & +\left(a+\frac{2\chi\cdot m}{\kappa}\right)\sum_{k \in \mathcal{K}}\sum_{j \in \mathcal{J}} \alpha_{j, k}(r_k) \\
                   &      & +2\left(a\cdot \kappa+2\chi\cdot m\right)\sum_{i \in \mathcal{I}}\sum_{S \subseteq \mathcal{K}}\beta_{i,S} f_{i}(S) \\
									 &      & +2\left(a\cdot \kappa+2\chi\cdot m\right)\sum_{j \in \mathcal{J}}\sum_{S \subseteq \mathcal{K}}\beta_{j,S} f_{j}(S) 
\end{eqnarray*}
\end{proof}

\begin{thm}\label{thm:thm21}
If $w_{k'}\geq w_{k}$, $L_{i,k'}\leq L_{i,k}$ and $L_{j,k'}\leq L_{j,k}$ hold for all $(k',k)\in E$, $i \in \mathcal{I}$ and $j \in \mathcal{J}$, there exists a deterministic, combinatorial, polynomial time algorithm that achieves an approximation ratio of $4\chi m+1$ for the coflow-level scheduling problem with release times.
\end{thm}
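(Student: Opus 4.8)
The plan is to reuse the two-step template from the proofs of Theorems~\ref{thm:thm1} and \ref{thm:thm2}, but now driven by Lemma~\ref{lem:lem23} rather than Lemma~\ref{lem:lem3}. Since the statement is the version \emph{with} release times, I would instantiate Lemma~\ref{lem:lem23} with $a=1$. This immediately produces the bound
\begin{eqnarray*}
\sum_{k}w_{k}C_{k} & \leq & \left(1+\frac{2\chi m}{\kappa}\right)\sum_{k \in \mathcal{K}}\sum_{i \in \mathcal{I}} \alpha_{i, k}r_{k} \\
                   &      & +\left(1+\frac{2\chi m}{\kappa}\right)\sum_{k \in \mathcal{K}}\sum_{j \in \mathcal{J}} \alpha_{j, k}r_k \\
                   &      & +2\left(\kappa+2\chi m\right)\sum_{i \in \mathcal{I}}\sum_{S \subseteq \mathcal{K}}\beta_{i,S} f_{i}(S) \\
                   &      & +2\left(\kappa+2\chi m\right)\sum_{j \in \mathcal{J}}\sum_{S \subseteq \mathcal{K}}\beta_{j,S} f_{j}(S),
\end{eqnarray*}
in which $\kappa$ is still a free constant to be optimized.

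The key step is to select $\kappa$ so that the $\alpha$-coefficient $1+\tfrac{2\chi m}{\kappa}$ and the $\beta$-coefficient $2\kappa+4\chi m$ agree, turning the right-hand side into a single common factor times a sum of dual terms. Setting $\kappa=\tfrac12$ makes both coefficients equal to $4\chi m+1$, so all four sums can be collected into one bracket multiplied by $4\chi m+1$. This mirrors exactly the substitution $\kappa=\tfrac12$ used in Theorem~\ref{thm:thm1}, and the arithmetic is routine.

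Finally, I would bound the bracketed expression by the objective value of the dual program~(\ref{incoflow:dual}). Since $L_{i,k},L_{j,k}\geq0$ and every dual variable is nonnegative, each $\alpha_{i,k}r_{k}$ is dominated by $\alpha_{i,k}(r_k+L_{i,k})$ (and symmetrically for the output side), while the $\gamma$ contributions to the dual objective are nonnegative; hence the bracketed sum is at most the full dual objective. Because the dual solution produced by Algorithm~\ref{Alg2_dual} is feasible, weak duality shows this dual objective is a lower bound on $OPT$, giving $\sum_k w_k C_k\leq(4\chi m+1)\,OPT$. Under the theorem's monotonicity hypotheses Lemma~\ref{lem:lem2-2} already forced $\gamma_{k',k}=0$ inside the derivation of Lemma~\ref{lem:lem23}, which is why no $\gamma$-term survives to spoil this clean factor.

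I expect the argument to be essentially mechanical once Lemma~\ref{lem:lem23} is available; the only point warranting attention is the bookkeeping of the additive constant. In the flow-level analysis, Lemma~\ref{lem:lem1} carried an extra $\left(1-\tfrac2m\right)C_k^{*}$ term and therefore contributed a residual $\left(1-\tfrac2m\right)OPT$, yielding $4\chi+2-\tfrac2m$; by contrast, the coflow-level bound in Lemma~\ref{lem:lem21} has no such term, so no additive $OPT$ piece should appear here. The main obstacle is thus simply to confirm that this cleaner form of Lemma~\ref{lem:lem21} propagates through Lemma~\ref{lem:lem23} without reintroducing a stray term, leaving the single factor $4\chi m+1$ exactly as stated.
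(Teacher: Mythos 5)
Your proposal is correct and follows essentially the same route as the paper: instantiate Lemma~\ref{lem:lem23} with $a=1$, substitute $\kappa=\tfrac12$ so that both the $\alpha$- and $\beta$-coefficients collapse to $4\chi m+1$, and then bound the collected dual terms by the dual objective and hence by $OPT$ via weak duality. Your explicit remarks on why the $\gamma$-terms vanish (Lemma~\ref{lem:lem2-2}) and why no residual $\left(1-\tfrac{2}{m}\right)OPT$ term appears here (Lemma~\ref{lem:lem21} versus Lemma~\ref{lem:lem1}) are accurate and merely make explicit what the paper leaves implicit.
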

\begin{proof}
To schedule coflows without release times, the application of Lemma~\ref{lem:lem23} (with $a = 1$) indicates the following:
\begin{eqnarray*}
\sum_{k}w_{k}C_{k} & \leq & \left(1+\frac{2\chi\cdot m}{\kappa}\right)\sum_{k \in \mathcal{K}}\sum_{i \in \mathcal{I}} \alpha_{i, k}(r_{k}) \\
                   &      & +\left(1+\frac{2\chi\cdot m}{\kappa}\right)\sum_{k \in \mathcal{K}}\sum_{j \in \mathcal{J}} \alpha_{j, k}(r_k) \\
                   &      & +2\left(\kappa+2\chi\cdot m\right)\sum_{i \in \mathcal{I}}\sum_{S \subseteq \mathcal{K}}\beta_{i,S} f_{i}(S) \\
									 &      & +2\left(\kappa+2\chi\cdot m\right)\sum_{j \in \mathcal{J}}\sum_{S \subseteq \mathcal{K}}\beta_{j,S} f_{j}(S) 
\end{eqnarray*}
In order to minimize the approximation ratio, we can substitute $\kappa=\frac{1}{2}$ and obtain the following result:
\begin{eqnarray*}
\sum_{k}w_{k}C_{k} & \leq & \left(4\chi\cdot m+1\right)\sum_{k \in \mathcal{K}}\sum_{i \in \mathcal{I}} \alpha_{i, k}(r_{k}) \\
                   &      & +\left(4\chi\cdot m+1\right)\sum_{k \in \mathcal{K}}\sum_{j \in \mathcal{J}} \alpha_{j, k}(r_k) \\
                   &      & +\left(4\chi\cdot m+1\right)\sum_{i \in \mathcal{I}}\sum_{S \subseteq \mathcal{K}}\beta_{i,S} f_{i}(S) \\
									 &      & +\left(4\chi\cdot m+1\right)\sum_{j \in \mathcal{J}}\sum_{S \subseteq \mathcal{K}}\beta_{j,S} f_{j}(S) \\
									 & \leq & \left(4\chi\cdot m+1\right) \cdot OPT.
\end{eqnarray*}
\end{proof}

\begin{thm}\label{thm:thm22}
If $w_{k'}\geq w_{k}$, $L_{i,k'}\leq L_{i,k}$ and $L_{j,k'}\leq L_{j,k}$ hold for all $(k',k)\in E$, $i \in \mathcal{I}$ and $j \in \mathcal{J}$, there exists a deterministic, combinatorial, polynomial time algorithm that achieves an approximation ratio of $4\chi m$ for the coflow-level scheduling problem without release times.
\end{thm}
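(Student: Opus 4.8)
The plan is to mirror the argument used for Theorem~\ref{thm:thm21}, specializing Lemma~\ref{lem:lem23} to the release-time-free regime and then invoking weak LP duality. First I would apply Lemma~\ref{lem:lem23} with $a = 0$, which kills the $a\cdot\kappa$ contributions inside the $\beta$ coefficients and leaves
\begin{eqnarray*}
\sum_{k}w_{k}C_{k} & \leq & \frac{2\chi m}{\kappa}\sum_{k \in \mathcal{K}}\sum_{i \in \mathcal{I}} \alpha_{i, k}r_k + \frac{2\chi m}{\kappa}\sum_{k \in \mathcal{K}}\sum_{j \in \mathcal{J}} \alpha_{j, k}r_k \\
                   &      & +4\chi m\sum_{i \in \mathcal{I}}\sum_{S \subseteq \mathcal{K}}\beta_{i,S} f_{i}(S) + 4\chi m\sum_{j \in \mathcal{J}}\sum_{S \subseteq \mathcal{K}}\beta_{j,S} f_{j}(S).
\end{eqnarray*}
I would highlight that, unlike the flow-level Lemma~\ref{lem:lem3}, this bound carries no residual $(1-\tfrac{2}{m})\cdot OPT$ term, because the coflow-level completion-time estimate in Lemma~\ref{lem:lem21} has no $\max_{i,j}d_{i,j,k}$ slack.

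Next I would set $\kappa = \tfrac{1}{2}$, which balances the $\alpha$-coefficient $\tfrac{2\chi m}{\kappa} = 4\chi m$ against the $\beta$-coefficient $4\chi m$, so that every surviving term shares the common factor $4\chi m$:
\begin{eqnarray*}
\sum_{k}w_{k}C_{k} & \leq & 4\chi m\Bigg(\sum_{k \in \mathcal{K}}\sum_{i \in \mathcal{I}} \alpha_{i, k}r_k + \sum_{k \in \mathcal{K}}\sum_{j \in \mathcal{J}} \alpha_{j, k}r_k \\
                   &      & \qquad +\sum_{i \in \mathcal{I}}\sum_{S \subseteq \mathcal{K}}\beta_{i,S} f_{i}(S) + \sum_{j \in \mathcal{J}}\sum_{S \subseteq \mathcal{K}}\beta_{j,S} f_{j}(S)\Bigg).
\end{eqnarray*}

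The final and conceptually central step is to recognize the parenthesized quantity as a lower bound for the dual objective of (\ref{incoflow:dual}). Since every $\alpha_{i,k}L_{i,k}$, $\alpha_{j,k}L_{j,k}$, and $\gamma_{k',\cdot,k}L_{\cdot,k}$ term appearing in that objective is nonnegative by the dual feasibility constraints (\ref{incoflow:dual:b})--(\ref{incoflow:dual:f}), discarding them only decreases the value; hence the parenthesized expression is at most the full dual objective, which by weak duality is at most $OPT$. This yields $\sum_{k}w_{k}C_{k} \leq 4\chi m \cdot OPT$. I expect no genuine obstacle here: the hypotheses $w_{k'}\geq w_k$, $L_{i,k'}\leq L_{i,k}$, and $L_{j,k'}\leq L_{j,k}$ do their work entirely upstream, where Lemma~\ref{lem:lem2-2} forces $\gamma_{k',k}=0$ and thereby lets the tight dual constraint of Lemma~\ref{lem:lem22} reduce to a clean $\alpha$--$\beta$ decomposition inside Lemma~\ref{lem:lem23}; once that lemma is available, the remainder is just the coefficient-balancing choice $\kappa=\tfrac12$ and the weak-duality inequality, identical to Theorem~\ref{thm:thm21} but with the release-time contribution switched off.
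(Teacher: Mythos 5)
Your proposal is correct and follows essentially the same route as the paper's proof: apply Lemma~\ref{lem:lem23} with $a=0$, set $\kappa=\tfrac12$ to equalize the coefficients at $4\chi m$, and bound the resulting expression by $4\chi m\cdot OPT$ via weak duality. The only difference is that you spell out explicitly why dropping the nonnegative $\alpha_{i,k}L_{i,k}$, $\alpha_{j,k}L_{j,k}$ and $\gamma$ terms from the dual objective is harmless, a step the paper leaves implicit.
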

\begin{proof}
To schedule coflows without release times, the application of Lemma~\ref{lem:lem23} (with $a = 0$) indicates the following:
\begin{eqnarray*}
\sum_{k}w_{k}C_{k} & \leq & \left(\frac{2\chi\cdot m}{\kappa}\right)\sum_{k \in \mathcal{K}}\sum_{i \in \mathcal{I}} \alpha_{i, k}(r_{k}) \\
                   &      & +\left(\frac{2\chi\cdot m}{\kappa}\right)\sum_{k \in \mathcal{K}}\sum_{j \in \mathcal{J}} \alpha_{j, k}(r_k) \\
                   &      & +2\left(2\chi\cdot m\right)\sum_{i \in \mathcal{I}}\sum_{S \subseteq \mathcal{K}}\beta_{i,S} f_{i}(S) \\
									 &      & +2\left(2\chi\cdot m\right)\sum_{j \in \mathcal{J}}\sum_{S \subseteq \mathcal{K}}\beta_{j,S} f_{j}(S) 
\end{eqnarray*}
In order to minimize the approximation ratio, we can substitute $\kappa=\frac{1}{2}$ and obtain the following result:
\begin{eqnarray*}
\sum_{k}w_{k}C_{k} & \leq & \left(4\chi\cdot m\right)\sum_{k \in \mathcal{K}}\sum_{i \in \mathcal{I}} \alpha_{i, k}(r_{k}) \\
                   &      & +\left(4\chi\cdot m\right)\sum_{k \in \mathcal{K}}\sum_{j \in \mathcal{J}} \alpha_{j, k}(r_k) \\
                   &      & +\left(4\chi\cdot m\right)\sum_{i \in \mathcal{I}}\sum_{S \subseteq \mathcal{K}}\beta_{i,S} f_{i}(S) \\
									 &      & +\left(4\chi\cdot m\right)\sum_{j \in \mathcal{J}}\sum_{S \subseteq \mathcal{K}}\beta_{j,S} f_{j}(S) \\
									 & \leq & 4\chi\cdot m \cdot OPT.
\end{eqnarray*}
\end{proof}

\begin{lem}\label{lem:lem21-1}
If $L_{i,k'}\leq L_{i,k}$ and $L_{j,k'}\leq L_{j,k}$ hold for all $(k',k)\in E$, $i \in \mathcal{I}$ and $j \in \mathcal{J}$, then the inequality $\sum_{k'\in\mathcal{K}|(k',k)\in E}\gamma_{k',k}-\sum_{k'\in\mathcal{K}|(k,k')\in E}\gamma_{k,k'}\leq (R-1)(\sum_{i \in \mathcal{I}} \alpha_{i, k}+\sum_{j \in \mathcal{J}} \alpha_{j, k}+\sum_{i \in \mathcal{I}}\sum_{k'\geq k}\beta_{i,k'}L_{i,k}+\sum_{j \in \mathcal{J}}\sum_{k'\geq k}\beta_{j,k'}L_{j,k})$ holds for all $k\in \mathcal{K}$.
\end{lem}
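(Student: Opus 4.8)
The plan is to mirror the argument of Lemma~\ref{lem:lem1-1}, adapting every quantity to the coflow-level dual program~(\ref{incoflow:dual}). As in that proof, I would first dispose of the easy case: if coflow $k$ is never promoted by a $\gamma$-adjustment in Algorithm~\ref{Alg2_dual}, then the outgoing $\gamma$ terms dominate and $\sum_{k'\in\mathcal{K}|(k',k)\in E}\gamma_{k',k}-\sum_{k'\in\mathcal{K}|(k,k')\in E}\gamma_{k,k'}\leq 0$, so the claim holds trivially because the right-hand side is nonnegative. I would also note, exactly as before, that it suffices to treat the case $L_{\mu_1(r)}>L_{\mu_2(r)}$, the case $L_{\mu_1(r)}\leq L_{\mu_2(r)}$ being symmetric and yielding the same bound.

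For the substantive case, suppose coflow $p$ is displaced by coflow $k$ through the setting of $\gamma_{k',k}$. I would introduce the shorthand
\begin{align*}
& B=\sum_{i\in\mathcal{I}}\sum_{k'\geq k}\beta_{i,k'}L_{i,k}+\sum_{j\in\mathcal{J}}\sum_{k'\geq k}\beta_{j,k'}L_{j,k}, \\
& B_{p}=\sum_{i\in\mathcal{I}}\sum_{k'\geq k}\beta_{i,k'}L_{i,p}+\sum_{j\in\mathcal{J}}\sum_{k'\geq k}\beta_{j,k'}L_{j,p}, \\
& H=\sum_{k'\in\mathcal{K}|(k',k)\in E}\gamma_{k',k}-\sum_{k'\in\mathcal{K}|(k,k')\in E}\gamma_{k,k'}, \\
& H_{p}=\sum_{k'\in\mathcal{K}|(k',p)\in E}\gamma_{k',p}-\sum_{k'\in\mathcal{K}|(p,k')\in E}\gamma_{p,k'},
\end{align*}
and set $R'=w_{k}/w_{p}$, now carrying along the two families $\alpha_{i,k},\alpha_{j,k}$ and $\beta_{i,S},\beta_{j,S}$ in place of the single family of the flow-level program. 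The displacement rule in Algorithm~\ref{Alg2_dual} forces the $\beta$-value of $k$ to match that of the displaced chain endpoint, which gives the identity $H=w_{k}-B-\rho\,(w_{p}-B_{p}-H_{p})$, where $\rho$ is the load ratio at the critical (maximum-load) input port and satisfies $\rho\geq 1$ because $L_{i,p}\leq L_{i,k}$.

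From here the chain of estimates is identical to Lemma~\ref{lem:lem1-1}: using $\rho\geq 1$ together with $w_{p}-B_{p}-H_{p}\geq 0$, then $B_{p}\leq B$ (both consequences of the hypotheses $L_{i,p}\leq L_{i,k}$ and $L_{j,p}\leq L_{j,k}$), and finally $H_{p}\leq 0$, I would obtain
\begin{eqnarray*}
H &\leq& w_{k}-B-(w_{p}-B_{p}-H_{p})\\
  &\leq& w_{k}-w_{p}+H_{p}\\
  &\leq& w_{k}-w_{p}=\tfrac{R'-1}{R'}w_{k}\leq\tfrac{R-1}{R}w_{k}.
\end{eqnarray*}
To convert this into the stated form I would invoke Lemma~\ref{lem:lem22}, whose tightness identity reads $\alpha_{k}+B+H=w_{k}$ with $\alpha_{k}=\sum_{i\in\mathcal{I}}\alpha_{i,k}+\sum_{j\in\mathcal{J}}\alpha_{j,k}$; combined with $H\leq\frac{R-1}{R}w_{k}$ this yields $w_{k}\leq R(\alpha_{k}+B)$, and hence $H\leq\frac{R-1}{R}w_{k}\leq(R-1)(\alpha_{k}+B)$, which is precisely the assertion.

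I expect the only delicate point to be justifying the identity $H=w_{k}-B-\rho\,(w_{p}-B_{p}-H_{p})$ and the nonnegativity $w_{p}-B_{p}-H_{p}\geq 0$ directly from the $\gamma$-setting step of Algorithm~\ref{Alg2_dual}, since the shorthand $B_{p}$ reuses the $\beta$-coefficients indexed by $k'\geq k$ rather than those proper to $p$; everything downstream is a mechanical replay of Lemma~\ref{lem:lem1-1} with the input and output dual families carried along in parallel.
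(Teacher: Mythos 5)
Your proposal is correct and takes essentially the same route as the paper: the paper's own proof of this lemma simply asserts the bound $H\leq\frac{R-1}{R}w_{k}$ in the adjusted case (implicitly by the same displacement argument as Lemma~\ref{lem:lem1-1}) and then combines it with the tightness identity of Lemma~\ref{lem:lem22} exactly as you do. Your write-up is in fact more explicit than the paper's, since you replay the $B$, $B_{p}$, $H$, $H_{p}$ chain in the coflow-level setting rather than citing it.
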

\begin{proof}
If coflow $k$ does not undergo the adjustment of the order by setting $\gamma_{k',k}$, then $\sum_{k'\in\mathcal{K}|(k',k)\in E}\gamma_{k',k}-\sum_{k'\in\mathcal{K}|(k,k')\in E}\gamma_{k,k'}\leq 0$. If coflow $k$ undergoes the adjustment of the order by setting $\gamma_{k',k}$, then we have $\sum_{k'\in\mathcal{K}|(k',k)\in E}\gamma_{k',k}-\sum_{k'\in\mathcal{K}|(k,k')\in E}\gamma_{k,k'}\leq \frac{R-1}{R}w_{k}$. Based on Lemma~\ref{lem:lem22}, we know that $\sum_{i \in \mathcal{I}} \alpha_{i, k}+\sum_{j \in \mathcal{J}} \alpha_{j, k}+\sum_{i \in \mathcal{I}}\sum_{k'\geq k}\beta_{i,k'}L_{i,k}+\sum_{j \in \mathcal{J}}\sum_{k'\geq k}\beta_{j,k'}L_{j,k}+\sum_{(k',k)\in E}\gamma_{k', k}-\sum_{(k,k')\in E}\gamma_{k, k'}= w_{k}$.
Thus, we obtain: 
$\sum_{k'\in\mathcal{K}|(k',k)\in E}\gamma_{k',k}-\sum_{k'\in\mathcal{K}|(k,k')\in E}\gamma_{k,k'} \leq (R-1)(\sum_{i \in \mathcal{I}} \alpha_{i, k}+\sum_{j \in \mathcal{J}} \alpha_{j, k}+\sum_{i \in \mathcal{I}}\sum_{k'\geq k}\beta_{i,k'}L_{i,k}+\sum_{j \in \mathcal{J}}\sum_{k'\geq k}\beta_{j,k'}L_{j,k})$.
This proof confirms the lemma.
\end{proof}

\begin{lem}\label{lem:lem23-1}
If $L_{i,k'}\leq L_{i,k}$ and $L_{j,k'}\leq L_{j,k}$ hold for all $(k',k)\in E$, $i \in \mathcal{I}$ and $j \in \mathcal{J}$, then the total cost of the schedule is bounded as follows.
\begin{eqnarray*}
\sum_{k}w_{k}C_{k} & \leq & R\left(a+\frac{2\chi\cdot m}{\kappa}\right)\sum_{k \in \mathcal{K}}\sum_{i \in \mathcal{I}} \alpha_{i, k}(r_{k}) \\
                   &      & +R\left(a+\frac{2\chi\cdot m}{\kappa}\right)\sum_{k \in \mathcal{K}}\sum_{j \in \mathcal{J}} \alpha_{j, k}(r_k) \\
                   &      & +2R\left(a\cdot \kappa+2\chi\cdot m\right)\sum_{i \in \mathcal{I}}\sum_{S \subseteq \mathcal{K}}\beta_{i,S} f_{i}(S) \\
									 &      & +2R\left(a\cdot \kappa+2\chi\cdot m\right)\sum_{j \in \mathcal{J}}\sum_{S \subseteq \mathcal{K}}\beta_{j,S} f_{j}(S) 
\end{eqnarray*}
\end{lem}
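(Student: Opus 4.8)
The plan is to mirror the argument of Lemma~\ref{lem:lem23}, which established exactly this bound (without the factor $R$) under the additional hypothesis $w_{k'}\geq w_{k}$. The only place that hypothesis was used there was through Lemma~\ref{lem:lem2-2}, which forced all $\gamma_{k',k}=0$ and so allowed us to replace $w_{k}$ in the tight dual constraint of Lemma~\ref{lem:lem22} by exactly the sum of its $\alpha$ and $\beta$ contributions. Since we now drop the weight hypothesis, the $\gamma$ terms can be nonzero, and the first task is to reintroduce them correctly via the newly proved Lemma~\ref{lem:lem21-1}.

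First I would invoke Lemma~\ref{lem:lem21} to obtain $\sum_{k}w_{k}C_{k}\leq \sum_{k}w_{k}\cdot A$, where $A=a\cdot \max_{k'\leq k}r_{k'}+\chi\left(L_{\mu_1(k)}(S_{k})+L_{\mu_2(k)}(S_{k})\right)$, exactly as at the start of the proof of Lemma~\ref{lem:lem23}. The crucial substitution comes next: by the tight dual constraint of Lemma~\ref{lem:lem22}, $w_{k}$ equals the full sum of its $\alpha$, $\beta$, and (net) $\gamma$ contributions; by Lemma~\ref{lem:lem21-1}, that net $\gamma$ contribution is at most $(R-1)$ times the combined $\alpha$ and $\beta$ contribution. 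Adding the $\alpha,\beta$ part to both sides yields $w_{k}\leq R\left(\sum_{i \in \mathcal{I}} \alpha_{i, k}+\sum_{j \in \mathcal{J}} \alpha_{j, k}+\sum_{i \in \mathcal{I}}\sum_{k'\geq k}\beta_{i,k'}L_{i,k}+\sum_{j \in \mathcal{J}}\sum_{k'\geq k}\beta_{j,k'}L_{j,k}\right)$. This is precisely where the factor $R$ enters: where Lemma~\ref{lem:lem23} used the equality $w_{k}=(\alpha,\beta\text{ terms})$, we now use the inequality $w_{k}\leq R\cdot(\alpha,\beta\text{ terms})$.

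With this substitution, $\sum_{k}w_{k}A$ is bounded by $R$ times $\sum_{k}(\alpha,\beta\text{ terms})\cdot A$, and from here the derivation is verbatim that of Lemma~\ref{lem:lem23}. I would bound the $\alpha$ block using Observation~\ref{obs:3} parts (\ref{obs:3-5}), (\ref{obs:3-6}), (\ref{obs:3-7}) to turn $A$ into $\left(a+\tfrac{2\chi m}{\kappa}\right)r_{k}$; bound the two $\beta$ blocks using Observation~\ref{obs:3} parts (\ref{obs:3-3}), (\ref{obs:3-4}) to replace the completion-time bound by loads, telescope $\sum_{k\leq k'}L_{i,k}=L_{i}(S_{k'})$, apply $L_{i}(S_{k'})\leq L_{\mu_1(k')}(S_{k'})$ from the maximality of $\mu_1$, and finally apply Observation~\ref{obs:4} together with Observation~\ref{obs:3} parts (\ref{obs:3-1}), (\ref{obs:3-2}) to convert the squared loads into $f_{i}(S)$ and $f_{j}(S)$. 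Carrying the extra factor $R$ uniformly through each of these three blocks produces exactly the claimed coefficients $R\left(a+\tfrac{2\chi m}{\kappa}\right)$ on the release-time terms and $2R\left(a\kappa+2\chi m\right)$ on the $f_{i}$ and $f_{j}$ terms.

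The main obstacle is bookkeeping rather than any new idea: one must perform the $\gamma$-substitution of Lemma~\ref{lem:lem21-1} at the level of the tight constraint \emph{before} decomposing $A$, so that the single factor $R$ multiplies the whole expression uniformly and no cross terms between the $\gamma$ contribution and $A$ survive. Once that ordering is respected, every subsequent inequality coincides with the corresponding step of Lemma~\ref{lem:lem23}, and the proof closes by reading off the coefficients.
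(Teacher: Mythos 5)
Your proposal is correct and follows essentially the same route as the paper: the paper likewise combines the tight constraint of Lemma~\ref{lem:lem22} with the bound of Lemma~\ref{lem:lem21-1} to obtain $w_{k}\leq R\cdot(\alpha,\beta\text{ terms})$ for every $k$, and then repeats the derivation of Lemma~\ref{lem:lem23} with the factor $R$ carried through each block. Your write-up is in fact more explicit than the paper's one-line reduction about where the factor $R$ enters and why no cross terms with $A$ arise.
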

\begin{proof}
According to lemma~\ref{lem:lem21-1}, we have
$\sum_{i \in \mathcal{I}} \alpha_{i, k}+\sum_{j \in \mathcal{J}} \alpha_{j, k}+\sum_{i \in \mathcal{I}}\sum_{k'\geq k}\beta_{i,k'}L_{i,k}+\sum_{j \in \mathcal{J}}\sum_{k'\geq k}\beta_{j,k'}L_{j,k}+\sum_{k'\in\mathcal{K}|(k',k)\in E}\gamma_{k',k}-\sum_{k'\in\mathcal{K}|(k,k')\in E}\gamma_{k,k'}\leq R(\sum_{i \in \mathcal{I}} \alpha_{i, k}+\sum_{j \in \mathcal{J}} \alpha_{j, k}+\sum_{i \in \mathcal{I}}\sum_{k'\geq k}\beta_{i,k'}L_{i,k}+\sum_{j \in \mathcal{J}}\sum_{k'\geq k}\beta_{j,k'}L_{j,k})$ holds for all $k\in \mathcal{K}$.
Then, following a similar proof to lemma~\ref{lem:lem23}, we can derive result
\begin{eqnarray*}
\sum_{k}w_{k}C_{k} & \leq & R\left(a+\frac{2\chi\cdot m}{\kappa}\right)\sum_{k \in \mathcal{K}}\sum_{i \in \mathcal{I}} \alpha_{i, k}(r_{k}) \\
                   &      & +R\left(a+\frac{2\chi\cdot m}{\kappa}\right)\sum_{k \in \mathcal{K}}\sum_{j \in \mathcal{J}} \alpha_{j, k}(r_k) \\
                   &      & +2R\left(a\cdot \kappa+2\chi\cdot m\right)\sum_{i \in \mathcal{I}}\sum_{S \subseteq \mathcal{K}}\beta_{i,S} f_{i}(S) \\
									 &      & +2R\left(a\cdot \kappa+2\chi\cdot m\right)\sum_{j \in \mathcal{J}}\sum_{S \subseteq \mathcal{K}}\beta_{j,S} f_{j}(S) 
\end{eqnarray*}
\end{proof}

By employing analogous proof techniques to theorems~\ref{thm:thm21} and \ref{thm:thm22}, we can establish the validity of the following two theorems:
\begin{thm}\label{thm:thm21-1}
If $L_{i,k'}\leq L_{i,k}$ and $L_{j,k'}\leq L_{j,k}$ hold for all $(k',k)\in E$, $i \in \mathcal{I}$ and $j \in \mathcal{J}$, then there exists a deterministic, combinatorial, polynomial time algorithm that achieves an approximation ratio of $4R\chi m+R$ for the flow-level scheduling problem with release times.
\end{thm}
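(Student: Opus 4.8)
The plan is to mirror the argument of Theorem~\ref{thm:thm21} verbatim, simply replacing the weight-monotone bound of Lemma~\ref{lem:lem23} by its $R$-scaled counterpart, Lemma~\ref{lem:lem23-1}, which already absorbs the precedence (i.e.\ $\gamma$) terms through Lemma~\ref{lem:lem21-1}. The release-time regime forces the choice $a=1$, so I would instantiate Lemma~\ref{lem:lem23-1} with $a=1$. This bounds $\sum_k w_k C_k$ by a nonnegative linear combination of the four dual quantities $\sum_k\sum_i \alpha_{i,k} r_k$, $\sum_k\sum_j \alpha_{j,k} r_k$, $\sum_i\sum_S \beta_{i,S} f_i(S)$ and $\sum_j\sum_S \beta_{j,S} f_j(S)$, where the $\alpha$-terms carry coefficient $R\!\left(1+\tfrac{2\chi m}{\kappa}\right)$ and the $\beta$-terms carry coefficient $2R\!\left(\kappa+2\chi m\right)$, for a free constant $\kappa>0$ to be fixed.

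Next I would optimize $\kappa$. The $\alpha$-coefficient is decreasing in $\kappa$ while the $\beta$-coefficient is increasing, so the maximum of the two is minimized where they coincide; solving $1+\tfrac{2\chi m}{\kappa}=2(\kappa+2\chi m)$ yields $\kappa=\tfrac12$ as the relevant root. Substituting $\kappa=\tfrac12$ collapses all four coefficients to the common value $4R\chi m+R$, since $R\!\left(1+\tfrac{2\chi m}{1/2}\right)=R(1+4\chi m)$ and $2R\!\left(\tfrac12+2\chi m\right)=R(1+4\chi m)$ agree.

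Finally I would pass from the partial dual sum to $OPT$. Each of the four summands is nonnegative, and since $\alpha_{i,k} r_k\le \alpha_{i,k}(r_k+L_{i,k})$ (and likewise for the output-side $\alpha_{j,k}$), while the discarded terms $\sum\alpha_{i,k}L_{i,k}$, $\sum\alpha_{j,k}L_{j,k}$ and the two families of $\gamma L$ contributions in the dual objective~\eqref{incoflow:dual} are all nonnegative, the four quantities together are at most the full dual objective. By weak LP duality that objective is a lower bound on $OPT$, giving $\sum_k w_k C_k\le (4R\chi m+R)\cdot OPT$, exactly as claimed.

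I do not expect a genuine obstacle here: the substantive work—propagating the completion-time bound along the longest DAG path (Lemma~\ref{lem:lem21}) and converting the precedence adjustment into the multiplicative factor $R$ (Lemmas~\ref{lem:lem21-1} and~\ref{lem:lem23-1})—is already discharged upstream. The only points needing care are verifying that $\kappa=\tfrac12$ truly balances the two coefficient expressions (a one-line check that they cross at $\kappa=\tfrac12$) and ensuring the direction of the final inequality is preserved, which hinges on every dropped dual term being nonnegative so that restricting to the $\alpha r$ and $\beta f$ parts only decreases the dual value.
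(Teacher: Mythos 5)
Your proposal is correct and follows exactly the route the paper intends: the paper gives no explicit proof of Theorem~\ref{thm:thm21-1}, stating only that it follows ``by analogous proof techniques'' to Theorems~\ref{thm:thm21} and~\ref{thm:thm22}, and your argument---instantiating Lemma~\ref{lem:lem23-1} with $a=1$, balancing the coefficients at $\kappa=\tfrac12$ so that both collapse to $4R\chi m+R$, and invoking nonnegativity of the dropped dual terms plus weak duality---is precisely that analogy carried out in full. The arithmetic checks out ($R(1+4\chi m)=2R(\tfrac12+2\chi m)=4R\chi m+R$), so nothing further is needed.
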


\begin{thm}\label{thm:thm22-1}
If $L_{i,k'}\leq L_{i,k}$ and $L_{j,k'}\leq L_{j,k}$ hold for all $(k',k)\in E$, $i \in \mathcal{I}$ and $j \in \mathcal{J}$, then there exists a deterministic, combinatorial, polynomial time algorithm that achieves an approximation ratio of $4R\chi m$ for the flow-level scheduling problem without release times.
\end{thm}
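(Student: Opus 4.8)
The plan is to replay the proof of Theorem~\ref{thm:thm22} verbatim, with its single appeal to Lemma~\ref{lem:lem23} replaced by the weaker-hypothesis analogue Lemma~\ref{lem:lem23-1}. Because no release times are present, I would instantiate Lemma~\ref{lem:lem23-1} with $a=0$; this annihilates every term carrying the factor $a$ and leaves
\begin{eqnarray*}
\sum_{k}w_{k}C_{k} & \leq & R\left(\frac{2\chi\cdot m}{\kappa}\right)\sum_{k \in \mathcal{K}}\sum_{i \in \mathcal{I}} \alpha_{i, k}r_{k} +R\left(\frac{2\chi\cdot m}{\kappa}\right)\sum_{k \in \mathcal{K}}\sum_{j \in \mathcal{J}} \alpha_{j, k}r_k \\
                   &      & +4R\chi m\sum_{i \in \mathcal{I}}\sum_{S \subseteq \mathcal{K}}\beta_{i,S} f_{i}(S) +4R\chi m\sum_{j \in \mathcal{J}}\sum_{S \subseteq \mathcal{K}}\beta_{j,S} f_{j}(S).
\end{eqnarray*}

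Next I would optimize the free constant $\kappa$. The $\alpha$-terms carry coefficient $2R\chi m/\kappa$ and the $\beta$-terms carry $4R\chi m$; the choice $\kappa=\frac{1}{2}$ equalizes the two, so that every coefficient in the bound collapses to exactly $4R\chi m$. Factoring this common constant out, the remaining sum of $\alpha r_k$ and $\beta f$ terms is dominated by the full objective of the dual program~(\ref{incoflow:dual}), since the objective's remaining pieces---the loads $\alpha_{i,k}L_{i,k}$, $\alpha_{j,k}L_{j,k}$ and the precedence contributions $\gamma_{k',i,k}L_{i,k}$, $\gamma_{k',j,k}L_{j,k}$---are all nonnegative. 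As any feasible dual value lower-bounds $OPT$, I conclude $\sum_k w_k C_k \leq 4R\chi m\cdot OPT$, which is the claimed ratio.

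I expect the real content to lie not in this theorem---which is a two-line substitution---but in Lemma~\ref{lem:lem23-1}, on which everything rests. The obstacle it overcomes is that, once the weight-monotonicity hypothesis $w_{k'}\geq w_k$ is dropped, the $\gamma$ variables need no longer vanish, so the device of Lemma~\ref{lem:lem2-2} (which set $\gamma_{k',k}=0$) is unavailable. Lemma~\ref{lem:lem21-1} repairs this by bounding, for each coflow, the net $\gamma$ mass by $(R-1)$ times the corresponding $\alpha$-and-$\beta$ mass; combined with the tightness identity of Lemma~\ref{lem:lem22}, this inflates the $\alpha$ and $\beta$ coefficients by the single factor $R$ and otherwise preserves the derivation of Lemma~\ref{lem:lem23}. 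Granting Lemma~\ref{lem:lem23-1}, the present theorem follows from the two mechanical steps above.
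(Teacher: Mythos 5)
Your proposal is correct and coincides with the paper's intended argument: the paper itself omits the proof of Theorem~\ref{thm:thm22-1}, stating only that it follows ``by employing analogous proof techniques'' to Theorems~\ref{thm:thm21} and~\ref{thm:thm22}, and your two steps---instantiating Lemma~\ref{lem:lem23-1} with $a=0$, choosing $\kappa=\tfrac12$ so every coefficient equals $4R\chi m$, and invoking weak duality after discarding the nonnegative remaining pieces of the dual objective~(\ref{incoflow:dual})---are exactly that analogue. Your accompanying diagnosis that the substantive content lives in Lemmas~\ref{lem:lem21-1} and~\ref{lem:lem23-1}, which absorb the no-longer-vanishing $\gamma$ mass into a factor of $R$ on the $\alpha$ and $\beta$ terms, also matches the paper's structure.
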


%%%%%%%%%%%%%%%%%%%%%%%%%%%%%%%%%%%%%%%%%%%%%%%%%%%%%%%%%%%%%%%%%%%%%%%%%%%%%%%%%%%%%%%%%%%%%%%%%%%%%%%%%%%%%%%%%%%%%%%%%%%%%%%%%%%%%%%%%%%%%%%%%%%%

\section{Coflows of Multi-stage Jobs Scheduling Problem}\label{sec:Algorithm6}
In this section, we will focus on addressing the coflows of multi-stage job scheduling problem. We will modify the linear programs~(\ref{coflow:main}) by introducing a set $\mathcal{T}$ to represent the jobs and a set $\mathcal{T}_{t}$ to represent the coflows that belong to job $t$. We will also incorporate an additional constraint~(\ref{job:coflow:b}), which will ensure that the completion time of any job is limited by its coflows. Our objective is to minimize the total weighted completion time for a given set of multi-stage jobs. Assuming that all coflows within the same job have the same release time. The resulting problem can be expressed as a linear programming relaxation, which is as follows:
\begin{subequations}\label{job:coflow:main}
\begin{align}
& \text{min}  && \sum_{t \in \mathcal{T}} w_{t} C_{t}     &   & \tag{\ref{job:coflow:main}} \\
& \text{s.t.} && (\ref{coflow:a})-(\ref{coflow:d}) &&  \notag \\
&  && C_{t} \geq C_{k},&& \forall t\in \mathcal{T}, \forall k\in \mathcal{T}_{t} \label{job:coflow:b} 
\end{align}
\end{subequations}

The dual linear program is given by
\begin{subequations}\label{job:coflow:dual}
\begin{align}
& \text{max}  && \sum_{k \in \mathcal{K}}\sum_{i \in \mathcal{I}}\sum_{j \in \mathcal{J}} \alpha_{i, j, k}(r_k+d_{i,j,k}) &   &\notag\\
&   && +\sum_{i \in \mathcal{I}}\sum_{S \subseteq \mathcal{F}_{i}}\beta_{i,S} f(S)     &   & \notag\\
&   && +\sum_{j \in \mathcal{J}}\sum_{S \subseteq \mathcal{F}_{j}}\beta_{j,S} f(S)     &   & \notag \\
&   && + \sum_{(k', k) \in E}\sum_{i \in \mathcal{I},j \in \mathcal{J}}\gamma_{k', i, j, k} d_{i,j,k}     &   & \tag{\ref{job:coflow:dual}} \\
& \text{s.t.} && \sum_{k\in \mathcal{T}_{t}}\sum_{i \in \mathcal{I}}\sum_{j \in \mathcal{J}} \alpha_{i, j, k} &   & \notag\\
&   && +\sum_{k\in \mathcal{T}_{t}}\sum_{i \in \mathcal{I}}\sum_{S\subseteq \mathcal{F}_{i}}\beta_{i,S}L_{i,S,k} &   &\notag\\
&   && +\sum_{k\in \mathcal{T}_{t}}\sum_{j \in \mathcal{J}}\sum_{S\subseteq \mathcal{F}_{j}}\beta_{j,S}L_{j,S,k} &   &\notag\\
&   && +\sum_{k\in \mathcal{T}_{t}}\sum_{(k',k)\in E}\sum_{i \in \mathcal{I},j \in \mathcal{J}}\gamma_{k', i, j, k} &   &\notag\\
&   && -\sum_{k\in \mathcal{T}_{t}}\sum_{(k,k')\in E}\sum_{i \in \mathcal{I},j \in \mathcal{J}}\gamma_{k, i, j, k'}\leq w_{t}, && \forall t\in \mathcal{T} \label{job:coflow:dual:a} \\
&  && \alpha_{i, j, k} \geq 0, && \forall k\in \mathcal{K}, \forall i\in \mathcal{I}, \notag\\
&  &&                               &&  \forall j\in \mathcal{J} \label{job:coflow:dual:b} \\
&  && \beta_{i, S}\geq 0,   &&  \forall i\in \mathcal{I}, \forall S\subseteq \mathcal{F}_{i} \label{job:coflow:dual:c} \\
&  && \beta_{j, S}\geq 0,   &&  \forall j\in \mathcal{J}, \forall S\subseteq \mathcal{F}_{j} \label{job:coflow:dual:d} \\
&  && \gamma_{k', i, j, k}\geq 0,   &&  \forall (k', k)\in E,  \notag\\
&  &&                               &&  \forall i\in \mathcal{I}, \forall j\in \mathcal{J}\label{job:coflow:dual:e}
\end{align}
\end{subequations}

Let $\alpha_{i, j, t} = \sum_{k\in \mathcal{T}_{t}} \alpha_{i, j, k}, L_{i,S,t}=\sum_{k\in \mathcal{T}_{t}} L_{i,S,k}$ and $L_{j,S,t}=\sum_{k\in \mathcal{T}_{t}} L_{j,S,k}$ for all $t\in \mathcal{T}$.
Algorithm~\ref{Alg_dual3} in Appendix~\ref{appendix:c} determines the order of job scheduling. Since there are no precedence constraints among the jobs, there is no need to set $\gamma$ to satisfy precedence constraints. We transmit the jobs sequentially, and within each job, the coflows are transmitted in topological-sorting order. As the values of $\gamma$ are all zero, similar to the proof of Theorem~\ref{thm:thm1}, we can obtain the following theorem. Unlike Theorem~\ref{thm:thm1}, this result is not limited to the workload sizes and weights that are topology-dependent in the input instances.

\begin{thm}\label{thm4:thm11}
The proposed algorithm achieves an approximation ratio of $O(\chi)$ for minimizing the total weighted completion time of a given set of multi-stage jobs.
\end{thm}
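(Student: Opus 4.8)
The plan is to follow the proof of Theorem~\ref{thm:thm1} almost verbatim, exploiting the fact that in the multi-stage setting every dual variable $\gamma$ is zero. First I would prove a job-level analogue of Lemma~\ref{lem:lem1}. Because the coflows of a job $t$ are transmitted in topological order and share the common release time, the job finishes exactly when its last coflow finishes, so $C_t=\max_{k\in\mathcal{T}_t}C_k$. Bounding each $C_k$ by accumulating the per-stage completion-time estimate of Lemma~\ref{lem:lem1} along the longest chain of coflows ending at $k$ --- a chain containing at most $\chi$ coflows --- gives a bound of the form $C_t\leq a\cdot\max_{k'\leq k}r_{k'}+\chi\frac{d(S_{\mu_1(k),k})+d(S_{\mu_2(k),k})}{m}+\left(1-\frac{2}{m}\right)C_t^*$, which is precisely the shape needed to reuse the later steps.

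The second ingredient is the claim that $\gamma_{k',i,j,k}=0$ for every precedence edge $(k',k)\in E$. This is where the job structure does the work, and why the result, unlike Theorem~\ref{thm:thm1}, needs no topology-dependence hypotheses: Algorithm~\ref{Alg_dual3} orders whole jobs rather than individual coflows, there is no precedence between distinct jobs, and the within-job precedence is already honoured by the topological sort. Hence the algorithm never has to promote a successor coflow by raising a $\gamma$ variable, and the $\gamma$ terms drop out of the tight dual constraint~(\ref{job:coflow:dual:a}). This yields the job-level counterpart of Lemma~\ref{lem:lem2}, in which $w_t$ equals the $\alpha$- and $\beta$-contributions aggregated over all coflows $k\in\mathcal{T}_t$, with no $\gamma$ term.

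With these two facts I would substitute the completion-time bound into $\sum_t w_t C_t$, replace each $w_t$ by the tight dual expression, and then bound the resulting $\alpha$-term via the analogues of Observation~\ref{obs:1} parts (\ref{obs:1-5}), (\ref{obs:1-6}) and (\ref{obs:1-7}), and the two $\beta$-terms via the analogues of parts (\ref{obs:1-1}), (\ref{obs:1-2}), (\ref{obs:1-3}) and (\ref{obs:1-4}) together with Observation~\ref{obs:2}, exactly as in Lemma~\ref{lem:lem3}. These observations transfer unchanged because Algorithm~\ref{Alg_dual3} raises the same $\alpha$ and $\beta$ variables under the same threshold rule governed by $\kappa$. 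The outcome writes $\sum_t w_t C_t$ as an $O(\chi)$ multiple of the dual objective plus a $\left(1-\frac{2}{m}\right)\cdot OPT$ remainder; choosing $\kappa=\frac{1}{2}$ and applying weak duality (the dual objective is at most $OPT$) collapses everything to $O(\chi)\cdot OPT$.

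I expect the main obstacle to be the rigorous justification that $\gamma\equiv 0$ straight from the scheduling rule of Algorithm~\ref{Alg_dual3}, since here it cannot be inherited from the weight/load assumptions of Lemma~\ref{lem:lem1-2}. A secondary point is the bookkeeping of the aggregated quantities $\alpha_{i,j,t}=\sum_{k\in\mathcal{T}_t}\alpha_{i,j,k}$, $L_{i,S,t}$ and $L_{j,S,t}$: one must check that the per-coflow load factors produced by the job completion-time bound match the aggregated loads appearing in~(\ref{job:coflow:dual:a}), so that the step $d(S)^2\leq 2m\,f(S)$ from Observation~\ref{obs:2} applies without modification.
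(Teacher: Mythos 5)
Your proposal matches the paper's own argument, which is itself only a sketch: the paper likewise observes that Algorithm~\ref{Alg_dual3} never raises any $\gamma$ variable because there are no precedence constraints between jobs (coflows within a job being handled by topological ordering), and then invokes the proof of Theorem~\ref{thm:thm1} with the $\gamma$ terms absent to get the $O(\chi)$ ratio without the topology-dependence hypotheses. Your reconstruction of the job-level analogues of Lemma~\ref{lem:lem1}, Lemma~\ref{lem:lem2} and Lemma~\ref{lem:lem3} is exactly the route intended, and the two obstacles you flag (justifying $\gamma\equiv 0$ directly from the algorithm, and the bookkeeping of the aggregated $\alpha_{i,j,t}$, $L_{i,S,t}$, $L_{j,S,t}$) are precisely the details the paper leaves implicit.
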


\section{Experimental Results}\label{sec:Results}
In order to evaluate the effectiveness of the proposed algorithm, this section conducts simulations comparing its performance to that of a previous algorithm. Both synthetic and real traffic traces are used for these simulations, without considering release time. The subsequent sections present and analyze the results obtained from these simulations.

\subsection{Comparison Metrics}
Since the cost of the feasible dual solution provides a lower bound on the optimal value of the coflow scheduling problem, we calculate the approximation ratio by dividing the total weighted completion time achieved by the algorithms by the cost of the feasible dual solution.

\subsection{Randomly Generated Graphs}
In this section, we examine a collection of randomly generated graphs that are created based on a predefined set of fundamental characteristics.
\begin{itemize}
\item \textit{DAG size}, $n$: The number of coflows in the DAG.
\item \textit{Out degree}, $deg$: Out degree of a node.
\item \textit{Parallelism factor}, ($p$)~\cite{Daoud08}: The calculation of the levels in the DAG involves randomly generating a number from a uniform distribution. The mean value of this distribution is $\frac{\sqrt{n}}{p}$. The generated number is then rounded up to the nearest integer, determining the number of levels. Additionally, the width of each level is calculated by randomly generating a number from a uniform distribution. The mean value for this distribution is $p \times \sqrt{n}$, and it is also rounded up to the nearest integer~\cite{Topcuoglu02}. Graphs with a larger value of $p$ tend to have a smaller $\chi$, while those with a smaller value of $p$ have a larger $\chi$.
\item \textit{Workload}, $(W_{min}, W_{max}, L_{min}, L_{max})$~\cite{shafiee2018improved}: 
Each coflow is accompanied by a description $(W_{min}, W_{max}, L_{min}, L_{max})$ that provides information about its characteristics. To determine the number of non-zero flows within a coflow, two values, $w_1$ and $w_2$, are randomly selected from the interval $[W_{min}, W_{max}]$. These values are then assigned to the input and output links of the coflow in a random manner. The size of each flow is randomly chosen from the interval $[L_{min}, L_{max}]$. The construction of all coflows by default follows a predefined distribution based on the coflow descriptions. This distribution consists of four configurations: $(1, 4, 1, 10)$, $(1, 4, 10, 1000)$, $(4, N, 1, 10)$, and $(4, N, 10, 1000)$, with proportions of $41\%$, $29\%$, $9\%$, and $21\%$, respectively. Here, $N$ represents the number of ports in the core.
\end{itemize}

Let $level_k$ denote the level of coflow $k$, and let $Lv(k)=\left\{k'\in \mathcal{K} | level_k < level_{k'}\right\}$ represent the set of coflows that have a higher level than $k$. When constructing a DAG, only a subset of $Lv(k)$ can be selected as successors for each coflow $k$. For coflow $k$, a set of successors is randomly chosen with a probability of $\frac{deg}{\left|Lv(k)\right|}$. To assign weights to each coflow, positive integers are randomly and uniformly selected from the interval $[1, 100]$.

\subsection{Results}
Figure~\ref{fig:ratio4} illustrates the approximation ratio of the proposed algorithm compared to the previous algorithm for synthetic traces. The problem size ranges from 5 to 25 coflows in five network cores, with input and output links set to $N=10$. For each instance, we set $deg=3, p=1$, and $\chi\geq 2$. The proposed algorithms demonstrate significantly smaller approximation ratios than $4\chi+2-\frac{2}{m}$. Furthermore, FDLS outperforms Weaver by approximately $4.7\%$ to $7.5\%$ within this problem size range. Although there are no restrictions on the workload's load and weights being topology-dependent for each instance, we still obtain results lower than $4\chi+2-\frac{2}{m}$. This demonstrates the excellent performance of the algorithm in general scenarios.
\begin{figure}[!ht]
    \centering
        \includegraphics[width=3.8in]{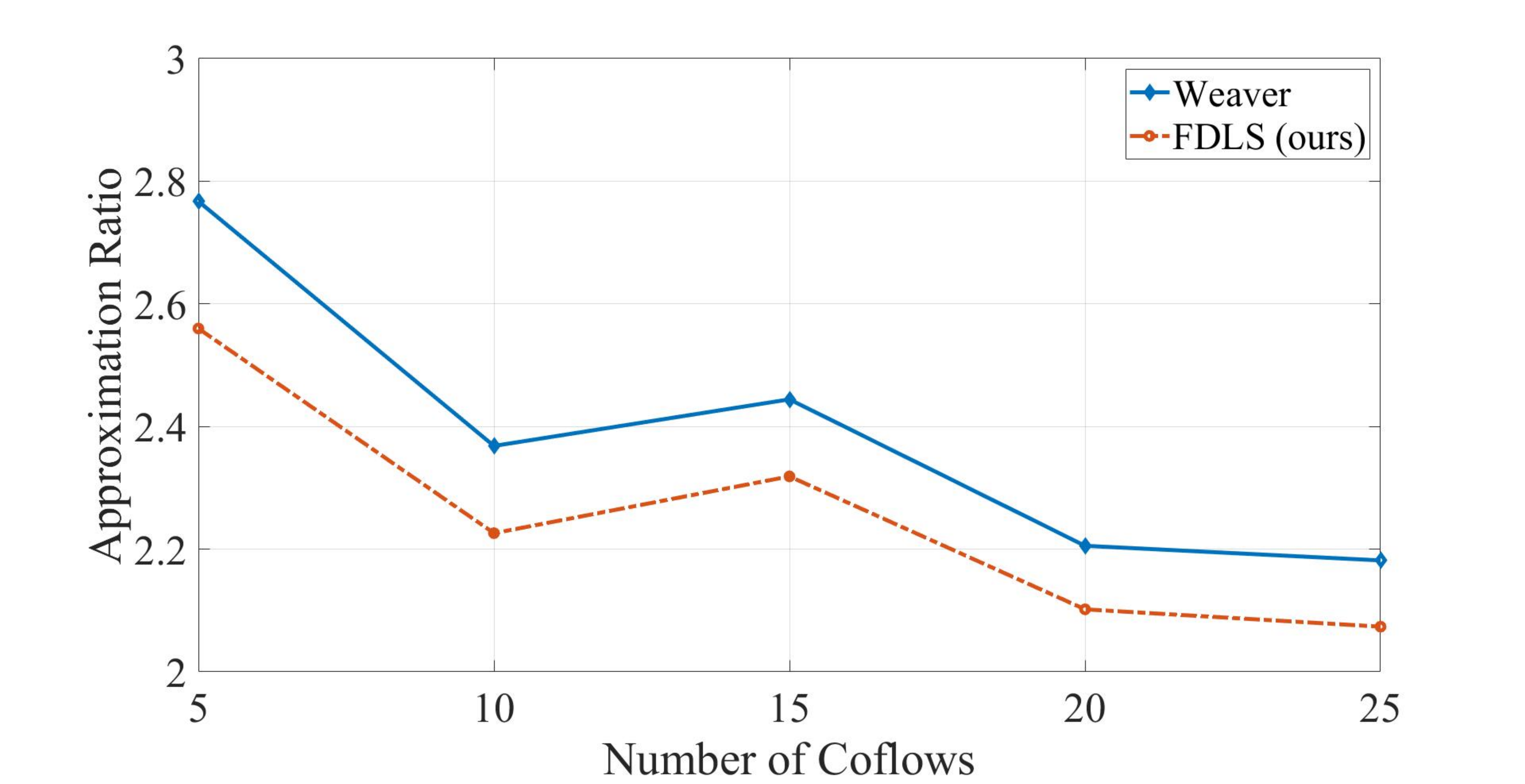}
    \caption{The approximation ratio of synthetic traces between the previous algorithm and the proposed algorithm when
all coflows release at time 0.}
    \label{fig:ratio4}
\end{figure}

The effects of flow density were compared by categorizing the coflows into three instances: dense, sparse, and combined. For each instance, the number of flows was randomly selected from either the range $[N, N^2]$ or $[1, N]$, depending on the specific instance. In the combined instance, each coflow has a $50\%$ probability of being set to sparse and a $50\%$ probability of being set to dense. Figure~\ref{fig:ratio5} illustrates the approximation ratio of synthetic traces for 100 randomly chosen dense and combined instances, comparing the previous algorithm with the proposed algorithm. The problem size consisted of 25 coflows in five network cores, with input and output links set to $N=10$. For each instance, we set $deg=3, p=1$, and $\chi\geq 2$. In the dense case, Weaver achieved an approximation ratio of 2.80, while FDLS achieved an approximation ratio of 2.66, resulting in a $5.12\%$ improvement with Weaver. In the combined case, FDLS outperformed Weaver by $2.52\%$. Importantly, the proposed algorithm demonstrated a greater improvement in the dense case compared to the combined case.

\begin{figure}[!ht]
    \centering
        \includegraphics[width=3.8in]{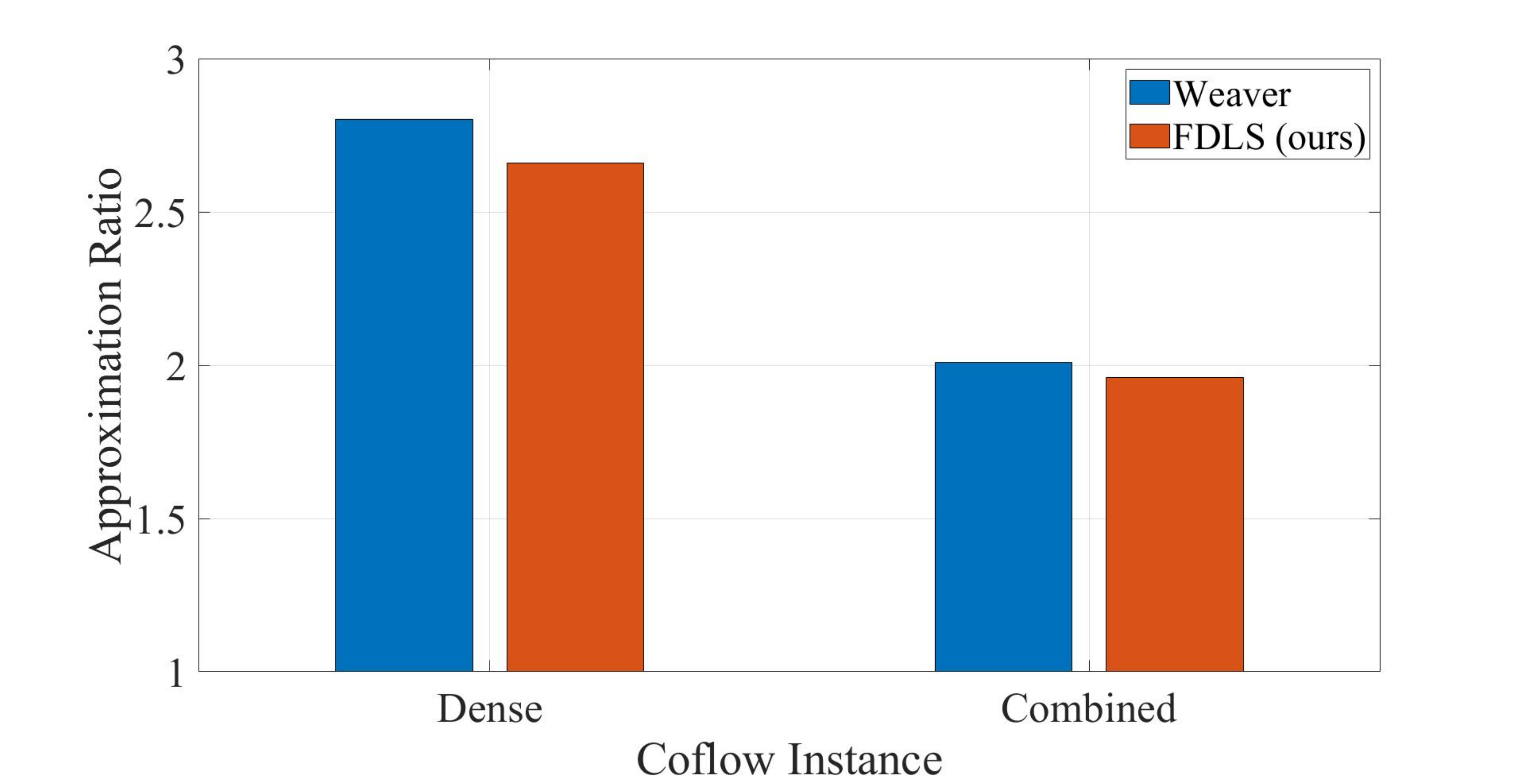}
    \caption{The approximation ratio of synthetic traces between the previous algorithm and the proposed algorithm for different number of coflows when
all coflows release at time 0 for 100 random dense and combined instances.}
    \label{fig:ratio5}
\end{figure}

Figure~\ref{fig:ratio6} illustrates the approximation ratio of synthetic traces for varying numbers of network cores, comparing the previous algorithm to the proposed algorithm when all coflows are released simultaneously at time 0. The problem size consists of 25 coflows distributed across 5 to 25 network cores, with input and output links set to $N=10$. For each instance, we set $deg=3, p=1$, and $\chi\geq 2$. Remarkably, the proposed algorithm consistently achieves significantly smaller approximation ratios compared to the theoretical bound of $4\chi+2-\frac{2}{m}$. As the number of network cores increases, the approximation ratio also tends to increase. This observation can be attributed to the widening gap between the cost of the feasible dual solution and the cost of the optimal integer solution as the number of network cores grows. Consequently, this leads to a notable discrepancy between the experimental approximation ratio and the actual approximation ratio. Importantly, across different numbers of network cores, FDLS outperforms Weaver by approximately $1.79\%$ to $5.30\%$.

\begin{figure}[!ht]
    \centering
        \includegraphics[width=3.8in]{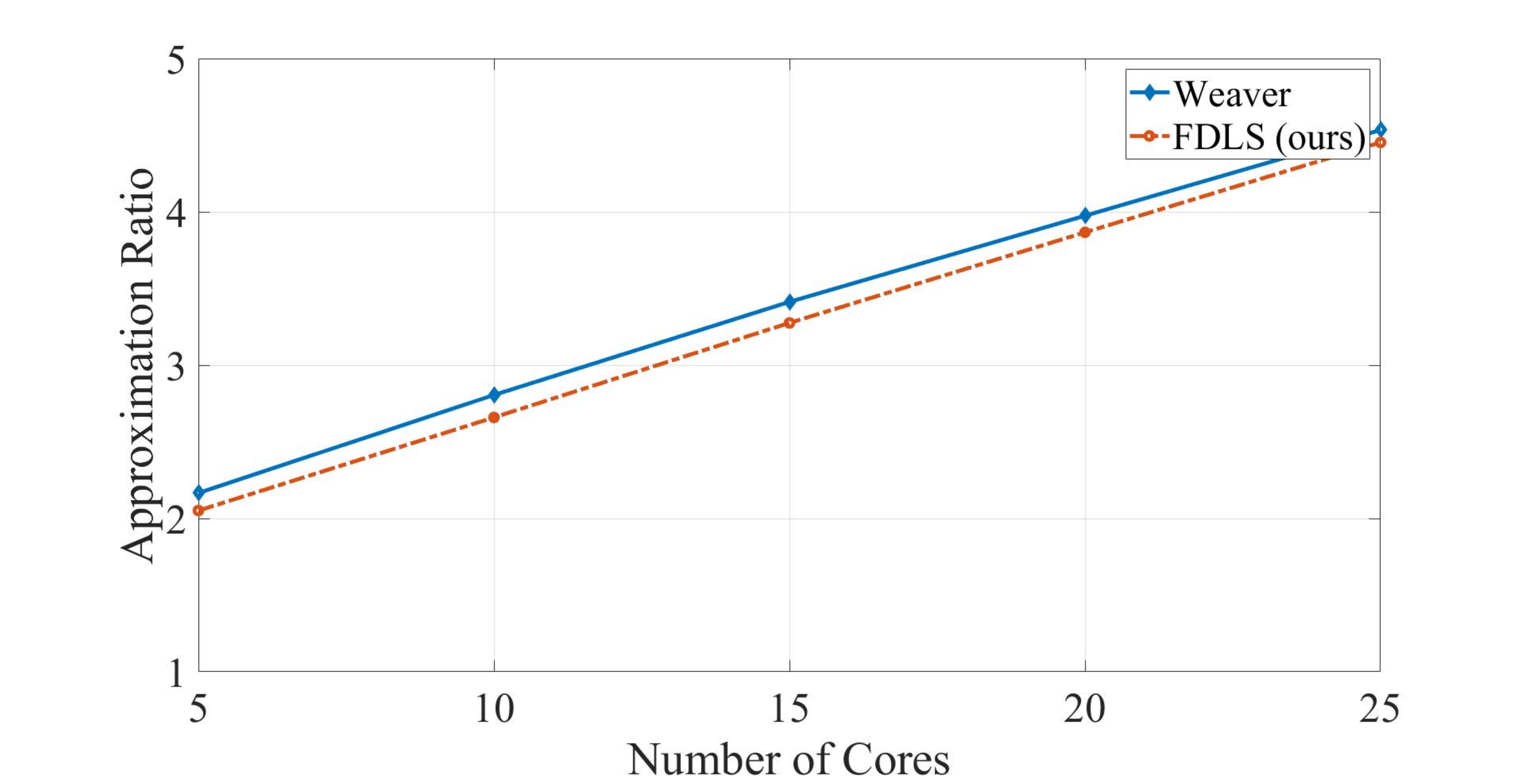}
    \caption{The approximation ratio of synthetic traces between the previous algorithm and the proposed algorithm for different number of network cores when
all coflows release at time 0.}
    \label{fig:ratio6}
\end{figure}

Figure~\ref{fig:ratio6_1} illustrates the approximation ratio of synthetic traces for varying parallelism factor ($p$), comparing the previous algorithm to the proposed algorithm when all coflows are released simultaneously at time 0. The problem size consists of 25 coflows distributed across 5 network cores, with input and output links set to $N=10$. For each instance, we set $deg=3, p=1$, and $\chi\geq 2$. According to our settings, the coflow number of the longest path in the DAG ($\chi$) exhibits an increasing trend as the parallelism factor $p$ decreases. Correspondingly, the approximation ratio also shows an upward trend with a decrease in the parallelism factor $p$. This empirical finding aligns with the theoretical analysis, demonstrating a linear relationship between the approximation ratio and $\chi$.

\begin{figure}[!ht]
    \centering
        \includegraphics[width=3.8in]{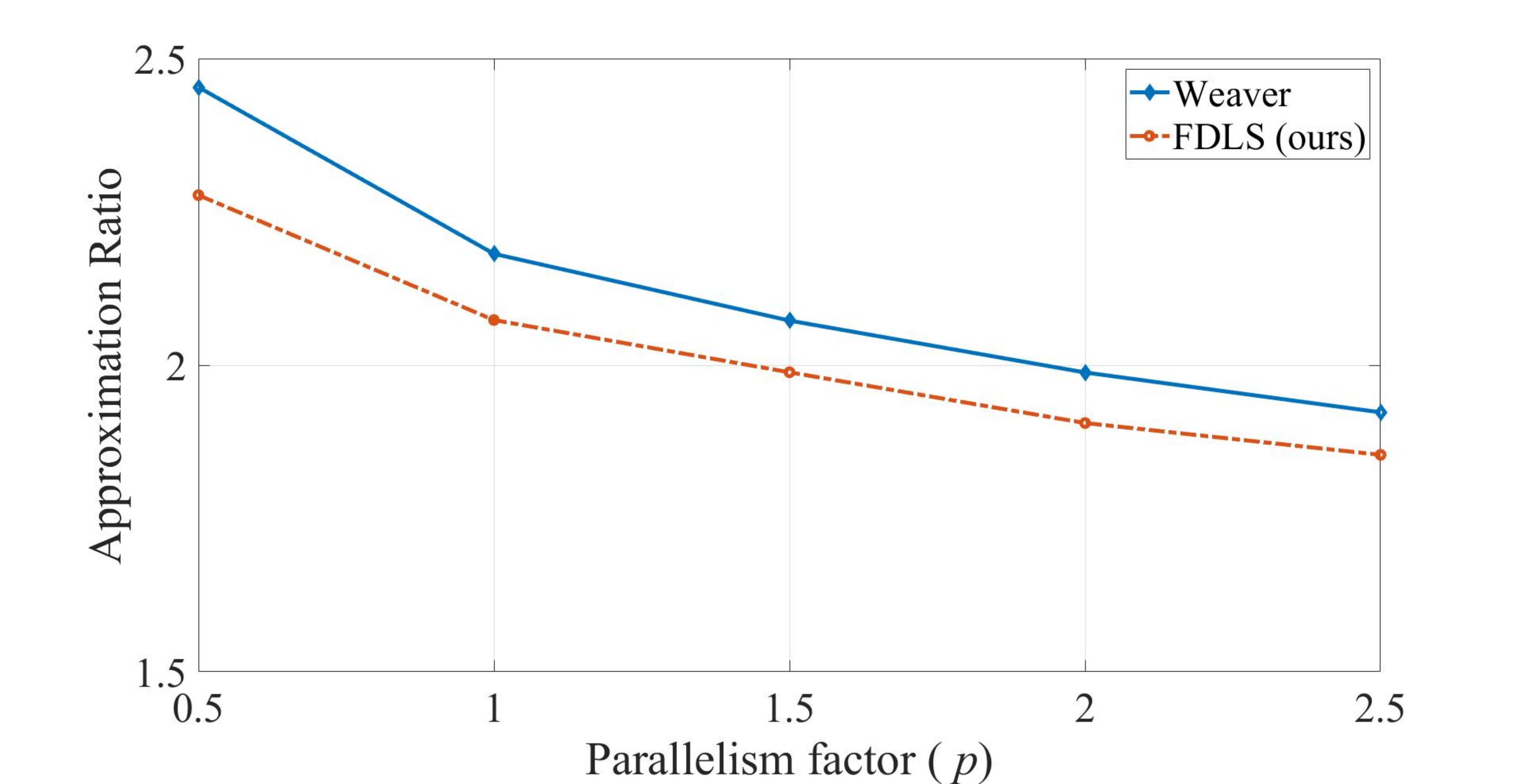}
    \caption{The approximation ratio of synthetic traces between the previous algorithm and the proposed algorithm for different parallelism factor ($p$) when
all coflows release at time 0.}
    \label{fig:ratio6_1}
\end{figure}

We present the simulation results of the real traffic trace obtained from Hive/MapReduce traces captured from Facebook's 3000-machine cluster, consisting of 150 racks. This real traffic trace has been widely used in previous research simulations~\cite{Chowdhury2015, Qiu2015, shafiee2018improved}. The trace dataset comprises a total of 526 coflows. In Figure~\ref{fig:ratio7}, we depict the approximation ratio of the real traces for different thresholds of the number of flows. That is, we apply a filter to the set of coflows based on the condition that the number of flows is equal to or greater than the threshold value. For each instance, we set $deg=3, p=1$, and $\chi\geq 2$. Notably, the proposed FDLS algorithm outperforms the Weaver algorithm by approximately $4.84\%$ to $3.11\%$ across various thresholds. Furthermore, as the number of flows increases, the approximation ratio decreases. This observation is consistent with our previous findings, suggesting a decreasing trend in the approximation ratio as the number of coflows increases.

\begin{figure}[!ht]
    \centering
        \includegraphics[width=3.8in]{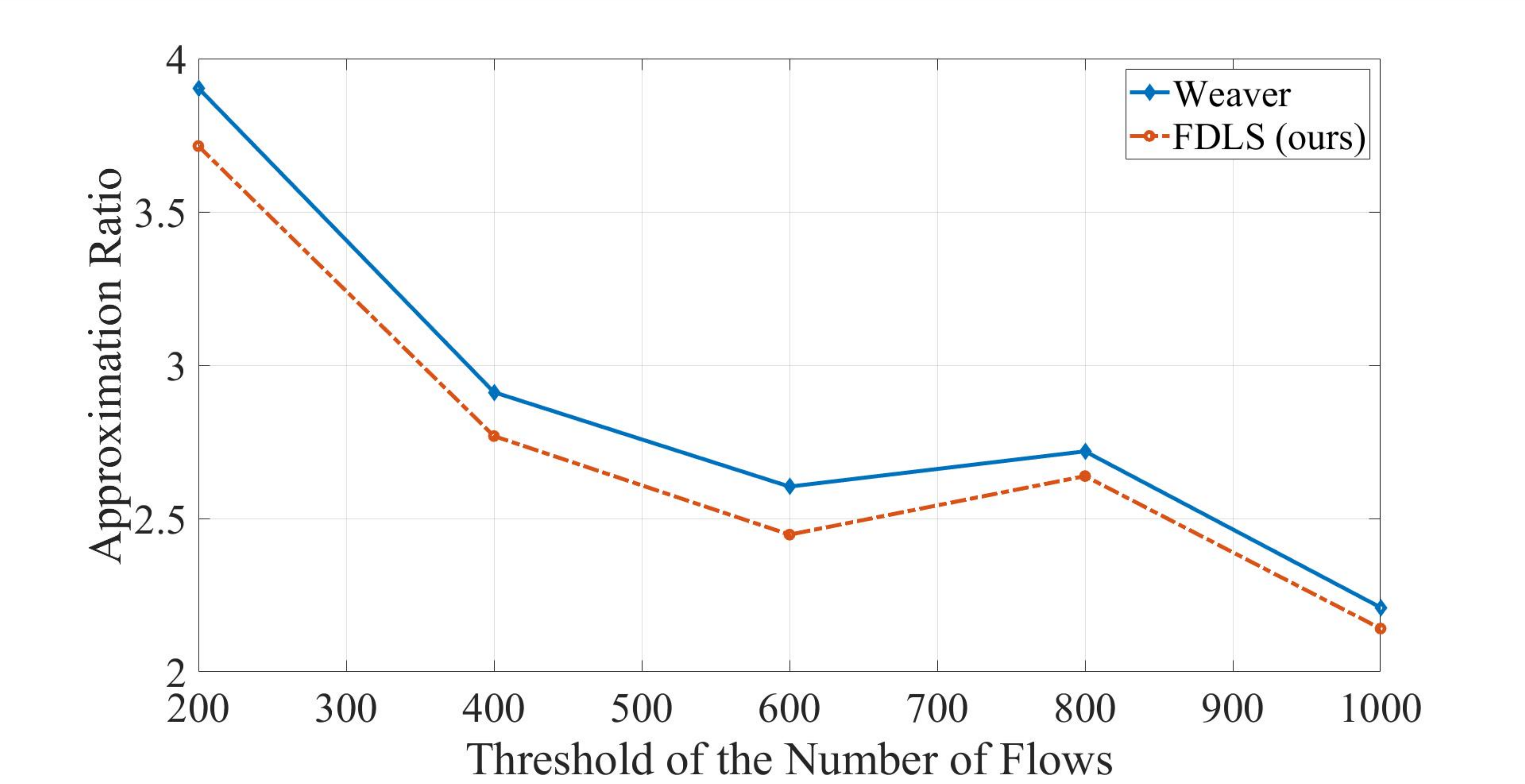}
    \caption{The approximation ratio of real trace between the previous algorithm and the proposed algorithm for different threshold of the number of flows when
all coflows release at time 0.}
    \label{fig:ratio7}
\end{figure}

\section{Concluding Remarks}\label{sec:Conclusion}
This paper focuses on the study the problem of coflow scheduling with release times and precedence constraints in identical parallel networks. The algorithm we propose effectively solves the scheduling order of coflows using the primal-dual method. The primal-dual algorithm has a space complexity of $O(Nn)$ and a time complexity of $O(n^2)$. When considering workload sizes and weights that are topology-dependent in the input instances, our proposed algorithm for the flow-level scheduling problem achieves an approximation ratio of $O(\chi)$. Furthermore, when considering workload sizes that are topology-dependent in the input instances, the algorithm achieves an approximation ratio of $O(R\chi)$. For the coflow-level scheduling problem, the proposed algorithm attains an approximation ratio of $O(m\chi)$ when considering workload sizes and weights that are topology-dependent in the input instances. Moreover, when considering workload sizes that are topology-dependent in the input instances, the algorithm achieves an approximation ratio of $O(Rm\chi)$. In the coflows of multi-stage job scheduling problem, the proposed algorithm achieves an approximation ratio of $O(\chi)$. Although our theoretical results are based on a limited set of input instances, experimental findings show that the results for general input instances outperform the theoretical results, thereby demonstrating the effectiveness and practicality of the proposed algorithm.

\appendix
\section{The Primal-dual Algorithm of Section~\ref{sec:Algorithm1}}\label{appendix:a}
The primal-dual algorithm, presented in Algorithm~\ref{Alg_dual}, draws inspiration from the works of Davis \textit{et al.} \cite{DAVIS2013121} and Ahmadi \textit{et al.} \cite{ahmadi2020scheduling}. This algorithm constructs a feasible schedule iteratively, progressing from right to left, determining the processing order of coflows. Starting from the last coflow and moving towards the first, each iteration makes crucial decisions in terms of increasing dual variables $\alpha$, $\beta$ or $\gamma$. The guidance for these decisions is provided by the dual linear programming (LP) formulation. The algorithm offers a space complexity of $O(Nn)$ and a time complexity of $O(n^2)$, where $N$ represents the number of input/output ports, and $n$ represents the number of coflows.

Consider a specific iteration in the algorithm. At the beginning of this iteration, let $\mathcal{K}$ represent the set of coflows that have not been scheduled yet, and let $k$ denote the coflow with the largest release time. In each iteration, a decision must be made regarding whether to increase dual variables $\alpha$, $\beta$ or $\gamma$. 

If the release time $r_k$ is significantly large, increasing the $\alpha$ dual variable results in substantial gains in the objective function value of the dual problem. On the other hand, if $L_{\mu_1(r)}$ (or $L_{\mu_2(r)}$ if $L_{\mu_2(r)}\geq L_{\mu_1(r)}$) is large, raising the $\beta$ variable leads to substantial improvements in the objective value. Let $\kappa$ be a constant that will be optimized later.

If $r_k>\frac{\kappa\cdot L_{\mu_1(r)}}{m}$ (or $r_k>\frac{\kappa\cdot L_{\mu_2(r)}}{m}$ if $L_{\mu_2(r)}\geq L_{\mu_1(r)}$), the $\alpha$ dual variable is increased until the dual constraint for coflow $k$ becomes tight. Consequently, coflow $k$ is scheduled to be processed as early as possible and before any previously scheduled coflows.
In the case where $r_k\leq \frac{\kappa\cdot L_{\mu_1(r)}}{m}$ (or $r_k\leq\frac{\kappa\cdot L_{\mu_2(r)}}{m}$ if $L_{\mu_2(r)}\geq L_{\mu_1(r)}$), the dual variable $\beta_{\mu_1(r),\mathcal{G}_i}$ (or $\beta_{\mu_2(r),\mathcal{G}_j}$ if $L{\mu_2(r)}\geq L_{\mu_1(r)}$) is increased until the dual constraint for coflow $k'$ becomes tight. 

In this step, we begin by identifying a candidate coflow, denoted as $k'$, with the minimum value of $\beta$. We then examine whether this coflow still has unscheduled successors. If it does, we continue traversing down the chain of successors until we reach a coflow that has no unscheduled successors, which we will refer to as $t_1$.
Once we have identified coflow $t_1$, we set its $\beta$ and $\gamma$ values such that the dual constraint for coflow $t_1$ becomes tight. Moreover, we ensure that the $\beta$ value of coflow $t_1$ matches that of the candidate coflow $k'$.

\begin{algorithm*}
\caption{Permuting Coflows}
    \begin{algorithmic}[1]
		    \STATE $\mathcal{K}$ is the set of unscheduled coflows and initially $K=\left\{1,2,\ldots,n\right\}$
				\STATE $\mathcal{G}_{i}=\left\{(i, j, k)| d_{i,j,k}>0, \forall k\in \mathcal{K}, \forall j\in \mathcal{J} \right\}$
				\STATE $\mathcal{G}_{j}=\left\{(i, j, k)| d_{i,j,k}>0, \forall k\in \mathcal{K}, \forall i\in \mathcal{I} \right\}$
				\STATE $\alpha_{i, j, k}=0$ for all $k\in \mathcal{K}, i\in \mathcal{I}, j\in \mathcal{J}$
				\STATE $\beta_{i, S}= 0$ for all $i\in \mathcal{I}, S\subseteq \mathcal{F}_{i}$
				\STATE $\beta_{j, S}= 0$ for all $j\in \mathcal{J}, S\subseteq \mathcal{F}_{j}$
				\STATE $L_{i,k}=\sum_{j\in \mathcal{J}} d_{i,j,k}$ for all $k\in \mathcal{K}, i\in \mathcal{I}$
				\STATE $L_{j,k}=\sum_{i\in \mathcal{I}} d_{i,j,k}$ for all $k\in \mathcal{K}, j\in \mathcal{J}$
				\STATE $L_{i} = \sum_{k\in \mathcal{K}}L_{i,k}$ for all $i\in \mathcal{I}$
				\STATE $L_{j} = \sum_{k\in \mathcal{K}}L_{j,k}$ for all $j\in \mathcal{J}$		
				\FOR{$r=n, n-1, \ldots, 1$}
				    \STATE $\mu_1(r)=\arg\max_{i\in \mathcal{I}} L_{i}$
				    \STATE $\mu_2(r)=\arg\max_{j\in \mathcal{J}} L_{j}$
						\STATE $k=\arg\max_{\ell\in \mathcal{K}} r_{\ell}$
						\IF{$L_{\mu_1(r)}>L_{\mu_2(r)}$}
                \IF{$r_{k}>\frac{\kappa\cdot L_{\mu_1(r)}}{m}$}
						        \STATE $\alpha_{\mu_1(r), 1, k}=w_{k}-\sum_{i \in \mathcal{I}}\sum_{S\subseteq \mathcal{F}_{i}}\beta_{i,S}L_{i,S,k}-\sum_{j \in \mathcal{J}}\sum_{S\subseteq \mathcal{F}_{j}}\beta_{j,S}L_{j,S,k}-\sum_{k'\in\mathcal{K}|(k',k)\in E}\gamma_{k',k}+\sum_{k'\in\mathcal{K}|(k,k')\in E}\gamma_{k,k'}$
										\STATE $\sigma(r)\leftarrow k$
						    \ELSIF{$r_{k}\leq\frac{\kappa\cdot L_{\mu_1(r)}}{m}$}
						        \STATE $t_1=k'=\arg\min_{k\in \mathcal{K}}\left\{\scriptstyle{\frac{w_{k}-\sum_{i \in \mathcal{I}}\sum_{S\subseteq \mathcal{F}_{i}}\beta_{i,S}L_{i,S,k}-\sum_{j \in \mathcal{J}}\sum_{S\subseteq \mathcal{F}_{j}}\beta_{j,S}L_{j,S,k}-\sum_{k'\in\mathcal{K}|(k',k)\in E}\gamma_{k',k}+\sum_{k'\in\mathcal{K}|(k,k')\in E}\gamma_{k,k'}}{L_{\mu_1(r),\mathcal{G}_{\mu_1(r)},k}}}\right\}$
										\WHILE{$\exists (t_1, t_2)\in E$ and $t_2\in \mathcal{K}$}
											\STATE $t_0=t_1$ and $t_1=t_2$
										\ENDWHILE
										\IF{$t_1\neq k'$}
											\STATE $f_1=\frac{w_{k'}-\sum_{i \in \mathcal{I}}\sum_{S\subseteq \mathcal{F}_{i}}\beta_{i,S}L_{i,S,k'}-\sum_{j \in \mathcal{J}}\sum_{S\subseteq \mathcal{F}_{j}}\beta_{j,S}L_{j,S,k'}-\sum_{k''\in\mathcal{K}|(k'',k')\in E}\gamma_{k'',k'}+\sum_{k''\in\mathcal{K}|(k',k'')\in E}\gamma_{k',k''}}{L_{\mu_1(r),\mathcal{G}_{\mu_1(r)},k'}}$
											\STATE $\gamma_{t_0,t_1}=w_{t_1}-\sum_{i \in \mathcal{I}}\sum_{S\subseteq \mathcal{F}_{i}}\beta_{i,S}L_{i,S,t_1}-\sum_{j \in \mathcal{J}}\sum_{S\subseteq \mathcal{F}_{j}}\beta_{j,S}L_{j,S,t_1}+\sum_{k''\in\mathcal{K}|(t_1,k'')\in E}\gamma_{t_1,k''}-L_{\mu_1(r),\mathcal{G}_{\mu_1(r)},t_1}\cdot f_1$
											\STATE $k'= t_1$
										\ENDIF
										\STATE $\beta_{\mu_1(r),\mathcal{G}_{\mu_1(r)}}=\frac{w_{k'}-\sum_{i \in \mathcal{I}}\sum_{S\subseteq \mathcal{F}_{i}}\beta_{i,S}L_{i,S,k'}-\sum_{j \in \mathcal{J}}\sum_{S\subseteq \mathcal{F}_{j}}\beta_{j,S}L_{j,S,k'}-\sum_{k''\in\mathcal{K}|(k'',k')\in E}\gamma_{k'',k'}+\sum_{k''\in\mathcal{K}|(k',k'')\in E}\gamma_{k',k''}}{L_{\mu_1(r),\mathcal{G}_{\mu_1(r)},k'}}$
										\STATE $\sigma(r)\leftarrow k'$
						    \ENDIF						
						\ELSE
                \IF{$r_{k}>\frac{\kappa\cdot L_{\mu_2(r)}}{m}$}
						        \STATE $\alpha_{1, \mu_2(r), k}=w_{k}-\sum_{i \in \mathcal{I}}\sum_{S\subseteq \mathcal{F}_{i}}\beta_{i,S}L_{i,S,k}-\sum_{j \in \mathcal{J}}\sum_{S\subseteq \mathcal{F}_{j}}\beta_{j,S}L_{j,S,k}-\sum_{k'\in\mathcal{K}|(k',k)\in E}\gamma_{k',k}+\sum_{k'\in\mathcal{K}|(k,k')\in E}\gamma_{k,k'}$
										\STATE $\sigma(r)\leftarrow k$
						    \ELSIF{$r_{k}\leq\frac{\kappa\cdot L_{\mu_2(r)}}{m}$}
						        \STATE $t_1=k'=\arg\min_{k\in \mathcal{K}}\left\{\scriptstyle{\frac{w_{k}-\sum_{i \in \mathcal{I}}\sum_{S\subseteq \mathcal{F}_{i}}\beta_{i,S}L_{i,S,k}-\sum_{j \in \mathcal{J}}\sum_{S\subseteq \mathcal{F}_{j}}\beta_{j,S}L_{j,S,k}-\sum_{k'\in\mathcal{K}|(k',k)\in E}\gamma_{k',k}+\sum_{k'\in\mathcal{K}|(k,k')\in E}\gamma_{k,k'}}{L_{\mu_2(r),\mathcal{G}_{\mu_2(r)},k}}}\right\}$
										\WHILE{$\exists (t_1, t_2)\in E$ and $t_2\in \mathcal{K}$}
											\STATE $t_0=t_1$ and $t_1=t_2$
										\ENDWHILE
										\IF{$t_1\neq k'$}
											\STATE $f_1=\frac{w_{k'}-\sum_{i \in \mathcal{I}}\sum_{S\subseteq \mathcal{F}_{i}}\beta_{i,S}L_{i,S,k'}-\sum_{j \in \mathcal{J}}\sum_{S\subseteq \mathcal{F}_{j}}\beta_{j,S}L_{j,S,k'}-\sum_{k''\in\mathcal{K}|(k'',k')\in E}\gamma_{k'',k'}+\sum_{k''\in\mathcal{K}|(k',k'')\in E}\gamma_{k',k''}}{L_{\mu_2(r),\mathcal{G}_{\mu_2(r)},k'}}$
											\STATE $\gamma_{t_0,t_1}=w_{t_1}-\sum_{i \in \mathcal{I}}\sum_{S\subseteq \mathcal{F}_{i}}\beta_{i,S}L_{i,S,t_1}-\sum_{j \in \mathcal{J}}\sum_{S\subseteq \mathcal{F}_{j}}\beta_{j,S}L_{j,S,t_1}+\sum_{k''\in\mathcal{K}|(t_1,k'')\in E}\gamma_{t_1,k''}-L_{\mu_2(r),\mathcal{G}_{\mu_2(r)},t_1}\cdot f_1$
											\STATE $k'= t_1$
										\ENDIF
										\STATE $\beta_{\mu_2(r),\mathcal{G}_{\mu_2(r)}}=\frac{w_{k'}-\sum_{i \in \mathcal{I}}\sum_{S\subseteq \mathcal{F}_{i}}\beta_{i,S}L_{i,S,k'}-\sum_{j \in \mathcal{J}}\sum_{S\subseteq \mathcal{F}_{j}}\beta_{j,S}L_{j,S,k'}-\sum_{k''\in\mathcal{K}|(k'',k')\in E}\gamma_{k'',k'}+\sum_{k''\in\mathcal{K}|(k',k'')\in E}\gamma_{k',k''}}{L_{\mu_2(r),\mathcal{G}_{\mu_2(r)},k'}}$
										\STATE $\sigma(r)\leftarrow k'$
						    \ENDIF						
						\ENDIF
						\STATE $\mathcal{K}\leftarrow \mathcal{K}\setminus \sigma(r)$
     				\STATE $\mathcal{G}_{i}=\left\{(i, j, k)| d_{i,j,k}>0, \forall k\in \mathcal{K}, \forall j\in \mathcal{J} \right\}$
				    \STATE $\mathcal{G}_{j}=\left\{(i, j, k)| d_{i,j,k}>0, \forall k\in \mathcal{K}, \forall i\in \mathcal{I} \right\}$
    				\STATE $L_{i} = L_{i}-L_{i,\sigma(r)}$ for all $i\in \mathcal{I}$
		    		\STATE $L_{j} = L_{j}-L_{j,\sigma(r)}$ for all $j\in \mathcal{J}$				
				\ENDFOR
   \end{algorithmic}
\label{Alg_dual}
\end{algorithm*}

\section{The Primal-dual Algorithm of Section~\ref{sec:Algorithm2}}\label{appendix:b}
Algorithm~\ref{Alg2_dual} presents the primal-dual algorithm which has a space complexity of $O(Nn)$ and a time complexity of $O(n^2)$, where $N$ represents the number of input/output ports and $n$ represents the number of coflows.

\begin{algorithm*}
\caption{Permuting Coflows}
    \begin{algorithmic}[1]
		    \STATE $\mathcal{K}$ is the set of unscheduled coflows and initially $K=\left\{1,2,\ldots,n\right\}$
				\STATE $\alpha_{i, k}=0$ for all $k\in \mathcal{K}, i\in \mathcal{I}$
				\STATE $\alpha_{j, k}=0$ for all $k\in \mathcal{K}, j\in \mathcal{J}$
				\STATE $\beta_{i, S}= 0$ for all $i\in \mathcal{I}, S\subseteq \mathcal{K}$
				\STATE $\beta_{j, S}= 0$ for all $j\in \mathcal{J}, S\subseteq \mathcal{K}$
				\STATE $L_{i,k}=\sum_{j\in \mathcal{J}} d_{i,j,k}$ for all $k\in \mathcal{K}, i\in \mathcal{I}$
				\STATE $L_{j,k}=\sum_{i\in \mathcal{I}} d_{i,j,k}$ for all $k\in \mathcal{K}, j\in \mathcal{J}$
				\STATE $L_{i} = \sum_{k\in \mathcal{K}}L_{i,k}$ for all $i\in \mathcal{I}$
				\STATE $L_{j} = \sum_{k\in \mathcal{K}}L_{j,k}$ for all $j\in \mathcal{J}$		
				\FOR{$r=n, n-1, \ldots, 1$}
				    \STATE $\mu_1(r)=\arg\max_{i\in \mathcal{I}} L_{i}$
				    \STATE $\mu_2(r)=\arg\max_{j\in \mathcal{J}} L_{j}$
						\STATE $k=\arg\max_{\ell\in \mathcal{K}} r_{\ell}$
						\IF{$L_{\mu_1(r)}>L_{\mu_2(r)}$}
                \IF{$r_{k}>\frac{\kappa\cdot L_{\mu_1(r)}}{m}$}
						        \STATE $\alpha_{\mu_1(r), k}=w_{k}-\sum_{i \in \mathcal{I}}\sum_{S\ni k}\beta_{i,S}L_{i,k}-\sum_{j \in \mathcal{J}}\sum_{S\ni k}\beta_{j,S}L_{j,k}-\sum_{k'\in\mathcal{K}|(k',k)\in E}\gamma_{k',k}+\sum_{k'\in\mathcal{K}|(k,k')\in E}\gamma_{k,k'}$
										\STATE $\sigma(r)\leftarrow k$
						    \ELSIF{$r_{k}\leq\frac{\kappa\cdot L_{\mu_1(r)}}{m}$}
						        \STATE $t_1=k'=\arg\min_{k\in \mathcal{K}}\left\{\frac{w_{k}-\sum_{i \in \mathcal{I}}\sum_{S\ni k}\beta_{i,S}L_{i,k}-\sum_{j \in \mathcal{J}}\sum_{S\ni k}\beta_{j,S}L_{j,k}-\sum_{k'\in\mathcal{K}|(k',k)\in E}\gamma_{k',k}+\sum_{k'\in\mathcal{K}|(k,k')\in E}\gamma_{k,k'}}{L_{\mu_1(r),k}}\right\}$
										\WHILE{$\exists (t_1, t_2)\in E$ and $t_2\in \mathcal{K}$}
											\STATE $t_0=t_1$ and $t_1=t_2$
										\ENDWHILE
										\IF{$t_1\neq k'$}
											\STATE $f_1=\frac{w_{k'}-\sum_{i \in \mathcal{I}}\sum_{S\ni k}\beta_{i,S}L_{i,k}-\sum_{j \in \mathcal{J}}\sum_{S\ni k}\beta_{j,S}L_{j,k}-\sum_{k''\in\mathcal{K}|(k'',k')\in E}\gamma_{k'',k'}+\sum_{k''\in\mathcal{K}|(k',k'')\in E}\gamma_{k',k''}}{L_{\mu_1(r),k'}}$
											\STATE $\gamma_{t_0,t_1}=w_{t_1}-\sum_{i \in \mathcal{I}}\sum_{S\ni t_1}\beta_{i,S}L_{i,t_1}-\sum_{j \in \mathcal{J}}\sum_{S\ni t_1}\beta_{j,S}L_{j,t_1}+\sum_{k''\in\mathcal{K}|(t_1,k'')\in E}\gamma_{t_1,k''}-L_{\mu_1(r),t_1}\cdot f_1$
											\STATE $k'= t_1$
										\ENDIF
										\STATE $\beta_{\mu_1(r),\mathcal{K}}=\frac{w_{k'}-\sum_{i \in \mathcal{I}}\sum_{S\ni k}\beta_{i,S}L_{i,k'}-\sum_{j \in \mathcal{J}}\sum_{S\ni k}\beta_{j,S}L_{j,k'}}{L_{\mu_1(r),k'}}$
										\STATE $\sigma(r)\leftarrow k'$
						    \ENDIF						
						\ELSE
                \IF{$r_{k}>\frac{\kappa\cdot L_{\mu_2(r)}}{m}$}
						        \STATE $\alpha_{\mu_2(r), k}=w_{k}-\sum_{i \in \mathcal{I}}\sum_{S\ni k}\beta_{i,S}L_{i,k}-\sum_{j \in \mathcal{J}}\sum_{S\ni k}\beta_{j,S}L_{j,k}-\sum_{k'\in\mathcal{K}|(k',k)\in E}\gamma_{k',k}+\sum_{k'\in\mathcal{K}|(k,k')\in E}\gamma_{k,k'}$
										\STATE $\sigma(r)\leftarrow k$
						    \ELSIF{$r_{k}\leq\frac{\kappa\cdot L_{\mu_2(r)}}{m}$}
						        \STATE $t_1=k'=\arg\min_{k\in \mathcal{K}}\left\{\frac{w_{k}-\sum_{i \in \mathcal{I}}\sum_{S\ni k}\beta_{i,S}L_{i,k}-\sum_{j \in \mathcal{J}}\sum_{S\ni k}\beta_{j,S}L_{j,k}-\sum_{k'\in\mathcal{K}|(k',k)\in E}\gamma_{k',k}+\sum_{k'\in\mathcal{K}|(k,k')\in E}\gamma_{k,k'}}{L_{\mu_2(r),k}}\right\}$
										\WHILE{$\exists (t_1, t_2)\in E$ and $t_2\in \mathcal{K}$}
											\STATE $t_0=t_1$ and $t_1=t_2$
										\ENDWHILE
										\IF{$t_1\neq k'$}
											\STATE $f_1=\frac{w_{k'}-\sum_{i \in \mathcal{I}}\sum_{S\ni k}\beta_{i,S}L_{i,k}-\sum_{j \in \mathcal{J}}\sum_{S\ni k}\beta_{j,S}L_{j,k}-\sum_{k''\in\mathcal{K}|(k'',k')\in E}\gamma_{k'',k'}+\sum_{k''\in\mathcal{K}|(k',k'')\in E}\gamma_{k',k''}}{L_{\mu_2(r),k'}}$
											\STATE $\gamma_{t_0,t_1}=w_{t_1}-\sum_{i \in \mathcal{I}}\sum_{S\ni t_1}\beta_{i,S}L_{i,t_1}-\sum_{j \in \mathcal{J}}\sum_{S\ni t_1}\beta_{j,S}L_{j,t_1}+\sum_{k''\in\mathcal{K}|(t_1,k'')\in E}\gamma_{t_1,k''}-L_{\mu_2(r),t_1}\cdot f_1$
											\STATE $k'= t_1$
										\ENDIF

										\STATE $\beta_{\mu_2(r),\mathcal{K}}=\frac{w_{k'}-\sum_{i \in \mathcal{I}}\sum_{S\ni k}\beta_{i,S}L_{i,k'}-\sum_{j \in \mathcal{J}}\sum_{S\ni k}\beta_{j,S}L_{j,k'}}{L_{\mu_2(r),k'}}$
										\STATE $\sigma(r)\leftarrow k'$
						    \ENDIF						
						\ENDIF
						\STATE $\mathcal{K}\leftarrow \mathcal{K}\setminus \sigma(r)$
    				\STATE $L_{i} = L_{i}-L_{i,\sigma(r)}$ for all $i\in \mathcal{I}$
		    		\STATE $L_{j} = L_{j}-L_{j,\sigma(r)}$ for all $j\in \mathcal{J}$				
				\ENDFOR
   \end{algorithmic}
\label{Alg2_dual}
\end{algorithm*}

\section{The Primal-dual Algorithm of Section~\ref{sec:Algorithm6}}\label{appendix:c}
Algorithm~\ref{Alg_dual3} determines the order of job scheduling. Since there are no precedence constraints among the jobs, there is no need to set $\gamma$ to satisfy precedence constraints. 

\begin{algorithm*}
\caption{Permuting Jobs}
    \begin{algorithmic}[1]
		    \STATE $\mathcal{T}$ is the set of unscheduled coflows and initially $\mathcal{T}=\left\{1,2,\ldots,n\right\}$
				\STATE $\mathcal{G}_{i}=\left\{(i, j, k)| d_{i,j,k}>0, \forall k\in \mathcal{K}, \forall j\in \mathcal{J} \right\}$ 
				\STATE $\mathcal{G}_{j}=\left\{(i, j, k)| d_{i,j,k}>0, \forall k\in \mathcal{K}, \forall i\in \mathcal{I} \right\}$
				\STATE $\alpha_{i, j, t}=0$ for all $t\in \mathcal{T}, i\in \mathcal{I}, j\in \mathcal{J}$
				\STATE $\beta_{i, S}= 0$ for all $i\in \mathcal{I}, S\subseteq \mathcal{F}_{i}$
				\STATE $\beta_{j, S}= 0$ for all $j\in \mathcal{J}, S\subseteq \mathcal{F}_{j}$
				\STATE $L_{i,t}=\sum_{k\in \mathcal{T}_{t}}\sum_{j\in \mathcal{J}} d_{i,j,k}$ for all $t\in \mathcal{T}, i\in \mathcal{I}$
				\STATE $L_{j,t}=\sum_{k\in \mathcal{T}_{t}}\sum_{i\in \mathcal{I}} d_{i,j,k}$ for all $t\in \mathcal{T}, j\in \mathcal{J}$
				\STATE $L_{i} = \sum_{k\in \mathcal{K}}L_{i,k}$ for all $i\in \mathcal{I}$
				\STATE $L_{j} = \sum_{k\in \mathcal{K}}L_{j,k}$ for all $j\in \mathcal{J}$		
				\FOR{$r=n, n-1, \ldots, 1$}
				    \STATE $\mu_1(r)=\arg\max_{i\in \mathcal{I}} L_{i}$
				    \STATE $\mu_2(r)=\arg\max_{j\in \mathcal{J}} L_{j}$
						\STATE $t=\arg\max_{\ell\in \mathcal{T}} r_{\ell}$
						\IF{$L_{\mu_1(r)}>L_{\mu_2(r)}$}
                \IF{$r_{t}>\frac{\kappa\cdot L_{\mu_1(r)}}{m}$}
						        \STATE $\alpha_{\mu_1(r), 1, t}=w_{t}-\sum_{i \in \mathcal{I}}\sum_{S\subseteq \mathcal{F}_{i}}\beta_{i,S}L_{i,S,t}-\sum_{j \in \mathcal{J}}\sum_{S\subseteq \mathcal{F}_{j}}\beta_{j,S}L_{j,S,t}$
										\STATE $\sigma(r)\leftarrow t$
						    \ELSIF{$r_{t}\leq\frac{\kappa\cdot L_{\mu_1(r)}}{m}$}
						        \STATE $t'=\arg\min_{t\in \mathcal{T}}\left\{\frac{w_{t}-\sum_{i \in \mathcal{I}}\sum_{S\subseteq \mathcal{F}_{i}}\beta_{i,S}L_{i,S,t}-\sum_{j \in \mathcal{J}}\sum_{S\subseteq \mathcal{F}_{j}}\beta_{j,S}L_{j,S,t}}{L_{\mu_1(r),\mathcal{G}_{\mu_1(r)},t}}\right\}$
										\STATE $\beta_{\mu_1(r),\mathcal{G}_{\mu_1(r)}}=\frac{w_{t'}-\sum_{i \in \mathcal{I}}\sum_{S\subseteq \mathcal{F}_{i}}\beta_{i,S}L_{i,S,t'}-\sum_{j \in \mathcal{J}}\sum_{S\subseteq \mathcal{F}_{j}}\beta_{j,S}L_{j,S,t'}}{L_{\mu_1(r),\mathcal{G}_{\mu_1(r)},t'}}$
										\STATE $\sigma(r)\leftarrow t'$
						    \ENDIF						
						\ELSE
                \IF{$r_{t}>\frac{\kappa\cdot L_{\mu_2(r)}}{m}$}
						        \STATE $\alpha_{1, \mu_2(r), t}=w_{t}-\sum_{i \in \mathcal{I}}\sum_{S\subseteq \mathcal{F}_{i}}\beta_{i,S}L_{i,S,t}-\sum_{j \in \mathcal{J}}\sum_{S\subseteq \mathcal{F}_{j}}\beta_{j,S}L_{j,S,t}$
										\STATE $\sigma(r)\leftarrow t$
						    \ELSIF{$r_{t}\leq\frac{\kappa\cdot L_{\mu_2(r)}}{m}$}
						        \STATE $t'=\arg\min_{t\in \mathcal{T}}\left\{\frac{w_{t}-\sum_{i \in \mathcal{I}}\sum_{S\subseteq \mathcal{F}_{i}}\beta_{i,S}L_{i,S,t}-\sum_{j \in \mathcal{J}}\sum_{S\subseteq \mathcal{F}_{j}}\beta_{j,S}L_{j,S,t}}{L_{\mu_2(r),\mathcal{G}_{\mu_2(r)},t}}\right\}$
										\STATE $\beta_{\mu_2(r),\mathcal{G}_{\mu_2(r)}}=\frac{w_{t'}-\sum_{i \in \mathcal{I}}\sum_{S\subseteq \mathcal{F}_{i}}\beta_{i,S}L_{i,S,t'}-\sum_{j \in \mathcal{J}}\sum_{S\subseteq \mathcal{F}_{j}}\beta_{j,S}L_{j,S,t'}}{L_{\mu_2(r),\mathcal{G}_{\mu_2(r)},t'}}$
										\STATE $\sigma(r)\leftarrow t'$
						    \ENDIF						
						\ENDIF
						\STATE $\mathcal{T}\leftarrow \mathcal{T}\setminus \sigma(r)$
     				\STATE $\mathcal{G}_{i}=\left\{(i, j, k)| d_{i,j,k}>0, \forall t\in \mathcal{T}, \forall k\in \mathcal{T}_{t}, \forall j\in \mathcal{J} \right\}$ 
				    \STATE $\mathcal{G}_{j}=\left\{(i, j, k)| d_{i,j,k}>0, \forall t\in \mathcal{T}, \forall k\in \mathcal{T}_{t}, \forall i\in \mathcal{I} \right\}$
    				\STATE $L_{i} = L_{i}-L_{i,\sigma(r)}$ for all $i\in \mathcal{I}$
		    		\STATE $L_{j} = L_{j}-L_{j,\sigma(r)}$ for all $j\in \mathcal{J}$				
				\ENDFOR
   \end{algorithmic}
\label{Alg_dual3}
\end{algorithm*}


\begin{thebibliography}{10}
\providecommand{\url}[1]{#1}
\csname url@rmstyle\endcsname
\providecommand{\newblock}{\relax}
\providecommand{\bibinfo}[2]{#2}
\providecommand\BIBentrySTDinterwordspacing{\spaceskip=0pt\relax}
\providecommand\BIBentryALTinterwordstretchfactor{4}
\providecommand\BIBentryALTinterwordspacing{\spaceskip=\fontdimen2\font plus
\BIBentryALTinterwordstretchfactor\fontdimen3\font minus
  \fontdimen4\font\relax}
\providecommand\BIBforeignlanguage[2]{{%
\expandafter\ifx\csname l@#1\endcsname\relax
\typeout{** WARNING: IEEEtran.bst: No hyphenation pattern has been}%
\typeout{** loaded for the language `#1'. Using the pattern for}%
\typeout{** the default language instead.}%
\else
\language=\csname l@#1\endcsname
\fi
#2}}

\bibitem{Agarwal2018}
S.~Agarwal, S.~Rajakrishnan, A.~Narayan, R.~Agarwal, D.~Shmoys, and A.~Vahdat,
  ``Sincronia: Near-optimal network design for coflows,'' in \emph{Proceedings
  of the 2018 ACM Conference on SIGCOMM}, ser. SIGCOMM '18.\hskip 1em plus
  0.5em minus 0.4em\relax New York, NY, USA: Association for Computing
  Machinery, 2018, p. 16–29.


\bibitem{ahmadi2020scheduling}
S.~Ahmadi, S.~Khuller, M.~Purohit, and S.~Yang, ``On scheduling coflows,''
  \emph{Algorithmica}, vol.~82, no.~12, pp. 3604--3629, 2020.


\bibitem{al2008scalable}
M.~Al-Fares, A.~Loukissas, and A.~Vahdat, ``A scalable, commodity data center
  network architecture,'' \emph{ACM SIGCOMM computer communication review},
  vol.~38, no.~4, pp. 63--74, 2008.


\bibitem{Bansal2010}
N.~Bansal and S.~Khot, ``Inapproximability of hypergraph vertex cover and
  applications to scheduling problems,'' in \emph{Automata, Languages and
  Programming}, S.~Abramsky, C.~Gavoille, C.~Kirchner, F.~Meyer auf~der Heide,
  and P.~G. Spirakis, Eds.\hskip 1em plus 0.5em minus 0.4em\relax Berlin,
  Heidelberg: Springer Berlin Heidelberg, 2010, pp. 250--261.


\bibitem{borthakur2007hadoop}
D.~Borthakur, ``The hadoop distributed file system: Architecture and design,''
  \emph{Hadoop Project Website}, vol.~11, no. 2007, p.~21, 2007.


\bibitem{Chowdhury2012}
M.~Chowdhury and I.~Stoica, ``Coflow: A networking abstraction for cluster
  applications,'' in \emph{Proceedings of the 11th ACM Workshop on Hot Topics
  in Networks}, ser. HotNets-XI.\hskip 1em plus 0.5em minus 0.4em\relax New
  York, NY, USA: Association for Computing Machinery, 2012, p. 31–36.


\bibitem{Chowdhury2015}
------, ``Efficient coflow scheduling without prior knowledge,'' in
  \emph{Proceedings of the 2015 ACM Conference on SIGCOMM}, ser. SIGCOMM
  '15.\hskip 1em plus 0.5em minus 0.4em\relax New York, NY, USA: Association
  for Computing Machinery, 2015, p. 393–406.


\bibitem{chowdhury2011managing}
M.~Chowdhury, M.~Zaharia, J.~Ma, M.~I. Jordan, and I.~Stoica, ``Managing data
  transfers in computer clusters with orchestra,'' \emph{ACM SIGCOMM computer
  communication review}, vol.~41, no.~4, pp. 98--109, 2011.


\bibitem{Chowdhury2014}
M.~Chowdhury, Y.~Zhong, and I.~Stoica, ``Efficient coflow scheduling with
  varys,'' in \emph{Proceedings of the 2014 ACM Conference on SIGCOMM}, ser.
  SIGCOMM '14.\hskip 1em plus 0.5em minus 0.4em\relax New York, NY, USA:
  Association for Computing Machinery, 2014, p. 443–454.


\bibitem{Daoud08}
M.~I. Daoud and N.~Kharma, ``A high performance algorithm for static task
  scheduling in heterogeneous distributed computing systems,'' \emph{Journal of
  Parallel and Distributed Computing}, vol.~68, no.~4, pp. 399 -- 409, 2008.


\bibitem{DAVIS2013121}
J.~M. Davis, R.~Gandhi, and V.~H. Kothari, ``Combinatorial algorithms for
  minimizing the weighted sum of completion times on a single machine,''
  \emph{Operations Research Letters}, vol.~41, no.~2, pp. 121--125, 2013.


\bibitem{Dean2008}
J.~Dean and S.~Ghemawat, ``Mapreduce: Simplified data processing on large
  clusters,'' \emph{Communications of the ACM}, vol.~51, no.~1, p. 107–113,
  jan 2008.


\bibitem{dogar2014decentralized}
F.~R. Dogar, T.~Karagiannis, H.~Ballani, and A.~Rowstron, ``Decentralized
  task-aware scheduling for data center networks,'' \emph{ACM SIGCOMM Computer
  Communication Review}, vol.~44, no.~4, pp. 431--442, 2014.


\bibitem{greenberg2009vl2}
A.~Greenberg, J.~R. Hamilton, N.~Jain, S.~Kandula, C.~Kim, P.~Lahiri, D.~A.
  Maltz, P.~Patel, and S.~Sengupta, ``Vl2: A scalable and flexible data center
  network,'' in \emph{Proceedings of the ACM SIGCOMM 2009 conference on Data
  communication}, 2009, pp. 51--62.


\bibitem{huang2016}
X.~S. Huang, X.~S. Sun, and T.~E. Ng, ``Sunflow: Efficient optical circuit
  scheduling for coflows,'' in \emph{Proceedings of the 12th International on
  Conference on emerging Networking EXperiments and Technologies}, 2016, pp.
  297--311.


\bibitem{Huang2020}
X.~S. Huang, Y.~Xia, and T.~S.~E. Ng, ``Weaver: Efficient coflow scheduling in
  heterogeneous parallel networks,'' in \emph{2020 IEEE International Parallel
  and Distributed Processing Symposium (IPDPS)}, 2020, pp. 1071--1081.


\bibitem{isard2007dryad}
M.~Isard, M.~Budiu, Y.~Yu, A.~Birrell, and D.~Fetterly, ``Dryad: distributed
  data-parallel programs from sequential building blocks,'' in
  \emph{Proceedings of the 2nd ACM SIGOPS/EuroSys European Conference on
  Computer Systems 2007}, 2007, pp. 59--72.


\bibitem{khuller2016brief}
S.~Khuller and M.~Purohit, ``Brief announcement: Improved approximation
  algorithms for scheduling co-flows,'' in \emph{Proceedings of the 28th ACM
  Symposium on Parallelism in Algorithms and Architectures}, 2016, pp.
  239--240.


\bibitem{Qiu2015}
Z.~Qiu, C.~Stein, and Y.~Zhong, ``Minimizing the total weighted completion time
  of coflows in datacenter networks,'' in \emph{Proceedings of the 27th ACM
  Symposium on Parallelism in Algorithms and Architectures}, ser. SPAA
  '15.\hskip 1em plus 0.5em minus 0.4em\relax New York, NY, USA: Association
  for Computing Machinery, 2015, p. 294–303.


\bibitem{Sachdeva2013}
S.~Sachdeva and R.~Saket, ``Optimal inapproximability for scheduling problems
  via structural hardness for hypergraph vertex cover,'' in \emph{2013 IEEE
  Conference on Computational Complexity}, 2013, pp. 219--229.


\bibitem{shafiee2018improved}
M.~Shafiee and J.~Ghaderi, ``An improved bound for minimizing the total
  weighted completion time of coflows in datacenters,'' \emph{IEEE/ACM
  Transactions on Networking}, vol.~26, no.~4, pp. 1674--1687, 2018.


\bibitem{shafiee2021scheduling}
------, ``Scheduling coflows with dependency graph,'' \emph{IEEE/ACM
  Transactions on Networking}, 2021.


\bibitem{Shvachko2010}
K.~Shvachko, H.~Kuang, S.~Radia, and R.~Chansler, ``The hadoop distributed file
  system,'' in \emph{2010 IEEE 26th Symposium on Mass Storage Systems and
  Technologies (MSST)}, 2010, pp. 1--10.


\bibitem{Singh2015}
A.~Singh, J.~Ong, A.~Agarwal, G.~Anderson, A.~Armistead, R.~Bannon, S.~Boving,
  G.~Desai, B.~Felderman, P.~Germano, A.~Kanagala, J.~Provost, J.~Simmons,
  E.~Tanda, J.~Wanderer, U.~H\"{o}lzle, S.~Stuart, and A.~Vahdat, ``Jupiter
  rising: A decade of clos topologies and centralized control in google's
  datacenter network,'' in \emph{Proceedings of the 2015ACM Conference on
  SIGCOMM}, ser. SIGCOMM '15.\hskip 1em plus 0.5em minus 0.4em\relax New York,
  NY, USA: Association for Computing Machinery, 2015, p. 183–197.


\bibitem{Tian18}
B.~Tian, C.~Tian, H.~Dai, and B.~Wang, ``Scheduling coflows of multi-stage jobs
  to minimize the total weighted job completion time,'' in \emph{IEEE INFOCOM
  2018 - IEEE Conference on Computer Communications}, 2018, pp. 864--872.


\bibitem{Topcuoglu02}
H.~Topcuoglu, S.~Hariri, and M.-Y. Wu, ``Performance-effective and
  low-complexity task scheduling for heterogeneous computing,'' \emph{IEEE
  Transactions on Parallel and Distributed Systems}, vol.~13, no.~3, pp.
  260--274, Mar 2002.


\bibitem{zaharia2010spark}
M.~Zaharia, M.~Chowdhury, M.~J. Franklin, S.~Shenker, and I.~Stoica, ``Spark:
  Cluster computing with working sets,'' in \emph{2nd USENIX Workshop on Hot
  Topics in Cloud Computing (HotCloud 10)}, 2010.


\bibitem{Zhang2016}
H.~Zhang, L.~Chen, B.~Yi, K.~Chen, M.~Chowdhury, and Y.~Geng, ``Coda: Toward
  automatically identifying and scheduling coflows in the dark,'' in
  \emph{Proceedings of the 2016 ACM Conference on SIGCOMM}, ser. SIGCOMM
  '16.\hskip 1em plus 0.5em minus 0.4em\relax New York, NY, USA: Association
  for Computing Machinery, 2016, p. 160–173.


\bibitem{zhao2015rapier}
Y.~Zhao, K.~Chen, W.~Bai, M.~Yu, C.~Tian, Y.~Geng, Y.~Zhang, D.~Li, and
  S.~Wang, ``Rapier: Integrating routing and scheduling for coflow-aware data
  center networks,'' in \emph{2015 IEEE Conference on Computer Communications
  (INFOCOM)}.\hskip 1em plus 0.5em minus 0.4em\relax IEEE, 2015, pp. 424--432.


\end{thebibliography}
\end{document}